\newcommand{\old}[1]{{}}
\title{Covering Tours and Cycle Covers with Turn Costs: Hardness and Approximation}
\titlerunning{Covering Tours and Cycle Covers with Turn Costs} %optional, in case that the title is too long; the running title should fit into the top page column
\author[1]{S\'andor P.~Fekete}
\author[1]{Dominik Krupke}
\affil[1]{Department of Computer Science, TU Braunschweig, 38106 Braunschweig, Germany\\
  \texttt{\{s.fekete,d.krupke\}@tu-bs.de}}
\authorrunning{S.\,P.~Fekete and D.~Krupke} %mandatory. First: Use abbreviated first/middle names. Second (only in severe cases): Use first author plus 'et. al.'
\subjclass{F.2.2 Nonnumerical Algorithms and Problems}
\keywords{Geometric optimization, optimal tours, optimal cycle covers, turn cost, NP-hardness, approximation}% mandatory: Please provide 1-5 keywords
\begin{document}

\maketitle

\begin{abstract}
	We investigate a variety of problems of finding tours and cycle covers with
minimum turn cost.  Questions of this type have been studied in the past, with
complexity and approximation results, and open problems dating back to
work by Arkin et al.~in 2001. A wide spectrum of practical applications have
renewed the interest in these questions, and spawned variants: for {\em full
coverage}, every point has to be covered, for {\em subset coverage}, specific
points have to be covered, and for {\em penalty coverage}, points may be left
uncovered by incurring an individual penalty. 

\indent
We make a number of contributions. We first show that finding a minimum-turn
(full) cycle cover is NP-hard even in 2-dimensional grid graphs, solving the
long-standing open \emph{Problem~{53}} in \emph{The Open Problems Project}
edited by Demaine, Mitchell and O'Rourke. We also prove NP-hardness of finding
a {\em subset} cycle cover of minimum turn cost in {\em thin} grid graphs, for
which Arkin et al.~gave a polynomial-time algorithm for full coverage; this
shows that their boundary techniques cannot be applied to compute exact
solutions for subset and penalty variants.

%On the positive side, we establish the first constant-factor approximation algorithms for all considered subset and penalty
%problem variants, making use of LP/IP techniques.
%For {\em full} coverage in more general grid graphs (e.g., hexagonal grids), our approximation factors
%are better than the combinatorial ones of Arkin et al. 
%Our approach can also be extended to other geometric variants, such as scenarios with obstacles 
%%and linear combinations of turn and distance costs.
%\todo[inline]{This abstract is focusing very much on grid graphs. This is not critical but of course it does not match the paper perfectly. Only the last sentence even mentions the geometric variant.}

On the positive side, we establish the first constant-factor approximation
algorithms for all considered subset and penalty problem variants for very
general classes of instances, making use of LP/IP techniques. For these generalized
graph problems with many possible edge directions (and thus, turn angles, such as in
hexagonal grids or higher-dimensional variants), our approximation factors
also improve the combinatorial ones of Arkin et al.  Our approach can also be
extended to other geometric variants, such as scenarios with obstacles and
linear combinations of turn and distance costs.

\end{abstract}

\section{Introduction}
%Tours
Finding roundtrips of minimum cost is one of the classic problems of theoretical computer science.
In its most basic form, the objective of the \emph{Traveling Salesman Problem (TSP)} is to minimize 
the total length of a single tour that covers all of a given set of locations. If the tour
is not required to be connected, the result may be a {\em cycle cover}: a set of closed subtours
that together cover the whole set. This distinction makes a tremendous difference
for the computational complexity: while the TSP is NP-hard, computing a cycle cover of minimum
total length can be achieved in polynomial time, based on matching techniques.

Evaluating the cost for a tour or a cycle cover by only considering its length may not always 
be the right measure. Fig.~\ref{fig:drone} shows an example application, in which a 
drone has to sweep a given region to fight mosquitoes that may transmit dangerous diseases.
As can be seen in the right-hand part of the figure, by far the dominant part of the overall
travel cost occurs when the drone has to change its direction. (See our related video and
abstract~\cite{drone_vid} for more details, and the resulting tour optimization.)
There is an abundance of other related applied work, e.g., mowing lawns
or moving huge wind turbines~\cite{wind}.

\vspace*{-3mm}
\begin{figure}[h]
                \includegraphics[width=.30\textwidth]{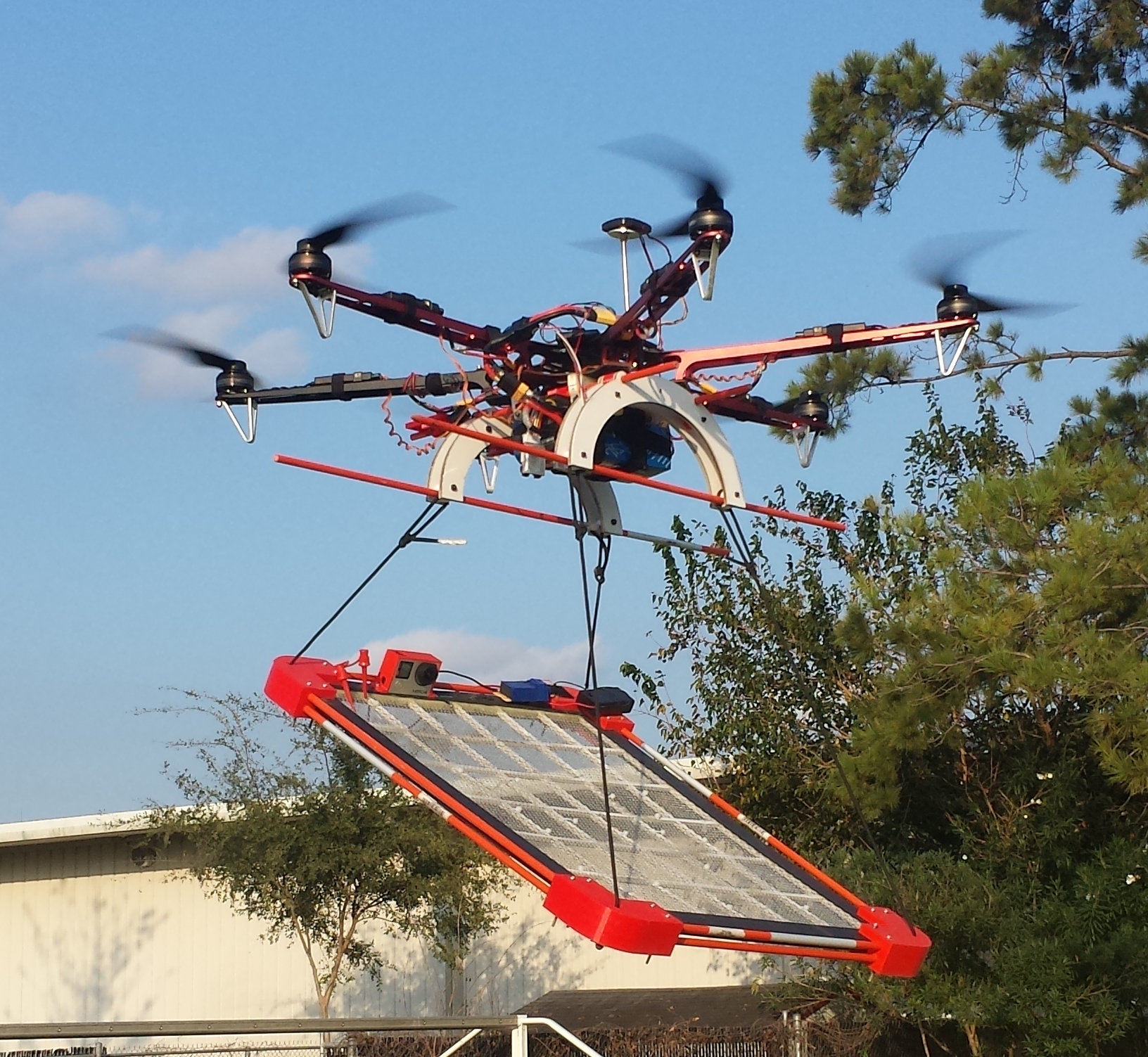}
                \includegraphics[width=.31\textwidth]{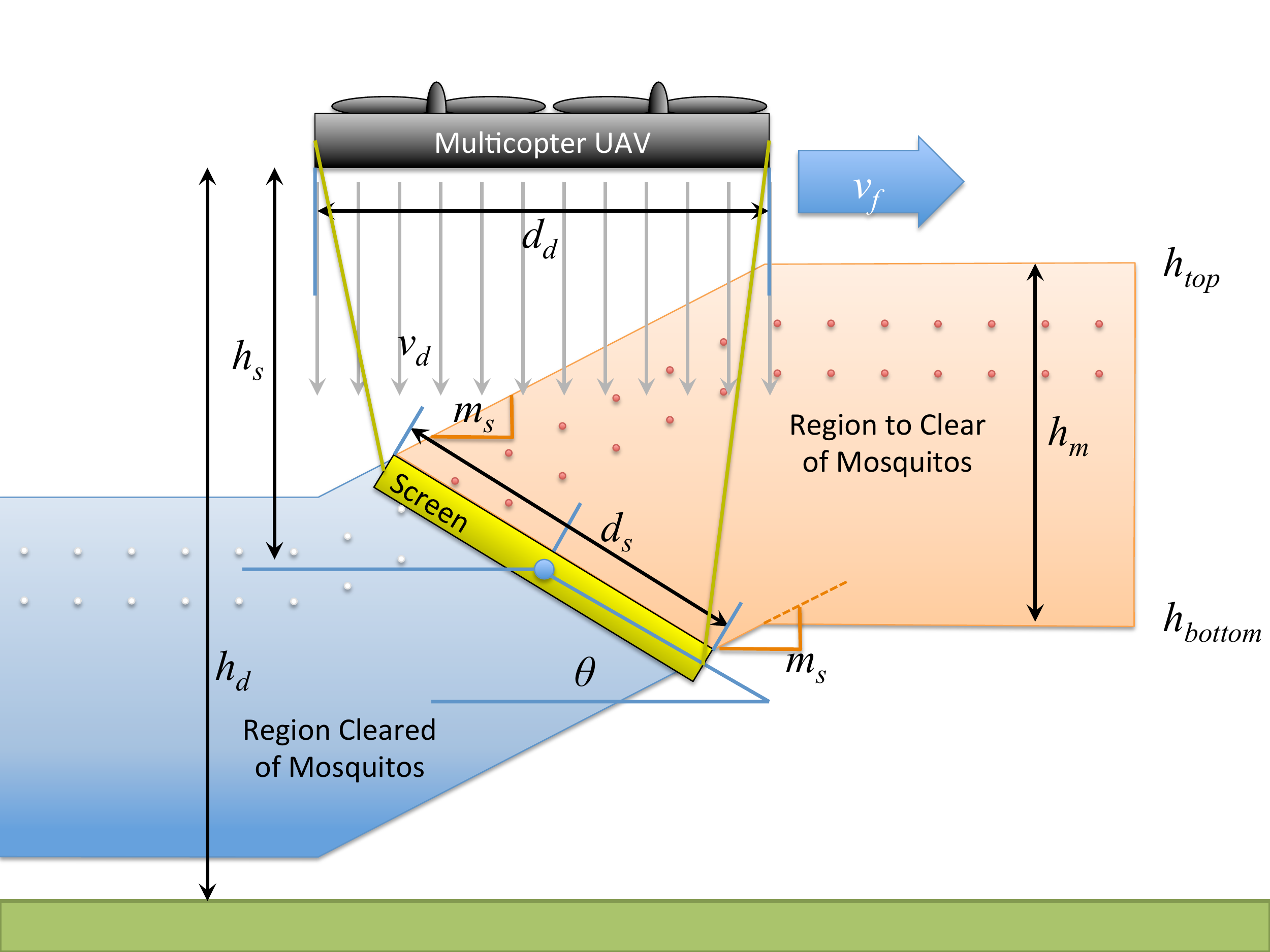}
                \includegraphics[width=.35\textwidth]{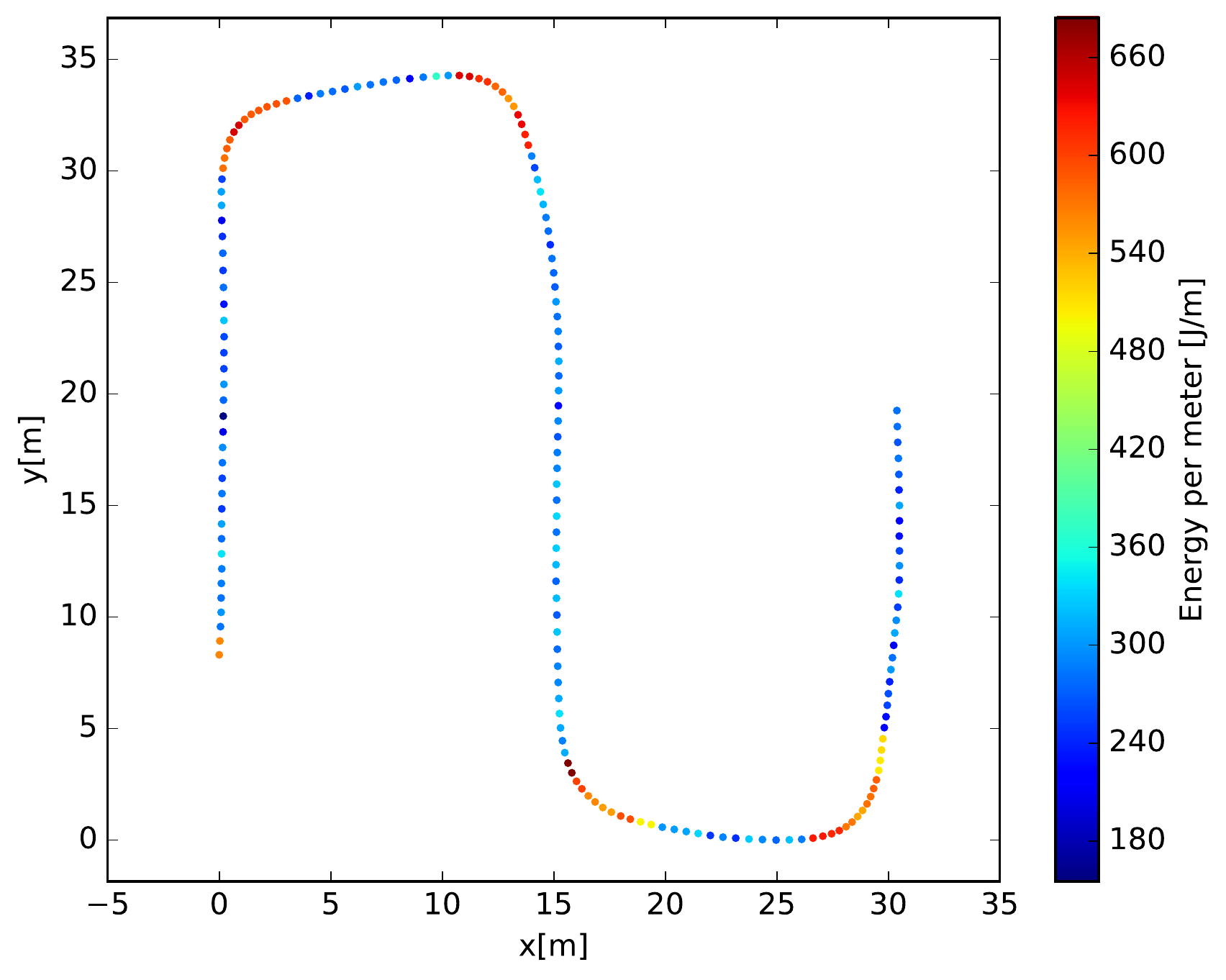}
	\caption[Mosquito hunting drone]{{\bf (Left)} A drone equipped with an
	electrical grid for killing mosquitoes. {\bf (Middle)} Physical aspects of the flying
	drone. {\bf (Right)} Making turns is expensive. See our related video at
	\url{https://www.youtube.com/watch?v=SFyOMDgdNao} for details, and
\cite{drone_vid} for an accompanying abstract.} \label{fig:drone}
\vspace*{-6mm}
\end{figure}

For many purposes, two other variants are also practically important: 
for \emph{subset coverage}, only a prespecified subset of locations needs to be visited, while for \emph{penalty coverage},
locations may be skipped at the expense of an individual penalty.
%(See Section~\ref{sec:prelim} for formal definitions.)
%\todo[inline]{The subset and penalty variant are not that formally defined, so one might save this insertion.}
From the theoretical side, Arkin et al.~\cite{arkin2005optimal} showed that finding minimum-turn tours in grid graphs is NP-hard,
even if a minimum-turn cycle cover is given. 
%The NP-hardness follows trivially for full coverage tours, subset coverage tours, and penalty coverage tours with weighted distance and turn costs.
The question whether a minimum-turn cycle cover can be computed in polynomial time (just like a minimum-length cycle cover) has been open for at least 17 years,
dating back to the conference paper~\cite{arkin2001optimal}; it has been listed for 15 years as \emph{Problem~{53}}
in \emph{The Open Problems Project} edited by Demaine, Mitchell, and O'Rourke~\cite{openproblemproject}.
%In the meantime, Maurer~\cite{maurer2009diploma} has shown that the \emph{minimum-turn cycle partition}, where every vertex has to be covered exactly once, can be calculated in polynomial time and stated the assumption that \emph{minimum-turn cycle cover} is NP-hard.
In Section~\ref{sec:gg:hardness} we resolve this problem by showing that computing a minimum-turn cycle cover in planar grid graphs is 
indeed NP-hard.

This makes it important to develop approximation algorithms.
In Section~\ref{sec:approx}, we present a general technique based on Integer Programming (IP) formulations and their Linear Programming (LP) relaxations. 
By the use of polyhedral results and combinatorial modifications, we can establish constant approximation factors of this technique for all problem variants.

\subsection{Related Work}

{\bf Milling with Turn Costs.}
Arkin et al.~\cite{arkin2001optimal,arkin2005optimal} introduce the problem of milling (i.e., ``carving out'') with turn costs.
They show hardness of finding an optimal tour, even in {\em thin} 2-dimensional grid graphs (which do not contain an induced $2\times 2$ subgraph)
with a given optimal cycle cover.
%If a cycle cover of cost $c$ in such a 2-dimensional grid graph is given, a tour of cost $1.5c$ can be derived by connecting the cycles.
They give a $2.5$-approximation algorithm for obtaining a cycle cover, resulting in a $3.75$-approximation algorithm for tours.
The complexity of finding an optimal cycle cover in a 2-dimensional grid graph was established as \emph{Problem~{53}} in \emph{The Open Problems Project}~\cite{openproblemproject}.

Maurer~\cite{maurer2009diploma} proves %in his (German) diploma thesis 
that a cycle {\em partition} with a minimum number of turns in grid graphs can be computed in polynomial time
and %He also 
performs practical experiments for optimal cycle covers. %his IP-formulation is a stepping stone for one of our formulations.
%this context, also %  with instances of size less than $50*50$ for grid graphs and less than $50$ points for general points in $\mathbb{R}^2$.
%His IP-formulation equals the core of one of our formulations.
%.For the problem in 2-dimensional grid graphs, 
De Assis and de Souza~\cite{de2011experimental} computed a provably optimal solution for
an instance with 76 vertices. %provide an optimal IP-formulation and a heuristic.
%They performed experiments with instances of size less than \num{76} vertices for the exact IP-formulation.
%For the heuristic, instances with less than \num{500} vertices were tested.
%Note that the exact IP-formulation proposed in this paper can solve many instances with more than \num{700} vertices in reasonable time;
%and our forthcoming paper~\cite{dom3}; see~\cite{eurocg} for a preliminary abstract on the geometric part.
For the abstract version on graphs, Fellows et al.~\cite{fellows2009abstract} %study the parameterized complexity and 
show that the problem is fixed-parameter tractable by the number of turns, tree-width, and maximum degree.
Benbernou~\cite{benbernou2011diss} considered milling with turn costs on the surface of polyhedrons in the 3-dimensional grid.
She gives a corresponding 8/3-approximation algorithm for tours.

Note that the theoretical work presented in this paper has significant practical implications. As described in our 
forthcoming conference paper~\cite{ALENEX19}, the IP/LP-characterization presented in Section~\ref{sec:approx} can be modified and 
combined with additional algorithm engineering techniques to allow solving instances with more than 1000 pixels to 
provable optimality (thereby expanding the range of de Assis and de Souza~\cite{de2011experimental} by a factor of 15),
and computing solutions for instances with up to 300,000 pixels within a few percentage points (thereby showing that
the practical performance of our approximation techniques is dramatically better than the established worst-case bounds).

For mowing problems, i.e., covering a given area with a moving object that may leave the region,  %it is allowed to leave the considered polygon.
%When this problem is discretized to a set of points to be visited, it is called the \emph{minimum bends traveling salesman problem}.
Stein and Wagner~\cite{wagner2001approximation} give a 2-approximation algorithm on the number of turns for the case of orthogonal movement.
% Milling/Mowing with Distance Costs
If only the traveled distance is considered, Arkin et al.~\cite{arkin2000approximation} provide approximation algorithms for milling and mowing.
%They give a 2.5-approximation for minimum length milling in orthogonal (not necessarily integral) polygons with a unit square cutter.
%For simple orthogonal polygons, an $11/5$-approximation is given and, in case we can reduce the problem to finding a covering tour in a grid graph, a $6/5$-approximation algorithm is given (which improves a previous result of Ntafos~\cite{ntafos1992watchman} of a $4/3$-approximation algorithm).
%For the mowing variant, a $3+\epsilon$-approximation is provided that internally uses a PTAS for the Euclidean TSP; this can also be replaced by other approximation
%algorithms, e.g.,
%the common algorithm of Christofides~\cite{christofides1976worst}.
%This problem has also been considered by Iwano et al.~\cite{iwano1994traveling} but with the name `the traveling cameraman problem'.
%TODO possibly TSP in GG
%\todoi{AFAIK there is a LTAS for TSP in GG}

%\todoi{Some reviewer complained that we did not cite results of angular tsp. Seemed to be an author of \cite{oswin2017minimization}. "The angular traveling salesman searches for a tour on points that minimizes the overall turn angles. Hardness and approximation is given by Aggarwal et al.~\cite{TODO}, this problem seems to be very hard to solve optimally with integer programming as shown by Oswin et al.~\cite{oswin2017minimization}." More can still be found below but we should waste much more space.}

\old{
\subsection{Milling/covering in practice}

%robot
An extensive survey on (mostly heuristic) covering paths/tours methods for robots has been made by Galceran and Carreras \cite{galceran2013survey}.
A possible technique is to partition the area by a grid as, e.g., done by Zelinsky et al.~\cite{zelinsky1993planning}, or Gabriely and Rimon~\cite{gabriely2002spiral}.
The grid graphs considered in this thesis are assumed to origin from such a discretization.
A different approach is to partition the area into simpler geometric areas (e.g. trapezoidal map) and then use simple patterns to cover the simple areas.
For example, Huang~\cite{huang2001optimal} does such a partition and then calculates the optimal orientation for a zig-zag coverage of these simple areas.
This correlates to the number of turns needed for the zig-zag coverage.

%Milling
Covering tour optimization is also important for cutter.
Yao et al.~\cite{yao2002hybrid} analyze different cutter path generation methods and propose a greedy and a branch and bound algorithm for generating so-called hybrid cutter paths that use different path types for different parts.
They also mention the increase of the costs due to sharp turns.

%Harvesting
Covering path planning is also important on larger scales, such as for autonomous agricultural machines on pixels, whose high precision GPS allow them to follow previously optimized tours.
Corresponding work has been done by, e.g., Ta{\"\i}x et al.~\cite{taix2003path} and Ali et al.~\cite{ali2009infield}.
Ali et al. ~\cite{ali2009infield} consider tour planning for tractors for crop harvesting operations including turn penalties.

%UAV
Covering is also done by UAVs for area observation.
Some UAVs have strong restrictions on the turn radius that usually cannot be ignored for covering tour planning.
Such optimizations have been considered by, e.g., Ahmadzadeh et al.~\cite{ahmadzadeh2008optimization} and Agarwal et al.~\cite{agarwal2005aco}.
}

\old{
\subsection{Quadratic TSP}

Chapter~\ref{chapter:geo} generalizes the problem from grid graphs to points in the $\mathbb{R}^2$ plane with polygonal obstacles and a limited discrete set of orientations a point can be passed through.
The cost consists of the weighted sum of turning angles and the length.
While in the grid graph only orthogonal movement ($90\degree$ and $180\degree$ turns) were allowed, now all turns ($0\degree$ - $180\degree$ clockwise or counterclockwise) are allowed and weighted linearly.
This is related to the \emph{quadratic traveling salesman problem (QTSP)} where the cost of an edge depends on the previous edge (the costs are not induced by geometry but can be arbitrarily selected).

J{\"a}ger and Molitor~\cite{jager2008algorithms} give two exact solution methods and seven heuristics for the asymmetric version.
They were only able to give exact solutions for instances with 24 points and heuristic solutions for 44 points.
They also noted that random instances seem to be simpler than real instances.

Fischer et al.~\cite{fischer2014exact} give similar but more advanced results on the symmetric version. Their branch and cut integer programming approach (with problem specific cutting planes) is able to calculate optimal solutions for real-world instances (from bioinformatics) of the size of up to 100 vertices in less than 10 minutes.

An extensive study of the QTSP polytope has been made by Fischer in her dissertation~\cite{fischer2013diss} (see also \cite{fischer2013symmetric}).

More computational results with integer programming containing also more variants (e.g., the maximization instead of minimization) follow in Aichholzer et al.~\cite{aichholzerminimization}.
The tested instances have sizes of less than 100 points.

Rostami et al.~\cite{rostami2013quadratic} also consider optimal solver and lower bounding procedures for the Q(A)TSP but the experiments only consider very small instances (less than 25 points).

In this thesis, we are able to obtain an $O(\omega)$-approximation if the ways to pass a point are discretized to $\omega$ orientations (e.g. horizontally and vertically).
They also show that optimal angular metric cycle covers and tours are NP-hard to calculate.
The NP-hardness proof in Section~\ref{sec:gg:hard:subset} for subset cycle cover in thin 2-dimensional grid graphs is adapted from this proof.
}

{\bf Angle and curvature-constrained tours and paths.}
If the instances are in the $\mathbb{R}^2$ plane and only the turning angles are measured, the problem is called the \emph{Angular Metric Traveling Salesman Problem}.
Aggarwal et al.~\cite{aggarwal2000angular} prove hardness and provide an $O(\log n)$ approximation algorithm for cycle covers and tours that works even 
for distance costs and higher dimensions. As shown by Aichholzer et al.~\cite{oswin2017minimization}, 
this problem seems to be very hard to solve optimally with integer programming.
Fekete and Woeginger~\cite{fekete1997angle} consider the problem of connecting a point set with a tour for which the angles between the two successive edges are constrained.
%If only $\pm 90\degree$ turns are allowed, the problem can be solved in polynomial time.
%In many scenarios, there are no sharp turns, but the maximum curvature of a path in a geometric environment has to be limited.
Finding a curvature-constrained shortest {\em path} with obstacles has been shown to be NP-hard by Lazard et al.~\cite{lazard1998complexity}.
Without obstacles, the problem is known as the \emph{Dubins path}~\cite{dubins1957curves}
that can be computed efficiently.
For different types of obstacles, Boissonnat and Lazard~\cite{boissonnat1996polynomial}, Agarwal et al.~\cite{agarwal2002curvature} 
and Agarwal and Wang~\cite{agarwal2001approximation} provide polynomial-time algorithms or $1+\epsilon$ approximation algorithms, respectively. 
Takei et al.~\cite{takei2010practical} consider the solution of the problem from a practical perspective.
The \emph{Dubins Traveling Salesman Problem} is considered by Le Ny et al.\cite{le2012dubins}

%\old{
%\todoi{The following part could possibly be removed to save space as we use it for connecting the penalty cycles but it is not essential for the primary part of this paper.}

{\bf Related combinatorial problems.}
Goemans and Williamson~\cite{goemans1995general} provide an approximation technique for constrained forest problems and similar problems that deal with penalties.
In particular, they provide a 2-approximation algorithm for \emph{Prize-Collecting Steiner Trees} in general symmetric graphs and the \emph{Penalty Traveling Salesman Problem} in graphs that satisfy the triangle inequality. 
An introduction into approximation algorithms for prize-collecting/penalty problems, k-MST/TSP, and minimum latency problems is given by Ausiello et al.~\cite{ausiello2007prize}.
%}

\subsection{Preliminaries}
\label{sec:prelim}
% Definition of discr. angular metric as approximation of the angular-metric
The angular metric traveling salesman problem resp.\ cycle cover problem ask for a cycle resp.\ set of cycles such that a given set $P$ of $n$ points in $\mathbb{R}^d$ is covered and the sum of turn angles in minimized.
A cycle is a closed chain of segments and covers the points of the segments' joints (at least two of $P$).
The turn angle of a joint is the angle difference to $180^\circ$.
In the presence of polygonal obstacles, cycles are not allowed to cross them.
%\todo[inline]{Objective Functions and problem variants. This is not nicely integrated.}
We consider three coverage variants: Full, subset, and penalty.
In full coverage, every point has to be covered.
In subset coverage, only points in a subset $S\subseteq P$ have to be covered (which is only interesting for grid graphs).
In penalty coverage, no point has to be covered but every uncovered point $p\in P$ induces a penalty $c(p)\in \mathbb{Q}^+_0$ on the objective value.
Optionally, the objective function can be a linear combination of distance and turn costs.

In the following, we introduce the \emph{discretized angular metric}, by considering for every 
point $p\in P$ a set of $\omega$ possible orientations (and thus, $2\omega$ possible directions)
for a trajectory through $p$. We model this by considering for each $p\in P$ a set
$O_p$ of $\omega$ infinitely short segments, which we call \emph{atomic strips}; a 
point is covered if one of its segments is part of the cycle, see Fig.~\ref{fig:angtsp_to_atomic_strips}.
%With $\omega=n-1$, the angular metric can be exactly replicated by orienting the atomic strips to all neighbors.
The corresponding selection of atomic strips is called \emph{Atomic Strip Cover}, i.e., a selection of one $o\in O_p$ for every $p\in P$.
\begin{figure}[h]
\vspace*{-3mm}
	\centering
	\includegraphics[width=0.9\textwidth]{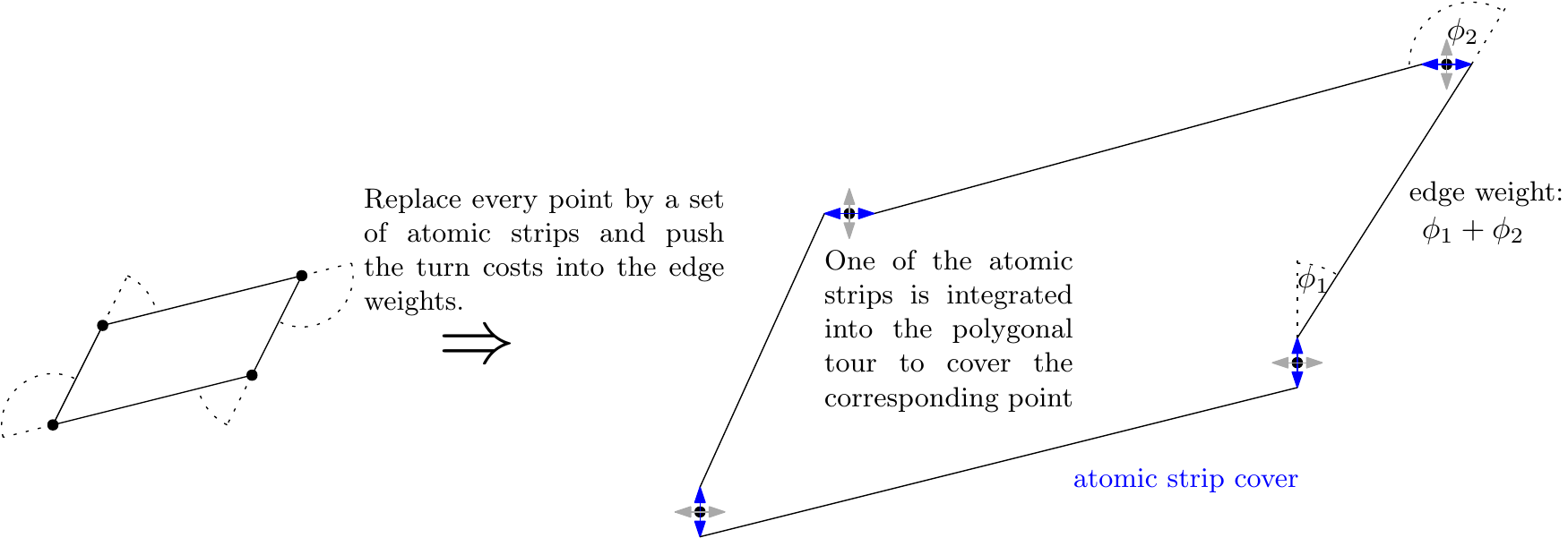}
	\caption{Transforming an angular metric TSP instance and solution to an instance based on atomic strips, which can be considered infinitely small segments.}
	\label{fig:angtsp_to_atomic_strips}
\vspace*{-3mm}
\end{figure}

% define graph and abstract problem: Semin-Quadratic CC
The atomic strips induce a weighted graph $G_O(V_O, E_O)$ with the endpoints of the atomic strips as vertices and the connections between the endpoints as edges.
The weight of an edge in $G_O$ equals the connection costs, in particular the turn costs on the two endpoints.
Thus, the cycle cover problem turns into finding an Atomic Strip Cover with the minimum-weight perfect matching on its induced subgraph.
As the cost of connections in it depends on {\em two} edges in the original graph,
we call this generalized problem (in which the edge weights do not have to be induced by geometry) the 
\emph{semi-quadratic cycle cover problem}.

It is important to note that the weights do not satisfy the triangle inequality; however, a direct connection is not more expensive than a connection that includes another atomic strip, giving rise to the following {\em pseudo-triangle inequalities}.
\begin{equation}
		\forall v_1, v_2\in V_O, w_1w_2\in O_p, p\in P:
		\begin{matrix}
			\text{cost}(v_1v_2)\leq \text{cost}(v_1w_1)+\text{cost}(w_2v_2)  \\
			\text{cost}(v_1v_2)\leq \text{cost}(v_1w_2)+\text{cost}(w_1v_2)
		\end{matrix}
		\label{eq:triang}
	\end{equation}

% Grid Graphs
Our model allows the original objective function to be a linear combination of turn and distance costs,
as it does not influence Eq.~(\ref{eq:triang}).
Instances with polygonal obstacles for 2-dimensional geometric instances are also possible 
(however, for 3D, the corresponding edge weights can no longer be computed efficiently).
A notable special case %that is especially useful optimizing area coverage such as milling or lawn mowing 
are {\em grid graphs} that arise as vertex-induced subgraph of the infinite integer orthogonal grid.
%points lie on the integral grid; a cycle is only allowed to consist of segments between two points with distance one.
In this case, a point can only be covered straight, by a simple $90^\circ$ turn, or by a $180^\circ$ u-turn.
We show grid graphs as polyominoes in which vertices are shown as {\em pixels}.
We also speak of the number of {\em simple turns} (u-turns counting as two) instead of turn angles.
More general grid graphs can be based on other grids, such as 3-dimensional integral or hexagonal grids.

% modelling the grid graph
Minimum turn cycle covers in grid graphs can be modelled as a semi-quadratic cycle cover problem with $\omega=2$ and edge weights satisfying Eq.~(\ref{eq:triang}).
One of the atomic strips represents being in a horizontal orientation (with an east and a west heading vertex) and the other being in a vertical orientation (with a north and a south heading vertex).
The cost of an edge is as follows: Every vertex is connected to a position and a direction.
The cost is the cheapest transition from the position and direction of the first vertex to the position and opposite heading of the second vertex (this is symmetric and can be computed efficiently).
We can easily transform a cycle cover in a grid graph into one based on atomic strips and vice versa, see Fig.~\ref{fig:convert} (left).
\begin{figure}[t]
\vspace*{-3mm}
	\centering
	\includegraphics[width=0.5\columnwidth]{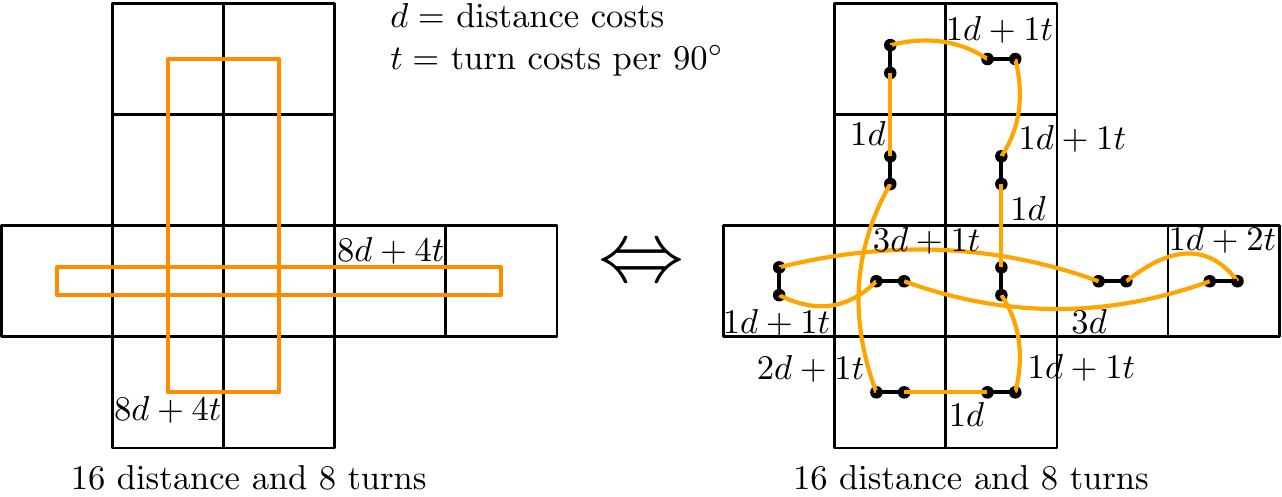}
	%\hspace*{0.04\columnwidth}
	%\includegraphics[width=0.15\columnwidth]{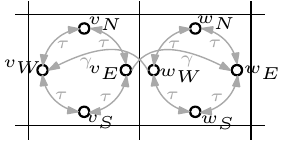}
	\hspace*{0.04\columnwidth}
	\includegraphics[width=0.25\columnwidth]{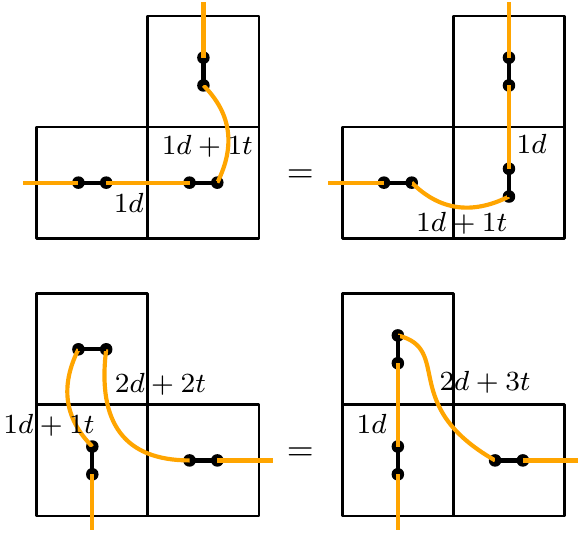}
	\caption{{\bf (Left)} From an optimal cycle cover (dotted) we can extract an
	Atomic Strip Cover (thick black), such that the matching (orange) induces an optimal
solution. %(or entering from $d(v)$). This
%cost is actually symmetric. 
	%(Middle) A simple transformation allows to efficiently calculate the
	%cheapest transition between two configurations via Dijkstra. $\tau$
	%are the turn costs for a simple turn, $\gamma$ are the distance costs. 
	{\bf (Right)} For turns it does not matter if we choose the horizontal or vertical atomic strip.}
	\label{fig:strip_orientation_turns}
	\label{fig:convert}
\vspace*{-3mm}
\end{figure}
For each pixel we choose one of its transitions.
If it is straight, we select the equally oriented strip; otherwise it does not matter, see Fig.~\ref{fig:strip_orientation_turns}~(right).
With more atomic strips we can also model more general grid graphs such as hexagonal or 3-dimensional grid graphs with three atomic strips.
%\todo[inline]{Maybe add that $\omega$ is the number of ways a point can be covered straight (undirected)?}

%\todo[inline]{Obstacles}
%The DACC and DATSP in the plane can also work with polygonal obstacles.
%The edge costs simply have to take the necessary detours into consideration.
%These can be computed similar to the euclidean shortest path.
%First only additional turns at the vertices of obstacles may be necessary (use the same local optimization argument as for euclidean shortest path just with turn angles).
%Turn costs can be integrated into Dijkstra's algorithm by simply replacing every vertex by a set of heading dependent vertices (for the linear amount of incoming and outgoing headings) and add edges with the turn costs to switch between the headings.
%Because the euclidean shortest path becomes NP-hard in $3D$, this is only possibly in $2D$.

\subsection{Our Contribution}
We provide the following results.

\begin{itemize}
	\item We resolve \emph{Problem~{53}} in \emph{The Open Problems Project}~\cite{openproblemproject} by proving that finding a cycle cover of minimum turn cost is NP-hard, even in the restricted case of grid graphs.
		%As a consequence, all relevant problem variants are NP-hard.
		We also prove that finding a subset cycle cover of minimum turn cost is NP-hard, even in the restricted case of {\em thin} grid graphs, in which no induced $2\times 2$ subgraph exists.
		This differs from the case of full coverage in thin grid graphs, which is known to be polynomially solvable~\cite{arkin2005optimal}.
	\item We provide a general IP/LP-based technique for obtaining $2*\omega$ approximations for the semi-quadratic (penalty) cycle cover problem if Eq.~(\ref{eq:triang}) is satisfied, where $\omega$ is the maximum number of atomic strips per vertex.
	\item We show how to connect the cycle covers to minimum turn tours to obtain a $6$ approximation for full coverage in regular grid graphs, $4\omega$ approximations for full tours in general grid graphs, $4\omega+2$ approximations for (subset) tours, and $4\omega+4$ for penalty tours.
\end{itemize}

To the best of our knowledge, this is the first approximation algorithm for the subset and penalty variant with turn costs. 
For general grid graphs our techniques yields better guarantees than than the techniques of Arkin et al.\ who give a factor of $6*\omega$ for cycle covers and $6*\omega+2$ for tours.
In practice, our approach also yields better solutions for regular grid graphs, see~\cite{ALENEX19}.

\section{Complexity}
\label{sec:gg:hardness}
%Intro
\emph{Problem~{53}} in \emph{The Open Problems Project}
%edited by Demaine, Mitchell, and O'Rourke~\cite{openproblemproject}
asks for the complexity of finding a minimum-turn (full) cycle cover in a 2-dimensional grid graph. This is by no means 
obvious: large parts of a solution can usually be deduced by local information and matching techniques.
In fact, it was shown by Arkin et al.~\cite{arkin2001optimal,arkin2005optimal} that the full coverage variant in {\em thin} grid graphs (which do not contain a $2\times 2$ square,
so every pixel is a boundary pixel) is solvable in polynomial time.
%Also our approximation algorithm, see next section, returned surprisingly often optimal solutions.
%A further problem is that there is an interference by neighbored pixels that have to be covered too.
In this section, we prove that finding a {\em full} cycle cover in 2-dimensional grid graphs with minimum turn cost is NP-hard,
resolving \emph{Problem~{53}}. We also show that {\em subset} coverage is NP-hard even for {\em thin} grid graphs, so the boundary
techniques by Arkin et al.~\cite{arkin2001optimal,arkin2005optimal} do not provide a polynomial-time algorithm.

\subsection{Full Coverage in Grid Graphs}
\begin{theorem}
	\label{th:gg:hard:nphard}
	It is NP-hard to find a cycle cover with a minimum number of $90^\circ$ turns ($180^\circ$ turns counting as two) in a grid graph.
\end{theorem}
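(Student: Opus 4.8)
The plan is to give a polynomial-time reduction from a planar, NP-hard constraint-satisfaction problem — I would use a planar variant of 3-SAT (such as \textsc{Planar 3-SAT}) — since the theorem concerns exactly a regime where every cover looks locally trivial (matching-computable) but a global parameter has to be optimized, and planar 3-SAT is the standard vehicle for encoding "globally consistent binary choices'' into a two-dimensional layout. Given a formula $\phi$ I would construct, in polynomial time, a grid graph $G_\phi$ (presented as a polyomino) together with an integer threshold $K$, such that $G_\phi$ has a cycle cover with at most $K$ simple turns if and only if $\phi$ is satisfiable.

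The design principle I would exploit is a structural lower bound on turns: every cycle in a cover makes at least four simple turns, with equality exactly for axis-parallel rectangles, and any departure from rectangular shape (each reflex corner, each u-turn spike) costs additional turns; hence the cost of a cover behaves like $4\cdot(\#\text{cycles}) + (\text{non-rectangularity surcharge})$. This exposes two competing ways for a cover to be cheap — many pairwise-disjoint rectangle-like cycles, versus fewer cycles paying for reflex detours that merge them — and it is this tension that I would use to carry truth values. Concretely: (i) \emph{wires} are width-$2$ corridors, whose cheapest cover is the boundary cycle; inserting periodic "pinch'' features makes a wire admit exactly two optimal covers, which I read as the bit it transmits; (ii) a \emph{variable gadget} is a closed loop of wire whose two optimal configurations are the two truth values and which imposes that value on all emanating wires; (iii) a \emph{clause gadget} is a small cluster of $2\times 2$ blocks attached to the three literal wires whose local minimum-turn cover is attainable only when at least one incident wire arrives in the satisfying state, and otherwise forces at least two extra turns; (iv) \emph{crossing} and \emph{copy/split} gadgets allow the wires to be routed planarly and fanned out to every clause occurrence. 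Setting $K$ to the sum of the per-gadget optima makes the "satisfiable $\Rightarrow$ cheap cover'' direction routine: realize every gadget in the configuration dictated by a satisfying assignment.

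The converse — soundness — is where the real work lies and what I expect to be the main obstacle. I must show that no cycle cover of $G_\phi$ with at most $K$ turns exists unless $\phi$ is satisfiable, and the danger is that a cover might cheat by threading cycles across gadget boundaries (e.g.\ one huge cycle winding through many gadgets), trading number of cycles against reflex corners in a way the local analysis did not foresee. To preclude this I would set up a charging argument that assigns each $90^\circ$ turn of an arbitrary cover to one gadget (or one wire segment) and proves, region by region, a lower bound equal to the intended local optimum, with a strictly positive surplus whenever the region's induced state is inconsistent with a neighbouring wire or leaves a clause unsatisfied; summing these bounds gives cost $> K$ for any non-satisfying cover. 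Keeping this accounting both tight and gadget-local in the presence of cycles that span several gadgets, while simultaneously ensuring that every gadget is realizable as an honest polyomino — respecting the bipartite parity of grid graphs, so that all corridor lengths and block placements have consistent parities and actually connect up — is the delicate part; the remainder is bookkeeping. Finally, I would note that deleting the clause gadgets (leaving the wires dangling) already yields the clean statement of the theorem, namely that deciding whether a grid graph has a cycle cover within a prescribed $90^\circ$-turn budget is NP-hard.
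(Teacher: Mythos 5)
Your high-level plan is plausible and you correctly identify where the difficulty sits, but the proof is not actually there: the two load-bearing components are left as declarations of intent. First, the information-carrying gadget is unspecified. A width-$2$ corridor has a \emph{unique} cheap cover (its boundary rectangle, $4$ turns), so by itself it transmits no bit; everything hinges on the ``pinch features'' that are supposed to create exactly two optimal covers, and you give no construction and no argument that such a feature exists or that its two states can be propagated and read off by a clause gadget. This is precisely the crux: in this problem, information can only live in pixels whose traversal orientation is not forced by local matching arguments, and most pixels in a grid graph are forced. The paper solves this with a specific device (a high-cost zig-zag path that forces a single cycle to traverse the variable gadget and return via one of two lanes, plus ``crucial'' pixels coverable with only $2$ extra turns exactly by orientation-preserving S/U-shaped detours); you would need an equivalent device, and its existence is not obvious from what you wrote. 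Second, the soundness direction is exactly the part you defer (``the delicate part; the remainder is bookkeeping''). The paper spends most of its effort here: a strip-parity lemma (every maximal collinear strip carries an even number of turns), a guard lemma forcing turn locations, and a gadget-by-gadget census of forced turn positions, which together pin down every turn of a cover that meets the threshold $K$ and rule out cycles that thread across gadgets in unforeseen ways. Your proposed ``charging argument with positive surplus'' is the right shape, but asserting its existence is not a proof, and for this problem it is genuinely hard to make tight.

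On the choice of source problem: the paper reduces from \emph{One-in-three 3SAT} (not planar), laying variables out as horizontal rows and clauses as vertical columns that cross them, so no crossing gadget is needed; the one-in-three semantics is what makes the ``exactly two of three crucial pixels must be covered by an extra $4$-turn cycle'' accounting come out to an exact threshold. Your choice of Planar 3-SAT avoids crossings in principle, yet you still invoke crossing and copy/split gadgets; each of these is itself a nontrivial open design task in the turn-cost cycle-cover setting (a crossing must let two independent orientation-encoded bits pass through each other in the plane). So the route you sketch is genuinely different from the paper's, but as written it replaces one hard construction with three or four, none of which is carried out.
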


%1in3 SAT
The proof is based on a reduction from~\emph{One-in-three 3SAT} (1-in-3SAT), which was shown to be NP-hard by Schaefer~\cite{schaefer1978complexity}:
for a Boolean formula in conjunctive normal form with only three literals per clause, 
decide whether there is a truth assignment that makes exactly one literal per clause {\tt true} (and exactly two literals {\tt false}).
For example, $(x_1\vee x_2 \vee x_3) \wedge (\overline{x_1}\vee \overline{x_2} \vee \overline{x_3})$ is not (1-in-3) satisfiable,
whereas $(x_1\vee x_2 \vee x_3) \wedge (\overline{x_1}\vee \overline{x_2} \vee \overline{x_4})$ is satisfiable.

%----------------------------------------------------------------------------------------
% Example
%----------------------------------------------------------------------------------------
%comment the following variable to switch between the two possibilities.
%\def\useclauseinsteadofexample{}
\ifx\useclauseinsteadofexample\undefined
\begin{figure}
	\centering
	\includegraphics[width=1.0\columnwidth]{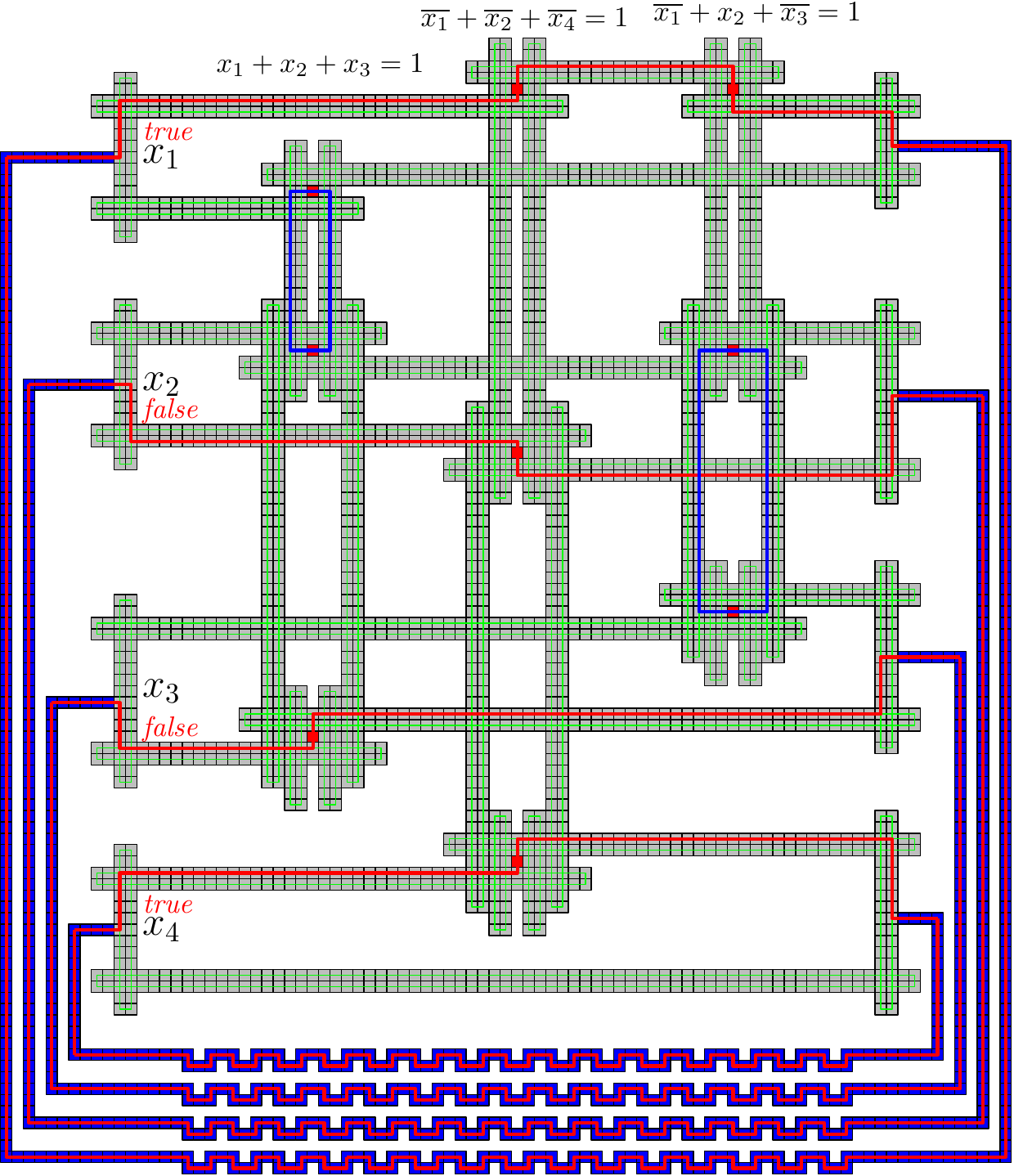}
	\caption{Representing the \emph{1-in-3SAT}-formula $x_1+x_2+x_3=1\wedge \overline{x_1}+\overline{x_2}+\overline{x_4}=1 \wedge \overline{x_1}+x_2+\overline{x_3}=1$.}
	\label{fig:nphard:simple:multiclause}
	\vspace*{-6mm}
\end{figure}
\else
\ifx\auxiliaryhardnessconstructionsymbols\undefined
\newcommand{\variablepoundsymbol}{%
	\resizebox{11pt}{!}{
		\begin{tikzpicture}[font=\tiny, every node/.style={draw=gray, text=black, inner sep=0.5pt, outer sep=0pt, minimum size=8pt} ]
			\draw (0,0.1) -- (1,0.1) -- (1,0.2) -- (0, 0.2) -- cycle;
			\draw (0.9,0) -- (0.9,0.8) -- (0.8, 0.8) -- (0.8, 0) -- cycle;
			\draw (0,0.7) -- (1,0.7) -- (1, 0.6) -- (0, 0.6) -- cycle;
			\draw (0.1,0) -- (0.1,0.8) -- (0.2, 0.8) -- (0.2, 0) -- cycle;
		\end{tikzpicture}
	}
}

\newcommand{\variableconstrsymbol}{%
	\resizebox{11pt}{!}{
		\begin{tikzpicture}[font=\tiny, every node/.style={draw=gray, text=black, inner sep=0.5pt, outer sep=0pt, minimum size=8pt} ]
			\draw (0,0.1) -- (1,0.1) -- (1,0.2) -- (0, 0.2) -- cycle;
			\draw (0.9,0) -- (0.9,0.8) -- (0.8, 0.8) -- (0.8, 0) -- cycle;
			\draw (0,0.7) -- (1,0.7) -- (1, 0.6) -- (0, 0.6) -- cycle;
			\draw (0.1,0) -- (0.1,0.8) -- (0.2, 0.8) -- (0.2, 0) -- cycle;
			\draw (0.1, 0.4) -- (-0.1, 0.4) -- (-0.1, -0.1) -- (1.1, -0.1) -- (1.1, 0.4) -- (0.9, 0.4);
		\end{tikzpicture}
	}
}

\newcommand{\variablefalsesymbol}{%
	\resizebox{11pt}{!}{
		\begin{tikzpicture}[font=\tiny, every node/.style={draw=gray, text=black, inner sep=0.5pt, outer sep=0pt, minimum size=8pt} ]		
			\draw (0,0.1) -- (1,0.1) -- (1,0.2) -- (0, 0.2) -- cycle;
			\draw (0.9,0) -- (0.9,0.8) -- (0.8, 0.8) -- (0.8, 0) -- cycle;
			\draw (0,0.7) -- (1,0.7) -- (1, 0.6) -- (0, 0.6) -- cycle;
			\draw (0.1,0) -- (0.1,0.8) -- (0.2, 0.8) -- (0.2, 0) -- cycle;

			\draw (0.1, 0.4) -- (-0.1, 0.4) -- (-0.1, -0.1) -- (1.1, -0.1) -- (1.1, 0.4) -- (0.9, 0.4);
			\draw[color=red, thick] (0.1, 0.4) -- (-0.1, 0.4) -- (-0.1, -0.1) -- (1.1, -0.1) -- (1.1, 0.4) -- (0.9, 0.4) -- (0.9, 0.1) -- (0.1, 0.1) -- (0.1, 0.4);
		\end{tikzpicture}
	}
}

\newcommand{\variabletruesymbol}{%
	\resizebox{11pt}{!}{
		\begin{tikzpicture}[font=\tiny, every node/.style={draw=gray, text=black, inner sep=0.5pt, outer sep=0pt, minimum size=8pt} ]		
			\draw (0,0.1) -- (1,0.1) -- (1,0.2) -- (0, 0.2) -- cycle;
			\draw (0.9,0) -- (0.9,0.8) -- (0.8, 0.8) -- (0.8, 0) -- cycle;
			\draw (0,0.7) -- (1,0.7) -- (1, 0.6) -- (0, 0.6) -- cycle;
			\draw (0.1,0) -- (0.1,0.8) -- (0.2, 0.8) -- (0.2, 0) -- cycle;

			\draw (0.1, 0.4) -- (-0.1, 0.4) -- (-0.1, -0.1) -- (1.1, -0.1) -- (1.1, 0.4) -- (0.9, 0.4);
			\draw[color=red, thick] (0.1, 0.4) -- (-0.1, 0.4) -- (-0.1, -0.1) -- (1.1, -0.1) -- (1.1, 0.4) -- (0.9, 0.4) -- (0.9, 0.7) -- (0.1, 0.7) -- (0.1, 0.4);
		\end{tikzpicture}
	}
}
\def\auxiliaryhardnessconstructionsymbols{}
\fi

\begin{figure}
	\centering
	\includegraphics[width=0.32\columnwidth]{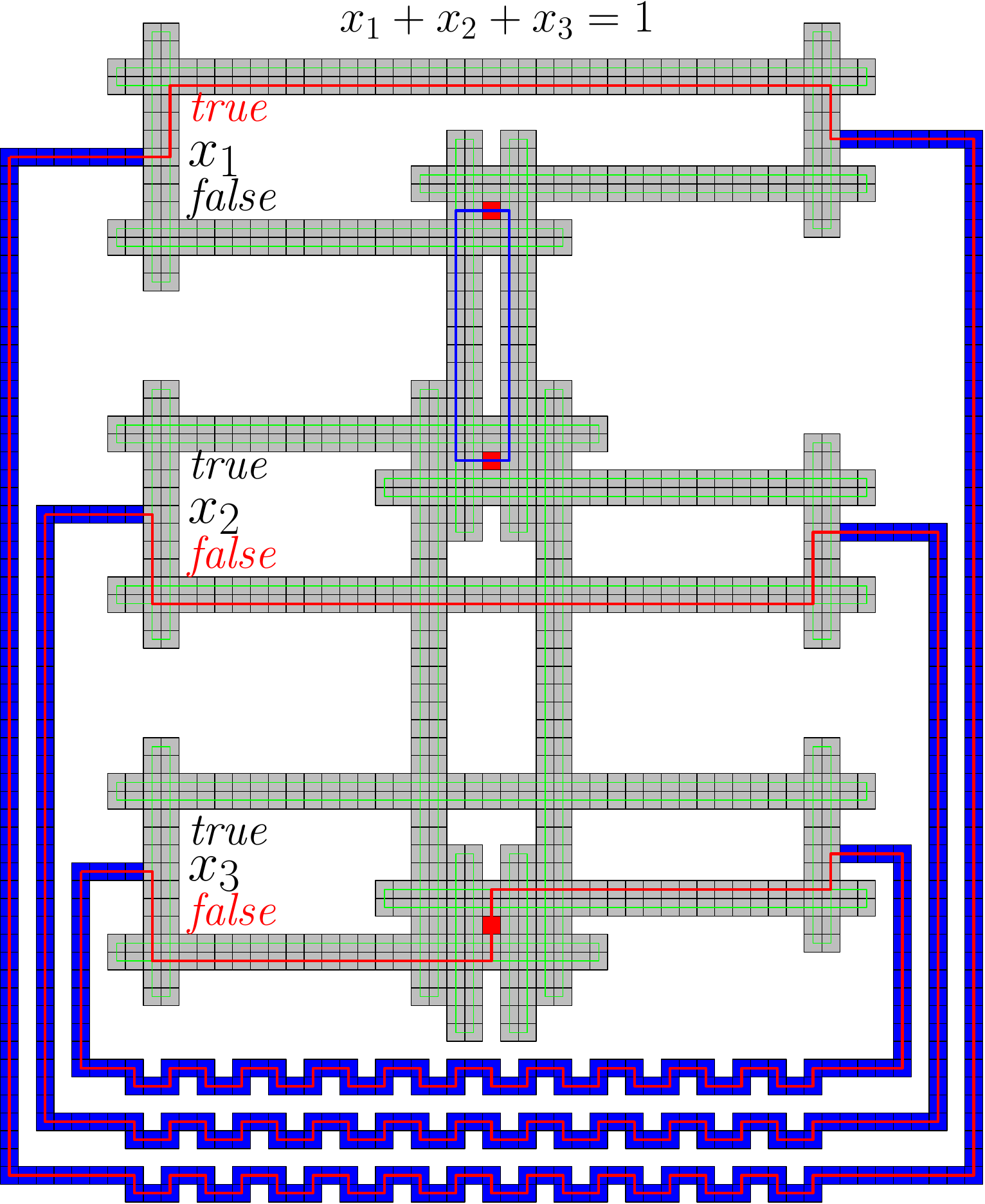}
	\includegraphics[width=0.32\columnwidth]{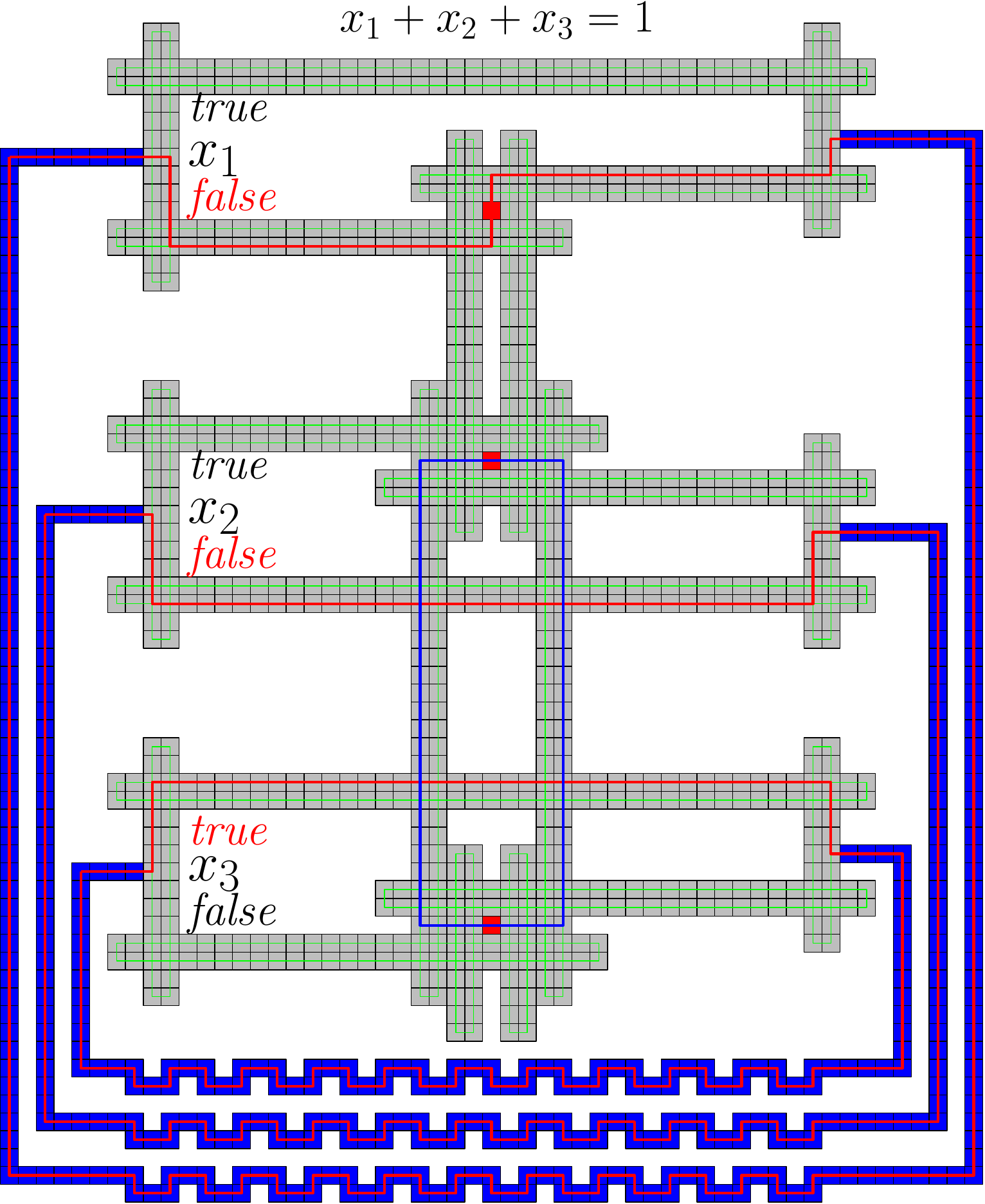}
	\includegraphics[width=0.32\columnwidth]{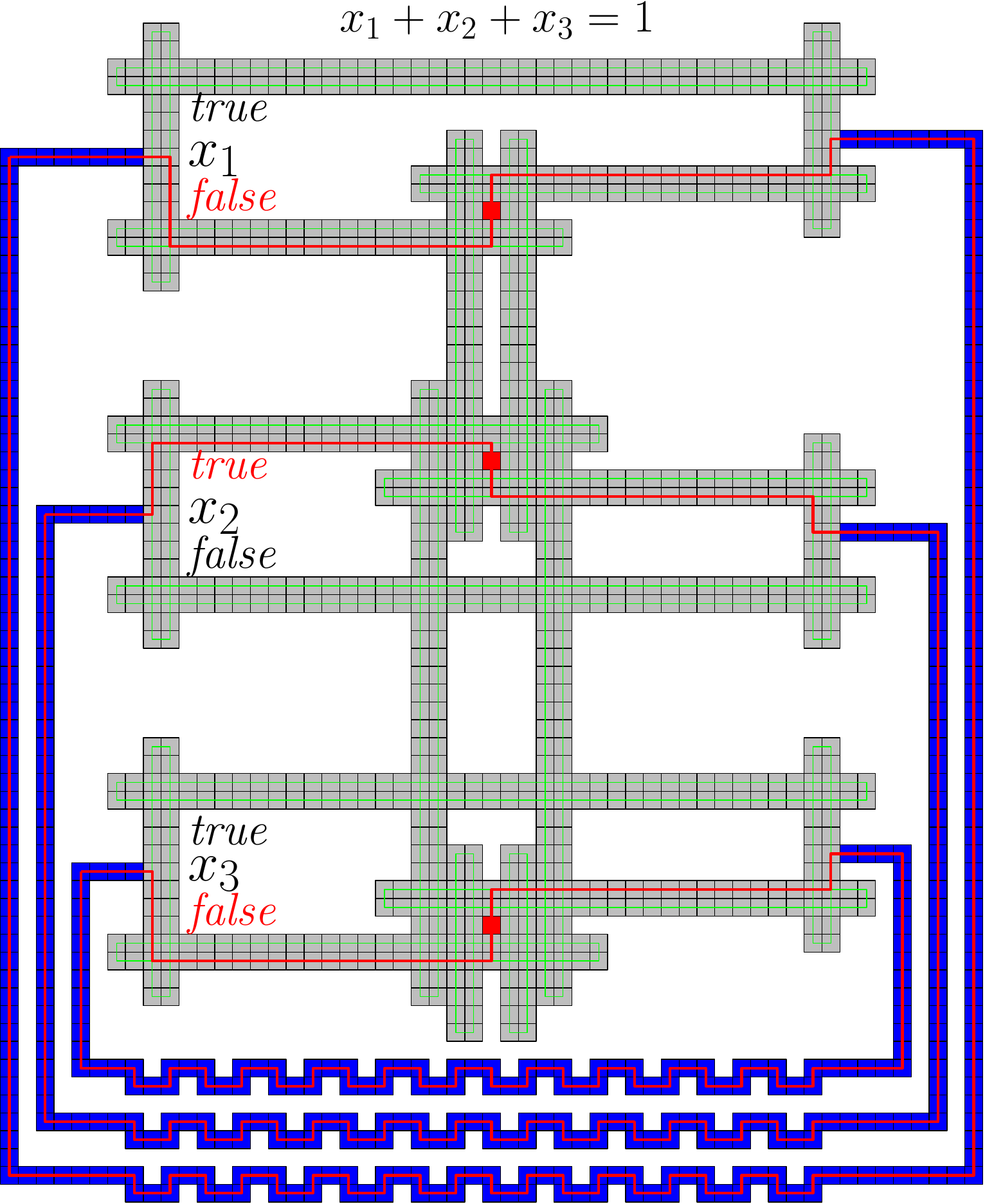}
	\caption{Construction for the one-clause formula $x_1+x_2+x_3=1$ and
	three possible solutions. Every variable has a cycle traversing 
	a zig-zagging path of blue pixels. A variable is {\tt true} if its cycle uses the upper path 
	(\variabletruesymbol) through green/red pixels, {\tt false} if it takes the lower
	path (\variablefalsesymbol). For covering the red pixels, we may
	use two additional turns. This results in three classes of
	optimal cycle covers, shown above. If we use the blue 4-turn cycle to cover
	the upper two red pixels, we are forced to cover the lower red pixel by the
	$x_3$ variable cycle, setting $x_3$ to {\tt false}. The variable cycles of
	$x_1$ and $x_2$ take the cheapest paths, setting them to {\tt true}
	or {\tt false}, respectively. The alternative to a blue cycle 
	is to cover all three red pixel by the variable cycles,
	as in the right solution.}
	% DO NOT CHANGE ABOVES TEXT! CURRENTLY THE FIGURE IS INCLUDED VIA \input{./figures/nphard_simple_clause.tex}
	\vspace*{-3mm}
	\label{fig:nphard:simple:singleclause}
\end{figure}
\fi

\ifx\fignphardsimpleincluded\undefined
\def\fignphardsimpleincluded{}

\begin{figure}
	\centering
	\includegraphics[width=0.32\columnwidth]{./figures/nphard_simple_clause_x1}
	\includegraphics[width=0.32\columnwidth]{./figures/nphard_simple_clause_x3}
	\includegraphics[width=0.32\columnwidth]{./figures/nphard_simple_clause_x2}
	\caption{Construction for the one-clause formula $x_1+x_2+x_3=1$ and three possible solutions.
	Every variable has a cycle traversing a zig-zagging path of blue pixels.
	A variable is {\tt true} if its cycle uses the upper path % (\variabletruesymbol)
	through green/red pixels, {\tt false} if it takes the lower path. % (\variablefalsesymbol).
	For covering the red pixels, we may use two additional turns. This results in three classes of optimal cycle covers, shown above.
	If we use the blue 4-turn cycle to cover the upper two red pixels, we are forced to cover the lower red pixel by the $x_3$ variable cycle, setting $x_3$ to {\tt false}.
	The variable cycles of $x_1$ and $x_2$ take the cheapest paths, setting them to {\tt true} or {\tt false}, respectively.
	The alternative to a blue cycle is to cover all three red pixel by the variable cycles, as in the right solution.}
	\label{fig:nphard:simple:singleclause}
\end{figure}
\fi
%an inclusion guard makes sure to add this figure only the first time. Uncomment to exclude it (will automatically be included in the appendix)
An example is given in Fig.~\ref{fig:nphard:simple:multiclause} for the instance $x_1+x_2+x_3=1\wedge \overline{x_1}+\overline{x_2}+\overline{x_4}=1 \wedge \overline{x_1}+x_2+\overline{x_3}=1$.
%See Appendix~\ref{app:A} for full details, Fig.~\ref{fig:nphard:simple:singleclause} for 
%representing the one-clause formula $x_1+x_2+x_3=1$ with its three possible 1-in-3 solutions, and Fig.~\ref{fig:nphard:simple:multiclause}
%for the instance $x_1+x_2+x_3=1\wedge \overline{x_1}+\overline{x_2}+\overline{x_4}=1 \wedge \overline{x_1}+x_2+\overline{x_3}=1$.
%The instance consists of blue, red, and gray pixels and the solutions of green, red, and blue cycles.
%An example is given in Fig.~\ref{fig:nphard:simple:multiclause} for the instance $x_1+x_2+x_3=1\wedge \overline{x_1}+\overline{x_2}+\overline{x_4}=1 \wedge \overline{x_1}+x_2+\overline{x_3}=1$.
%The instance consists of blue, red, and gray pixels and the solutions of green, red, and blue cycles.
For every variable we have a \variableconstrsymbol~gadget consisting of a gray \variablepoundsymbol~gadget and a zig-zagging, high-cost 
path of blue pixels. 
A cheap solution traverses a blue path once and connect the ends through the remaining construction of gray and red pixels.
Such {\em variable cycles} (highlighted in red) must either go through the upper (\variabletruesymbol) 
or lower (\variablefalsesymbol) lane of the variable gadget;
the former corresponds to a {\tt true}, the later to a {\tt false} assignment of the corresponding variable.
A {\em clause gadget} modifies a lane of all three involved variable gadgets.
This involves the gray pixels that are covered by the green cycles; we can show that they do
not interfere with the cycles for covering the blue and red pixels, and cannot
be modified to cover them.  Thus, we only have
to cover red and blue pixels, but can pass over gray pixels, too.

To this end, we must connect the ends of the blue paths; 
as it turns out, the formula is satisfiable if and only if we can perform this connection
in a manner that also covers one corresponding red pixel with at most two extra turns. 

\definecolor{ipe-red}{rgb}{1 0 0}
\newcommand{\fillfieldred}[0]{red!80}
\definecolor{ipe-gray}{gray}{0.745}
\newcommand{\fillfieldgray}[0]{ipe-gray}
\definecolor{ipe-blue}{rgb}{0,0,1}
\newcommand{\fillfieldblue}[0]{ipe-blue}
%\newcommand{\fillfieldblue}[0]{blue!30}

%We provide additional details of the proof of Theorem~\ref{th:gg:hard:nphard}.

\ifx\useclauseinsteadofexample\undefined
%an inclusion guard makes sure to add this figure only the first time
\else
\begin{figure}
	\centering
	\includegraphics[width=\columnwidth]{./figures/nphard_simple_large}
	\caption{Representing the \emph{1-in-3SAT}-formula $x_1+x_2+x_3=1\wedge \overline{x_1}+\overline{x_2}+\overline{x_4}=1 \wedge \overline{x_1}+x_2+\overline{x_3}=1$.}
	\label{fig:nphard:simple:multiclause}
\end{figure}
\fi

%An example is given in Fig.~\ref{fig:nphard:simple:multiclause} for the instance $x_1+x_2+x_3=1\wedge \overline{x_1}+\overline{x_2}+\overline{x_4}=1 \wedge \overline{x_1}+x_2+\overline{x_3}=1$.
%%The instance consists of blue, red, and gray pixels and the solutions of green, red, and blue cycles.
%For every variable we have a \variableconstrsymbol~gadget consisting of a gray \variablepoundsymbol~gadget and a zig-zagging, high-cost 
%path of blue pixels. 
%A cheap solution will traverse a blue path once and connect the ends through the remaining construction of gray and red pixels.
%Such {\em variable cycles} (highlighted in red) must either go through the upper (\variabletruesymbol) 
%or lower (\variablefalsesymbol) lane of the variable gadget;
%the former corresponds to a {\tt true} assignment, the later to a {\tt false} assignment of the corresponding variable.
%Furthermore, a {\em clause gadget} modifies a lane of all three involved variable gadgets.
%This involves the gray pixels that are covered by the green cycles; we can show that they do
%not interfere with the cycles for covering the blue and red pixels, and cannot
%be modified to cover them.  Hence, we can assume that we only have
%to cover red and blue pixels, but can pass over gray pixels, too.
%
%To this end, we must connect the ends of the blue paths; 
%as it turns out, the formula is satisfiable if and only if we can perform this connection
%in a manner that also covers one corresponding red pixel with at most two extra turns. 

There are three important conclusions.
\begin{enumerate}
	\item The gray pixels are always covered by the green cycles or the solution can be modified without extra turns.
% such that they are if the upper bound for turns in satisfiable constructions
% is kept. 
Thus, we only need to focus on the logic of covering the blue and
 red pixels. This is based on the observation that for keeping tight local
 upper bounds on the number of turns, we are only allowed to make turns at
 very specific locations.
	\item There are only the red variable cycles that cover exactly the red
pixels on the {\tt true} or {\tt false} lane and the simple 4-turn blue cycles
for covering two vertically adjacent red pixels available for covering the red
and blue pixels within the upper bounds of turns for satisfiable constructions.
	\item A coverage with only two turns per red pixel is possible if and only if the underlying formula is satisfiable.
\end{enumerate}
The first item is the most challenging, but also most crucial part.

%\statement{Theorem}{th:gg:hard:nphard}
%{\em	It is NP-hard to find a cycle cover with a minimum number of turns in a 2-dimensional grid graph.}

%----------------------------------------------------------------------------------------
% Construction
%----------------------------------------------------------------------------------------
\subsubsection{Basic Construction}
We use a reduction from \emph{One-in-three 3SAT}.
Refer to Fig.~\ref{fig:nphard:simple:multiclause} for the overall construction. 
Fig.~\ref{fig:rawvariable} shows the building block for a variable gadget, formed by a zig-zagging path, marked in 
blue; this path enforces its covering cycle to go through the box (gray) of the variable, because a double coverage of itself would be too expensive because of the zig-zagging part.
It can be seen that this covering cycle takes the upper path if the corresponding variable is set to {\tt true} and takes the lower path if the variable is set to {\tt false}.

\begin{figure}[h!]
	\centering
		\newcommand{\tikzsquare}[3]{\draw[#3] (#1-0.5,#2-0.5)--(#1+0.5,#2-0.5)--(#1+0.5,#2+0.5)--(#1-0.5,#2+0.5)--(#1-0.5,#2-0.5);}
	\newcommand{\tikzsquares}[5]{ \foreach \a in {#1,...,#3}{ \foreach \b in {#2,...,#4}{ \tikzsquare{\a}{\b}{#5} } } }
	\centering
	\resizebox{0.5\textwidth}{!}{%
		\begin{tikzpicture}
			\tikzsquares{1}{1}{1}{10}{fill=\fillfieldblue}
			\tikzsquares{2}{10}{4}{10}{fill=\fillfieldblue}
			\tikzsquares{2}{1}{4}{1}{fill=\fillfieldblue}
			\tikzsquares{4}{2}{7}{2}{fill=\fillfieldblue};
			\tikzsquares{7}{1}{10}{1}{fill=\fillfieldblue}
			\tikzsquares{10}{2}{11}{2}{fill=\fillfieldblue}

			\tikzsquares{5}{4}{6}{16}{fill=\fillfieldgray}
			\tikzsquares{3}{13}{11}{14}{fill=\fillfieldgray}
			\tikzsquares{3}{6}{11}{7}{fill=\fillfieldgray}

			\tikzsquares{31}{1}{31}{10}{fill=\fillfieldblue}
			\tikzsquares{30}{10}{28}{10}{fill=\fillfieldblue}
			\tikzsquares{30}{1}{28}{1}{fill=\fillfieldblue}
			\tikzsquares{28}{2}{25}{2}{fill=\fillfieldblue};
			\tikzsquares{25}{1}{22}{1}{fill=\fillfieldblue}
			\tikzsquares{22}{2}{21}{2}{fill=\fillfieldblue}

			\tikzsquares{27}{4}{26}{16}{fill=\fillfieldgray}
			\tikzsquares{29}{13}{21}{14}{fill=\fillfieldgray}
			\tikzsquares{29}{6}{21}{7}{fill=\fillfieldgray}

			\draw[dashed, very thick] (11.5, 14.5) -- (20.5, 14.5);
			\draw[dashed, very thick] (11.5, 12.5) -- (20.5, 12.5);

			\draw[dashed, very thick] (11.5, 7.5) -- (20.5, 7.5);
			\draw[dashed, very thick] (11.5, 5.5) -- (20.5, 5.5);

			\draw[dashed, very thick] (11.5, 2.5) -- (20.5, 2.5);
			\draw[dashed, very thick] (11.5, 1.5) -- (20.5, 1.5);
		\end{tikzpicture}
	}%
	\caption[Raw variable]{The construction of a variable gadget. The blue part has to be covered by letting the cycle corresponding to this variable go through the gray part. If the path takes the upper path, the variable is set to {\tt true}; if it takes the lower path, the variable is set to {\tt false}.}
	\label{fig:rawvariable}
\end{figure}

The basic idea for representing literals in clauses is shown in Fig.~\ref{fig:gg:nphard:basic}; this basic element is employed in sets of three for each clause, arranged in a horizontal vector of ``connectors'' to the corresponding variable paths, making use of appropriate geometric extensions to account for (logical) parity, which are described and motivated in the following subsection; of critical importance is the central ``crucial'' pixel marked in red in all figures.  

\begin{figure}[h]
	\newcommand{\tikzsquare}[3]{ \draw[#3] (#1-0.5,#2-0.5) -- (#1+0.5,#2-0.5) -- (#1+0.5,#2+0.5) -- (#1-0.5, #2+0.5) -- (#1-0.5,#2-0.5); }
	\newcommand{\tikzsquares}[5]{ \foreach \a in {#1,...,#3}{ \foreach \b in {#2,...,#4}{ \tikzsquare{\a}{\b}{#5} } } }
	\centering
	%\includegraphics[width=.45\columnwidth]{./figures/fig7_basicblock}
	%\old{
	\resizebox{0.3\textwidth}{!}{%
		\begin{tikzpicture}
			\tikzsquares{1}{6}{2}{7}{fill=\fillfieldgray}
			\draw[dashed] (2.5,5.5) -- (4.5, 5.5);
			\node at (3.5, 6.5) {\Huge{$F$}};
			\draw[dashed] (2.5,7.5) -- (4.5, 7.5);
			\tikzsquares{5}{6}{11}{7}{fill=\fillfieldgray}
			\draw[dashed] (11.5,5.5) -- (13.5, 5.5);
			\node at (12.5, 6.5) {\Huge{$A$}};
			\draw[dashed] (11.5,7.5) -- (13.5, 7.5);
			\tikzsquares{14}{6}{15}{7}{fill=\fillfieldgray}

			\tikzsquares{1}{9}{2}{10}{fill=\fillfieldgray}
			\draw[dashed] (2.5,8.5) -- (4.5, 8.5);
			\node at (3.5, 9.5) {\Huge{$E$}};
			\draw[dashed] (2.5,10.5) -- (4.5, 10.5);
			\tikzsquares{5}{9}{11}{10}{fill=\fillfieldgray}
			\draw[dashed] (11.5,8.5) -- (13.5, 8.5);
			\node at (12.5, 9.5) {\Huge{$B$}};
			\draw[dashed] (11.5,10.5) -- (13.5, 10.5);
			\tikzsquares{14}{9}{15}{10}{fill=\fillfieldgray}

			\tikzsquares{6}{1}{7}{2}{fill=\fillfieldgray}
			\draw[dashed] (5.5,2.5) -- (5.5, 4.5);
			\node at (6.5, 3.5) {\Huge{$G$}};
			\draw[dashed] (7.5,2.5) -- (7.5, 4.5);
			\tikzsquares{6}{5}{7}{11}{fill=\fillfieldgray}
			\draw[dashed] (5.5,11.5) -- (5.5, 13.5);
			\node at (6.5, 12.5) {\Huge{$D$}};
			\draw[dashed] (7.5,11.5) -- (7.5, 13.5);
			\tikzsquares{6}{14}{7}{15}{fill=\fillfieldgray}

			\tikzsquares{9}{1}{10}{2}{fill=\fillfieldgray}
			\draw[dashed] (8.5,2.5) -- (8.5, 4.5);
			\node at (9.5, 3.5) {\Huge{$H$}};
			\draw[dashed] (10.5,2.5) -- (10.5, 4.5);
			\tikzsquares{9}{5}{10}{11}{fill=\fillfieldgray}
			\draw[dashed] (8.5,11.5) -- (8.5, 13.5);
			\node at (9.5, 12.5) {\Huge{$C$}};
			\draw[dashed] (10.5,11.5) -- (10.5, 13.5);
			\tikzsquares{9}{14}{10}{15}{fill=\fillfieldgray}

			\tikzsquare{8}{8}{fill=\fillfieldred}

			\draw[green, ultra thick, dashed] (1,6) -- (15,6) -- (15,7) -- (1,7) -- cycle;
			\draw[green, ultra thick, dashed] (1,9) -- (15,9) -- (15,10) -- (1,10) -- cycle;
			\draw[green, ultra thick, dashed] (6,1) -- (7,1) -- (7,15) -- (6,15) -- cycle;
			\draw[green, ultra thick, dashed] (9,1) -- (10,1) -- (10, 15) -- (9, 15) -- cycle;

		\end{tikzpicture}
		%}%end old
	}%
	\caption[Basic logic element]{Basic block for the hardness proof. Only the red ``crucial'' pixel is of algorithmic interest, as described in the text.}
	\label{fig:gg:nphard:basic}
\end{figure}

\subsubsection{Logical Ideas and Details of the Construction}
The key idea of the reduction is that information can only be encoded in pixels for which we do not know in which orientation they are 
crossed without turning (otherwise matching techniques can be used, as utilized by our approximation algorithm).
This means that, e.g., pixels on the boundary are easy to deal with.
Also, pixels for which we know that there is a turn or that are crossed straight in both orientations are not critical:
in a solution, they may only be used indirectly for information propagation.  
Based on these observations, we consider variants of the construction in Fig.~\ref{fig:gg:nphard:basic} as fundamental information elements.
The optimality of the green cycles is argued in the next section, and thus the corresponding pixels are trivially covered.

Covering a crucial pixel is more difficult.
Covering it by an additional cycle or modifying the green cycles costs always at least $4$ turns.
Assume we create a path from one entry ($A,B,C,D,E,F,G,H$) to another and cover the crucial pixel by this path.
For example, a path entering at $A$ and leaving at $E$ would allow us to cover the crucial pixel at a cost of $2$ while a path entering at $A$ and leaving at $D$ has at least $3$ turns if also covering the crucial pixel, see Fig.~\ref{fig:gg:hard:coverred}.
Actually, we can cover the crucial pixel with a cost of $2$ exactly with S- and U-turns, i.e., paths that do not change the orientation.
\begin{figure}
	\newcommand{\tikzsquare}[3]{ \draw[#3] (#1-0.5,#2-0.5) -- (#1+0.5,#2-0.5) -- (#1+0.5,#2+0.5) -- (#1-0.5, #2+0.5) -- (#1-0.5,#2-0.5); }
	\newcommand{\tikzsquares}[5]{ \foreach \a in {#1,...,#3}{ \foreach \b in {#2,...,#4}{ \tikzsquare{\a}{\b}{#5} } } }
	\centering
	%\begin{subfigure}[b]{0.30\textwidth}
		%\subfloat[]{%
		\centering
		%\includegraphics[width=\columnwidth]{./figures/sturn}
		%\old{
		\resizebox{0.3\textwidth}{!}{%
			\begin{tikzpicture}
				\draw[dashed] (3.5,5.5) -- (4.5, 5.5);
				\node at (3.5, 6.5) {\Large{$F$}};
				\draw[dashed] (3.5,7.5) -- (4.5, 7.5);
				\tikzsquares{5}{6}{11}{7}{fill=\fillfieldgray}
				\draw[dashed] (11.5,5.5) -- (12.5, 5.5);
				\node at (12.5, 6.5) {\Large{$A$}};
				\draw[dashed] (11.5,7.5) -- (12.5, 7.5);

				\draw[dashed] (3.5,8.5) -- (4.5, 8.5);
				\node at (3.5, 9.5) {\Large{$E$}};
				\draw[dashed] (3.5,10.5) -- (4.5, 10.5);
				\tikzsquares{5}{9}{11}{10}{fill=\fillfieldgray}
				\draw[dashed] (11.5,8.5) -- (12.5, 8.5);
				\node at (12.5, 9.5) {\Large{$B$}};
				\draw[dashed] (11.5,10.5) -- (12.5, 10.5);

				\draw[dashed] (5.5,3.5) -- (5.5, 4.5);
				\node at (6.5, 3.5) {\Large{$G$}};
				\draw[dashed] (7.5,3.5) -- (7.5, 4.5);
				\tikzsquares{6}{5}{7}{11}{fill=\fillfieldgray}
				\draw[dashed] (5.5,11.5) -- (5.5, 12.5);
				\node at (6.5, 12.5) {\Large{$D$}};
				\draw[dashed] (7.5,11.5) -- (7.5, 12.5);

				\draw[dashed] (8.5,3.5) -- (8.5, 4.5);
				\node at (9.5, 3.5) {\Large{$H$}};
				\draw[dashed] (10.5,3.5) -- (10.5, 4.5);
				\tikzsquares{9}{5}{10}{11}{fill=\fillfieldgray}
				\draw[dashed] (8.5,11.5) -- (8.5, 12.5);
				\node at (9.5, 12.5) {\Large{$C$}};
				\draw[dashed] (10.5,11.5) -- (10.5, 12.5);

				\tikzsquare{8}{8}{fill=\fillfieldred}

				\draw[green, ultra thick, dashed] (3,6) -- (13,6) -- (13,7) -- (3,7) -- cycle;
				\draw[green, ultra thick, dashed] (3,9) -- (13,9) -- (13,10) -- (3,10) -- cycle;
				\draw[green, ultra thick, dashed] (6,3) -- (7,3) -- (7,13) -- (6,13) -- cycle;
				\draw[green, ultra thick, dashed] (9,3) -- (10,3) -- (10, 13) -- (9, 13) -- cycle;

				\draw[red, ultra thick] (4,10) -- (8,10) -- (8,6) -- (12,6);

			\end{tikzpicture}
		}%
		%}%end old
		%\caption{S-Turn}
		%\end{subfigure}
		%}
	~%
	%\begin{subfigure}[b]{0.3\textwidth}
		%\subfloat[]{%
		\centering
		%\includegraphics[width=\columnwidth]{./figures/uturn}
		%\old{
		\resizebox{0.3\textwidth}{!}{%
			\begin{tikzpicture}
				\draw[dashed] (3.5,5.5) -- (4.5, 5.5);
				\node at (3.5, 6.5) {\Large{$F$}};
				\draw[dashed] (3.5,7.5) -- (4.5, 7.5);
				\tikzsquares{5}{6}{11}{7}{fill=\fillfieldgray}
				\draw[dashed] (11.5,5.5) -- (12.5, 5.5);
				\node at (12.5, 6.5) {\Large{$A$}};
				\draw[dashed] (11.5,7.5) -- (12.5, 7.5);

				\draw[dashed] (3.5,8.5) -- (4.5, 8.5);
				\node at (3.5, 9.5) {\Large{$E$}};
				\draw[dashed] (3.5,10.5) -- (4.5, 10.5);
				\tikzsquares{5}{9}{11}{10}{fill=\fillfieldgray}
				\draw[dashed] (11.5,8.5) -- (12.5, 8.5);
				\node at (12.5, 9.5) {\Large{$B$}};
				\draw[dashed] (11.5,10.5) -- (12.5, 10.5);

				\draw[dashed] (5.5,3.5) -- (5.5, 4.5);
				\node at (6.5, 3.5) {\Large{$G$}};
				\draw[dashed] (7.5,3.5) -- (7.5, 4.5);
				\tikzsquares{6}{5}{7}{11}{fill=\fillfieldgray}
				\draw[dashed] (5.5,11.5) -- (5.5, 12.5);
				\node at (6.5, 12.5) {\Large{$D$}};
				\draw[dashed] (7.5,11.5) -- (7.5, 12.5);

				\draw[dashed] (8.5,3.5) -- (8.5, 4.5);
				\node at (9.5, 3.5) {\Large{$H$}};
				\draw[dashed] (10.5,3.5) -- (10.5, 4.5);
				\tikzsquares{9}{5}{10}{11}{fill=\fillfieldgray}
				\draw[dashed] (8.5,11.5) -- (8.5, 12.5);
				\node at (9.5, 12.5) {\Large{$C$}};
				\draw[dashed] (10.5,11.5) -- (10.5, 12.5);

				\tikzsquare{8}{8}{fill=\fillfieldred}

				\draw[green, ultra thick, dashed] (3,6) -- (13,6) -- (13,7) -- (3,7) -- cycle;
				\draw[green, ultra thick, dashed] (3,9) -- (13,9) -- (13,10) -- (3,10) -- cycle;
				\draw[green, ultra thick, dashed] (6,3) -- (7,3) -- (7,13) -- (6,13) -- cycle;
				\draw[green, ultra thick, dashed] (9,3) -- (10,3) -- (10, 13) -- (9, 13) -- cycle;

				\draw[red, ultra thick] (12,10) -- (8,10) -- (8,6) -- (12,6);

			\end{tikzpicture}
		}%
		%}%end old
		%\caption{U-Turn}
	%\end{subfigure}
	%}
~%
	%\begin{subfigure}[b]{0.3\textwidth}
		%\subfloat[]{
		\centering
		%\includegraphics[width=\columnwidth]{./figures/badturn}
		%\old{
		\resizebox{0.3\textwidth}{!}{%
			\begin{tikzpicture}
				\draw[dashed] (3.5,5.5) -- (4.5, 5.5);
				\node at (3.5, 6.5) {\Large{$F$}};
				\draw[dashed] (3.5,7.5) -- (4.5, 7.5);
				\tikzsquares{5}{6}{11}{7}{fill=\fillfieldgray}
				\draw[dashed] (11.5,5.5) -- (12.5, 5.5);
				\node at (12.5, 6.5) {\Large{$A$}};
				\draw[dashed] (11.5,7.5) -- (12.5, 7.5);

				\draw[dashed] (3.5,8.5) -- (4.5, 8.5);
				\node at (3.5, 9.5) {\Large{$E$}};
				\draw[dashed] (3.5,10.5) -- (4.5, 10.5);
				\tikzsquares{5}{9}{11}{10}{fill=\fillfieldgray}
				\draw[dashed] (11.5,8.5) -- (12.5, 8.5);
				\node at (12.5, 9.5) {\Large{$B$}};
				\draw[dashed] (11.5,10.5) -- (12.5, 10.5);

				\draw[dashed] (5.5,3.5) -- (5.5, 4.5);
				\node at (6.5, 3.5) {\Large{$G$}};
				\draw[dashed] (7.5,3.5) -- (7.5, 4.5);
				\tikzsquares{6}{5}{7}{11}{fill=\fillfieldgray}
				\draw[dashed] (5.5,11.5) -- (5.5, 12.5);
				\node at (6.5, 12.5) {\Large{$D$}};
				\draw[dashed] (7.5,11.5) -- (7.5, 12.5);

				\draw[dashed] (8.5,3.5) -- (8.5, 4.5);
				\node at (9.5, 3.5) {\Large{$H$}};
				\draw[dashed] (10.5,3.5) -- (10.5, 4.5);
				\tikzsquares{9}{5}{10}{11}{fill=\fillfieldgray}
				\draw[dashed] (8.5,11.5) -- (8.5, 12.5);
				\node at (9.5, 12.5) {\Large{$C$}};
				\draw[dashed] (10.5,11.5) -- (10.5, 12.5);

				\tikzsquare{8}{8}{fill=\fillfieldred}

				\draw[green, ultra thick, dashed] (3,6) -- (13,6) -- (13,7) -- (3,7) -- cycle;
				\draw[green, ultra thick, dashed] (3,9) -- (13,9) -- (13,10) -- (3,10) -- cycle;
				\draw[green, ultra thick, dashed] (6,3) -- (7,3) -- (7,13) -- (6,13) -- cycle;
				\draw[green, ultra thick, dashed] (9,3) -- (10,3) -- (10, 13) -- (9, 13) -- cycle;

				\draw[red, ultra thick] (4,10) -- (8,10) -- (8,6) -- (9,6) -- (9,4);

			\end{tikzpicture}
		}%
		%}%end old
		%\caption{Bad turn}
	%\end{subfigure}
	%}
\caption[Coverage properties of basic logic element]{Three ways of covering the crucial pixel. The first two keep the orientation and have a cost of $2$. The third one changes the orientation, and thus has a cost of at least $3$.}
\label{fig:gg:hard:coverred}
\end{figure}
In addition, it can be seen that the crucial pixel always needs at least two turns within this construction.

Now consider the construction shown in Fig.~\ref{fig:gg:hard:triple} that uses a combination of elements as described  above.
Assume all turns made outside the construction are free and all six ``exits'' $A,B,C,D,E,F$ are connected from outside.
This means that every red highlighted pixel can be covered by a cycle of cost two that traverses the exits above and below (see blue paths in Fig.~\ref{fig:gg:hard:triple}).
The green cycles are still necessary; employing them for covering the crucial pixel is too expensive, so 
we can concentrate on how to cover the red highlighted pixels by additional cycles.
It can be seen that there are only five potential cycles that involve not more than a cost of two turns per crucial pixel:
these are the three mentioned red cycles that use the exterior and the two
interior cycles marked in blue in Fig.~\ref{fig:gg:hard:triple} (the cycles can
be slightly shifted without changing them in a useful manner).  This results in
only three potential optimal solutions (refer to
Fig.~\ref{fig:nphard:simple:singleclause}):
\begin{enumerate}
	\item Using the upper red cycle to cover the upper crucial pixel and the lower blue cycle to cover the middle crucial pixel and lower crucial pixel.
	\item Using the lower red cycle to cover the lower crucial pixel and the upper blue cycle to cover the middle crucial pixel and the upper crucial pixel.
	\item Using all three red cycles to cover each crucial pixel separately.
\end{enumerate}

For constructing a clause $x_1+x_2+x_3=1$ with $x_1$ being above $x_2$ and $x_3$ being below $x_2$ we now place the upper crucial pixel of Fig.~\ref{fig:gg:hard:triple} on the {\tt false} path of $x_1$, the middle pixel on the {\tt true} path of $x_2$ and the lower crucial pixel on the {\tt false} path of $x_3$ as illustrated in Fig.~\ref{fig:nphard:simple:singleclause}.
If we take the {\tt true} path of $x_2$, we now have to take the {\tt false} paths of $x_1$ and $x_3$.
If we take the {\tt false} path of $x_2$ we need to block the {\tt false} path of either $x_1$ or $x_3$ (but not both) by a red cycle.
The other variable paths (without the crucial pixel) do not need any turns and are hence always preferred if the red pixel does not have to be covered.
Note that any external cycle that enters the construction (e.g., through $A$) needs at least two turns to leave it again through another exit.
Hence, passing through this construction should always be combined with
covering an uncovered red pixel, for which we have only one choice with two
turns, in order to account for the two necessary turns.  By concatenating these
clauses, we can form any  \emph{One-in-three 3SAT} formula, see
Fig.~\ref{fig:nphard:simple:multiclause}.

\begin{figure}[h!]
	\centering
	\includegraphics[width=0.35\textwidth]{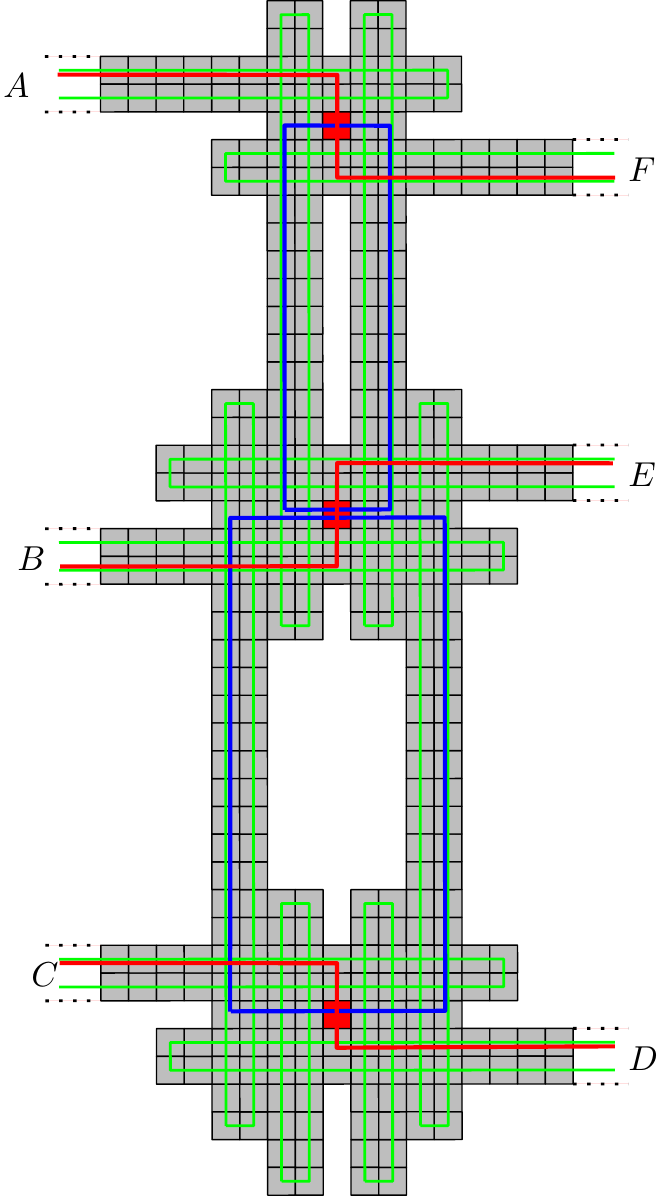}
	\caption[Building a formula]{
		Assuming that all turns made outside are free, there are only five potential cycles to cover the crucial pixels: the three exterior red cycles and the two interior blue cycles.
		Assigning the usage of the upper red cycle the Boolean variable $x_1$ (and analogous for the middle and the lower $x_2$ and $x_3$, resp.)  leads to the Boolean formula $(x_1 \vee x_3) \wedge (x_2 \leftrightarrow x_1 \wedge x_3)$ or $\overline{x_1}+x_2+\overline{x_3}=1$.
	}
	\label{fig:gg:hard:triple}
\end{figure}

%----------------------------------------------------------------------------------------
% A Tight Lower Bound
%----------------------------------------------------------------------------------------
\subsubsection{A Tight Lower Bound}

We now give a tight lower bound on the number of turns in a cycle cover.
In order to achieve it, the turns have to be made at very specific positions.
For this purpose it is sufficient to limit our view onto local components and state that in each of these disjunct components a specific number of turns has to be made.
The global lower bound is then the sum of all local lower bounds.
Later we show that these turns are only sufficient if the corresponding formula is satisfiable.

%We first provide a trivial but important observation and two simple lemmas that help us to deduce necessary turn positions.
%\begin{observation}
%	\label{observation:gg:hard:corver}
%	There has to be a turn on every corner pixel.
%\end{observation}
\begin{lemma}
	\label{lemma:gg:hard:stripeven}
	On every \emph{full strip} (i.e., a maximal connected collinear set of pixels) in the grid graph, a cycle cover has to have an even number of turns (u-turns counting twice).
\end{lemma}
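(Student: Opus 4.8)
The plan is a global parity argument via a cut, combined with a short local case analysis at each pixel of the strip. Without loss of generality, assume the strip $S$ is horizontal, consisting of the pixels in some row $r$ with column indices $a, a+1, \dots, b$; write $R$ for this set of pixels. The entire role of \emph{maximality} is one observation: since the grid graph contains no pixel in row $r$ at column $a-1$ or at column $b+1$, it contains \emph{no} horizontal edge with exactly one endpoint in $R$. Consequently, every edge of the cut separating $R$ from the remaining pixels is a \emph{vertical} edge joining some $p_i \in R$ to the pixel immediately above or below it, and each such edge has its other endpoint outside $R$.

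First I would bound the parity of the vertical usage on $R$. Every cycle of the cover is a closed walk, so it traverses cut edges an even number of times, counted with multiplicity: each traversal of a cut edge toggles whether the walk currently sits on a pixel of $R$ (traversals of non-cut edges do not), and a closed walk returns to its starting side. Summing over all cycles, the total number $m$ of vertical edge-incidences used by the cover at pixels of $R$ is even. Moreover $m = \sum_{i=a}^{b} d_i$, where $d_i$ denotes the number of vertical edge-slots (with multiplicity) used at $p_i$, since every cut edge is incident to exactly one pixel of $R$.

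Next I would relate turns to the $d_i$ locally. Fix $p_i \in R$. The cover passes through $p_i$ some number of times, and each pass uses exactly two of the four incident edge-directions $W, E, N, S$ (possibly the same one twice). A pass $\{W,E\}$ or $\{N,S\}$ is straight, contributing $0$ turns and an even number ($0$ or $2$) of vertical slots; a pass $\{W,N\}, \{W,S\}, \{E,N\}, \{E,S\}$ is a simple turn, contributing $1$ turn and exactly $1$ vertical slot; a pass $\{W,W\}, \{E,E\}, \{N,N\}, \{S,S\}$ is a u-turn, contributing $2$ turns and an even number ($0$ or $2$) of vertical slots. Hence the number of turns at $p_i$ (u-turns counted twice) is congruent mod $2$ to $d_i$. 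Summing over the strip, the number of turns on $S$ is $\sum_{i=a}^{b} \mathrm{turns}(p_i) \equiv \sum_{i=a}^{b} d_i = m \equiv 0 \pmod 2$, as claimed.

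The main obstacle is bookkeeping rather than depth: one must fix the formalization of a cycle cover that permits repeated vertices and edges, make precise the notions of a ``pass through a pixel'' and of a u-turn, and check that the ten cases above are exhaustive even for multiply-covered pixels. It is worth stressing that the hypothesis ``\emph{full} strip'' is used only to exclude horizontal cut edges; for a proper collinear sub-strip the statement fails, since the cover may leave such a sub-strip through a horizontal edge.
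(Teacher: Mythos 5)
Your proof is correct, and it reaches the parity claim by a somewhat different route than the paper. The paper's argument (two sentences) pairs turns \emph{along} the strip: every $90^\circ$ turn that begins a horizontal run inside the strip is matched with the turn that ends that run (maximality guarantees the run cannot escape horizontally), and the leftover turns are u-turns, which count twice and so preserve parity. You instead argue \emph{across} the strip: maximality implies the cut separating the strip from the rest of the graph consists only of perpendicular edges, each closed walk crosses this cut an even number of times, and a ten-case local check shows the turn count at each pixel is congruent mod $2$ to the number of perpendicular edge-slots used there. Both hinge on maximality in exactly the same place (no collinear edge leaves the strip), but your version is more formal and more robust: it handles multiply-covered pixels and all u-turn orientations explicitly via the local congruence, whereas the paper's pairing argument leaves the bookkeeping for horizontal u-turns (which simultaneously end one run and start another) and vertical u-turns implicit. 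The cost is length; the paper's pairing is the shorter intuition, yours is the one you would want if you had to verify the lemma mechanically.
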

\begin{proof}
	For every turn that enters the strip, there needs to be a turn for leaving it.
	Turns that do not enter/leave the strip are u-turns.
\end{proof}
\begin{lemma}
	\label{lemma:gg:hard:guard}
	Given any pixel $p$ of a grid graph and a corresponding cycle cover $\mathcal{C}$.
	If there is no turn on $p$, then there has to be a turn left of $p$ and another turn right of $p$ (on the horizontal full strip through $p$), or there has to be a turn above $p$ and another below $p$ (on the vertical full strip through $p$).
	If there is a turn at $p$, there are at least three turns on the star consisting of the horizontal and vertical full strip through $p$.
\end{lemma}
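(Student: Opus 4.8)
The plan is to reduce everything to one reusable observation, essentially the mechanism already exploited in the proof of Lemma~\ref{lemma:gg:hard:stripeven}: if the cycle cover leaves $p$ along an edge whose direction is $d\in\{N,S,E,W\}$, then, following the cycle from $p$ towards its $d$-neighbor, we stay on the corresponding full strip through $p$ as long as the cycle goes straight; since a full strip is a finite set of pixels and the only way to leave a strip is to make a turn, the cycle must turn at some pixel lying strictly on the $d$-side of $p$ along that strip. Call this the \emph{escape property}. All three parts of the statement then follow by applying it in the appropriate directions and checking that the turns produced sit at pairwise distinct pixels, using that the horizontal and vertical full strips through $p$ share only $p$.

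First I would treat the case that $p$ carries no turn. Then the cycle passes $p$ straight, so its two edges at $p$ are collinear: either both horizontal (towards the $W$- and $E$-neighbor) or both vertical. In the horizontal case the escape property applied to the $W$-edge gives a turn strictly left of $p$, and applied to the $E$-edge a turn strictly right of $p$, both on the horizontal full strip, which is exactly the first claim; the vertical case is symmetric and yields turns above and below $p$ on the vertical full strip.

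Next, the case that $p$ carries a turn. If it is a simple $90^\circ$ turn, the two edges at $p$ point in one horizontal and one vertical direction, say $E$ and $N$ (the remaining three orientations are symmetric); the turn at $p$ is one turn on the star, the escape property in direction $E$ gives a turn strictly east of $p$ on the horizontal full strip, and in direction $N$ a turn strictly north of $p$ on the vertical full strip, so the star carries three turns at three distinct pixels. If the turn at $p$ is a $180^\circ$ u-turn, it already counts as two turns located at $p$; moreover a u-turn means both edges of the cycle at $p$ point the same way — say south — along one of the full strips through $p$, so the escape property in that direction contributes one further turn at a pixel strictly south of $p$, again totalling at least three turns on the star.

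The one point I would be careful about is the u-turn sub-case, since ``u-turn at $p$'' in the atomic-strip model means $p$ is traversed via the atomic strip perpendicular to the incoming/outgoing direction, so that both connections at $p$ run along a single full strip through $p$; once this is pinned down the escape property applies verbatim. Everything else is direct: the escape property is the only real ingredient, and distinctness of the produced turns is immediate because the two full strips through $p$ meet only in $p$. I therefore expect no genuine obstacle beyond getting the u-turn bookkeeping right.
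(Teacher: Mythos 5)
Your proof is correct and follows essentially the same route as the paper: your ``escape property'' is exactly the paper's observation that the straight part of a cycle through $p$ must turn at both of its ends, and your case analysis for a turn at $p$ makes explicit the same strip-pairing mechanism that the paper compresses into a one-line appeal to Lemma~\ref{lemma:gg:hard:stripeven}. If anything, your explicit treatment of the u-turn sub-case is more complete than the paper's, since parity alone does not force a third turn when the u-turn already contributes an even count of two at $p$, whereas your escape argument does.
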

\begin{proof}
	If there is no turn on $p$, there is a straight part of a cycle going through it.
	At either end of this straight part, it needs to turn in order to close the cycle.
	If there is a turn at $p$, the claim follows by Lemma~\ref{lemma:gg:hard:stripeven}.
\end{proof}

Now we can deduce the turn positions for all essential parts of the construction, as stated in the next lemma.
\begin{lemma}
	\label{lemma:gg:hard:green}
	In Fig.\ref{fig:gg:hard:partslb}, the essential parts of the construction are displayed (possibly reflected). For these parts we can give the following lower bounds on the necessary number of interior turns and constraints on the positions of turns if this bound is tight:
	 In a part as shown in Fig.~\ref{fig:gg:hard:partslb:2}/\ref{fig:gg:hard:partslb:4}/\ref{fig:gg:hard:partslb:5}/\ref{fig:gg:hard:partslb:6} (excluding the blue area), we need at least 10/18/14/10 interior turns. To achieve exactly this lower bound, there has to be exactly one turn at every black dot and the remaining two turns have to be in the green area.
	\begin{figure}
		\newcommand{\tikzsquare}[3]{ \draw[#3] (#1-0.5,#2-0.5) -- (#1+0.5,#2-0.5) -- (#1+0.5,#2+0.5) -- (#1-0.5, #2+0.5) -- (#1-0.5,#2-0.5); }
		\newcommand{\tikzsquares}[5]{ \foreach \a in {#1,...,#3}{ \foreach \b in {#2,...,#4}{ \tikzsquare{\a}{\b}{#5} } } }
		\newcommand{\tikzforcedturn}[2]{ \draw[fill=black] (#1, #2) circle (0.3);}

			\centering
			%\old{

		~%
		%\subfloat[]{%
		%\begin{subfigure}[b]{0.21\textwidth}
			\centering
			\resizebox{0.3\textwidth}{!}{%
				\begin{tikzpicture}
					\tikzsquares{1}{6}{11}{7}{fill=\fillfieldgray}
					\draw[dashed, ultra thick] (0.5, 5.5) -- (-1, 5.5);
					\draw[dashed, ultra thick] (0.5, 7.5) -- (-1, 7.5);
					\tikzsquares{3}{3}{13}{4}{fill=\fillfieldgray}
					\draw[dashed, ultra thick] (13.5, 4.5) -- (15, 4.5);
					\draw[dashed, ultra thick] (13.5, 2.5) -- (15, 2.5);
					\tikzsquares{5}{1}{6}{11}{fill=\fillfieldgray}
					\draw[dashed, ultra thick](4.5, 11.5) -- (4.5, 13);
					\draw[dashed, ultra thick] (6.5, 11.5) -- (6.5, 13);
					\tikzsquares{8}{1}{9}{11}{fill=\fillfieldgray}
					\draw[dashed, ultra thick](7.5, 11.5) -- (7.5, 13);
					\draw[dashed, ultra thick] (9.5, 11.5) -- (9.5, 13);
					\tikzsquares{7}{3}{7}{7}{fill=green!80}
					\tikzsquares{5}{5}{9}{5}{fill=green!80}
					\tikzsquare{7}{5}{fill=\fillfieldred};
					\tikzforcedturn{5}{1}
					\tikzforcedturn{6}{1}
					\tikzforcedturn{8}{1}
					\tikzforcedturn{9}{1}
					\tikzforcedturn{3}{3}
					\tikzforcedturn{3}{4}
					\tikzforcedturn{11}{6}
					\tikzforcedturn{11}{7}
				\end{tikzpicture}
			}%
			\caption{\num{8} unique turns and \num{2} turns in green area}
				\label{fig:gg:hard:partslb:2}
		%\end{subfigure}
		%}
		~%
			%\subfloat[]{
				%\begin{subfigure}[b]{0.21\textwidth}
				\centering
				\resizebox{0.3\textwidth}{!}{%
					\begin{tikzpicture}
						\tikzsquares{1}{5}{15}{6}{fill=\fillfieldgray}
						\draw[dashed, ultra thick] (0.5, 4.5) -- (-1, 4.5);
						\draw[dashed, ultra thick] (0.5, 6.5) -- (-1, 6.5);
						\tikzsquares{3}{8}{17}{9}{fill=\fillfieldgray}
						\draw[dashed, ultra thick] (17.5, 7.5) -- (19, 7.5);
						\draw[dashed, ultra thick] (17.5, 9.5) -- (19, 9.5);
						\tikzsquares{5}{3}{6}{13}{fill=\fillfieldgray}
						\draw[dashed, ultra thick] (4.5, 13.5) -- (4.5, 16);
						\draw[dashed, ultra thick] (6.5, 13.5) -- (6.5, 16);
						\tikzsquares{7}{1}{8}{11}{fill=\fillfieldgray}
						\tikzsquares{10}{1}{11}{11}{fill=\fillfieldgray}
						\tikzsquares{12}{3}{13}{13}{fill=\fillfieldgray}
						\draw[dashed, ultra thick] (11.5, 13.5) -- (11.5, 16);
						\draw[dashed, ultra thick] (13.5, 13.5) -- (13.5, 16);
						\tikzsquares{9}{5}{9}{9}{fill=green!80}
						\tikzsquares{5}{7}{6}{7}{fill=green!80}
						\tikzsquares{12}{7}{13}{7}{fill=green!80}
						\tikzsquare{9}{7}{fill=\fillfieldred}

						\tikzforcedturn{15}{6}
						\tikzforcedturn{15}{5}
						\tikzforcedturn{3}{8}
						\tikzforcedturn{3}{9}
						\tikzforcedturn{5}{3}
						\tikzforcedturn{6}{3}
						\tikzforcedturn{7}{1}
						\tikzforcedturn{7}{11}
						\tikzforcedturn{8}{11}
						\tikzforcedturn{8}{1}
						\tikzforcedturn{10}{1}
						\tikzforcedturn{10}{11}
						\tikzforcedturn{11}{11}
						\tikzforcedturn{11}{1}
						\tikzforcedturn{12}{3}
						\tikzforcedturn{13}{3}

					\end{tikzpicture}
				}%
				\caption{\num{16} unique turns and \num{2} turns in green area}
				\label{fig:gg:hard:partslb:4}
			%\end{subfigure}
			%}
		~%
		%\subfloat[]{
		%\begin{subfigure}[b]{0.21\textwidth}
			\centering
			\resizebox{0.3\textwidth}{!}{%
				\begin{tikzpicture}
					\tikzsquares{5}{1}{6}{12}{fill=\fillfieldgray}
					\draw[dashed, ultra thick] (4.5, 0.5) -- (4.5, -1);
					\draw[dashed, ultra thick] (6.5, 0.5) -- (6.5, -1);
					\tikzsquares{12}{1}{13}{12}{fill=\fillfieldgray}
					\draw[dashed, ultra thick] (11.5, 0.5) -- (11.5, -1);
					\draw[dashed, ultra thick] (13.5, 0.5) -- (13.5, -1);
					\tikzsquares{7}{4}{8}{15}{fill=\fillfieldgray}
					\draw[dashed, ultra thick] (6.5, 15.5) -- (6.5, 17);
					\draw[dashed, ultra thick] (8.5, 15.5) -- (8.5, 17);
					\tikzsquares{10}{4}{11}{15}{fill=\fillfieldgray}
					\draw[dashed, ultra thick] (9.5, 15.5) -- (9.5, 17);
					\draw[dashed, ultra thick] (11.5, 15.5) -- (11.5, 17);
					\tikzsquares{3}{6}{17}{7}{fill=\fillfieldgray}
					\draw[dashed, ultra thick] (17.5, 5.5) -- (19, 5.5);
					\draw[dashed, ultra thick] (17.5, 7.5) -- (19, 7.5);
					\tikzsquares{1}{9}{15}{10}{fill=\fillfieldgray}
					\draw[dashed, ultra thick] (0.5, 8.5) -- (-1, 8.5);
					\draw[dashed, ultra thick] (0.5, 10.5) -- (-1, 10.5);

					\tikzsquares{9}{6}{9}{10}{fill=green!80}
					\tikzsquares{5}{8}{13}{8}{fill=green!80}
					\tikzsquare{9}{8}{fill=\fillfieldred}

					\tikzforcedturn{6}{12}
					\tikzforcedturn{5}{12}
					\tikzforcedturn{13}{12}
					\tikzforcedturn{12}{12}
					\tikzforcedturn{7}{4}
					\tikzforcedturn{8}{4}
					\tikzforcedturn{10}{4}
					\tikzforcedturn{11}{4}
					\tikzforcedturn{3}{6}
					\tikzforcedturn{3}{7}
					\tikzforcedturn{15}{10}
					\tikzforcedturn{15}{9}
				\end{tikzpicture}
			}%
			\caption{\num{12} unique turns and \num{2} turns in green area}
				\label{fig:gg:hard:partslb:5}
		%\end{subfigure}
		%}
				~%
		%\begin{subfigure}[b]{0.21\textwidth}
			%\subfloat[]{
			\centering
			\resizebox{0.3\textwidth}{!}{%
				\begin{tikzpicture}
					\tikzsquares{1}{1}{1}{10}{fill=\fillfieldblue}
					\tikzsquares{2}{10}{4}{10}{fill=\fillfieldblue}
					\tikzsquares{2}{1}{4}{1}{fill=\fillfieldblue}
					\tikzsquares{4}{2}{7}{2}{fill=\fillfieldblue};
					\tikzsquares{7}{1}{10}{1}{fill=\fillfieldblue}
					\tikzsquares{10}{2}{11}{2}{fill=\fillfieldblue}

					\tikzsquares{5}{4}{6}{16}{fill=\fillfieldgray}
					\tikzsquares{3}{13}{11}{14}{fill=\fillfieldgray}
					\tikzsquares{3}{6}{11}{7}{fill=\fillfieldgray}

					\draw[dashed, ultra thick] (11.5, 14.5) -- (13, 14.5);
					\draw[dashed, ultra thick] (11.5, 12.5) -- (13, 12.5);

					\draw[dashed, ultra thick] (11.5, 7.5) -- (13, 7.5);
					\draw[dashed, ultra thick] (11.5, 5.5) -- (13, 5.5);

					\draw[dashed, ultra thick] (11.5, 2.5) -- (13, 2.5);
					\draw[dashed, ultra thick] (11.5, 1.5) -- (13, 1.5);

					\tikzsquares{5}{13}{6}{14}{fill=green!80}
					\tikzsquares{5}{10}{6}{10}{fill=green!80}
					\tikzsquares{5}{6}{6}{7}{fill=green!80}

					\tikzforcedturn{5}{4}
					\tikzforcedturn{5}{16}
					\tikzforcedturn{6}{16}
					\tikzforcedturn{6}{4}
					\tikzforcedturn{3}{13}
					\tikzforcedturn{3}{14}
					\tikzforcedturn{3}{6}
					\tikzforcedturn{3}{7}
				\end{tikzpicture}
			}%
			\caption{\num{8} unique turns and \num{2} turns in green area.}
				\label{fig:gg:hard:partslb:6}
		%\end{subfigure}
		%}
		%}%end old
		%\includegraphics[width=.95\columnwidth]{./figures/fig12_Lowerbounds}
	\caption[Necessary turns]{Lower bounds on the number of turns in the different parts of the construction (possibly reflected). The black dots represent turn position necessary to meet the lower bound.}
	\label{fig:gg:hard:partslb}
\end{figure}
%This implies at least additional $42$ turns per clause and $20$ per variable plus one turn at each corner pixel in the blue highlighted area.

\end{lemma}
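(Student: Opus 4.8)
The plan is to establish all four lower bounds, and the accompanying tightness characterisations, by one uniform local counting scheme resting only on Lemmas~\ref{lemma:gg:hard:stripeven} and~\ref{lemma:gg:hard:guard} together with the elementary fact that a cycle cover must turn at every pixel having exactly two neighbours that form a right angle. For each of the four part types in Figure~\ref{fig:gg:hard:partslb} I would first classify the maximal full strips into those that leave the part through one of the dashed openings and the \emph{closed} ones that dead-end inside the part at both ends. Every black dot is one of the two end pixels of a closed $2$-wide corridor segment, and such a pixel is an ``L-pixel'' (degree two, right angle) in the construction, hence carries a forced turn. Since the corridors are pairwise disjoint and each contributes exactly its two dead-end corners, this already accounts for $8/16/12/8$ mandatory turns at pairwise distinct positions.

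Next I would produce the two remaining turns. In Figures~\ref{fig:gg:hard:partslb:2},~\ref{fig:gg:hard:partslb:4} and~\ref{fig:gg:hard:partslb:5} I apply Lemma~\ref{lemma:gg:hard:guard} to the central red pixel $r$: its \emph{star} (the union of the horizontal and the vertical full strip through $r$) contains at least two turns, and at least three if $r$ itself is a turn. A short per-figure check shows that no black dot lies on this star, so these turns are genuinely new, and summing gives the totals $10$, $18$, $14$. Figure~\ref{fig:gg:hard:partslb:6} has no red pixel; here I would instead argue about the vertical blue/gray corridor, which dead-ends at both ends (already furnishing four of the eight dot turns) and is crossed by the horizontal corridors and by the blue zig-zag (turns inside the zig-zag not being counted). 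The cycle that has to cover the blue zig-zag and traverse the vertical corridor is forced to turn twice, at the green cells where it enters and leaves the vertical part, because in the cycle-cover model every pixel is used in a single orientation and hence two trajectories cannot cross; this gives the total $10$.

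For the tightness part, suppose a cycle cover attains the stated bound. Then by pigeonhole there is exactly one turn at each black dot and exactly two turns on the star (respectively on the green crossing cells) used above; in particular $r$ is not a turn, since otherwise its star would need a third turn and push the total strictly above the bound. It remains to pin these two turns onto green cells. For this I would show that a closed $2$-wide corridor whose only turns are its two dead-end corners must be traversed as a straight double track with a single U-turn at the dead end: by the parity of turns along each column/row strip (Lemma~\ref{lemma:gg:hard:stripeven}) combined with the single-orientation property, any interior turn of such a corridor forces a further turn, contradicting equality. Consequently no turn of $r$'s star lies on a gray or blue corridor pixel, so the two remaining turns must sit on green cells; the position constraint of Lemma~\ref{lemma:gg:hard:guard} in the equality case (turns exactly at the two ends of the straight piece through $r$) then localises them to the green cells asserted in the statement.

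The main obstacle is precisely this last step. The gray corridors and the central cross geometrically interleave — e.g.\ in Figures~\ref{fig:gg:hard:partslb:4} and~\ref{fig:gg:hard:partslb:5} the horizontal strip through $r$ runs for several pixels inside a vertical corridor and the vertical strip through $r$ runs inside horizontal corridors — so a turn of $r$'s star could a priori sit on a gray corridor pixel rather than on a green one, and the turn count alone does not forbid it. Ruling this out requires, for each part type, a careful corridor-by-corridor case distinction that traces the few possible minimum-turn routings and checks that any gray-pixel turn on $r$'s star, or any ``stub'' left uncovered once the corridors are forced straight, creates an extra turn and hence violates equality. Each individual case is small and bounded, but there are many of them; it is this bookkeeping, rather than any single clever idea, where the real work lies.
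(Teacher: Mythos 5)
Your overall architecture matches the paper's: forced turns at corner pixels plus an application of Lemma~\ref{lemma:gg:hard:guard} to the star of the red pixel for the two extra turns, with the localization to the green cells as the delicate final step. However, there is a genuine quantitative gap in your first step. You assert that \emph{every} black dot is a degree-two right-angle pixel and therefore carries a forced turn, giving $8/16/12/8$ mandatory turns. This is true for Fig.~\ref{fig:gg:hard:partslb:2} and Fig.~\ref{fig:gg:hard:partslb:6}, but false for Fig.~\ref{fig:gg:hard:partslb:4} and Fig.~\ref{fig:gg:hard:partslb:5}: a dead-end corner of a $2$-wide corridor that abuts a parallel neighbouring corridor has a \emph{third} neighbour. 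In Fig.~\ref{fig:gg:hard:partslb:4} the dots at $(6,3)$, $(7,11)$, $(11,11)$, $(12,3)$ are of this kind (e.g., $(6,3)$ is adjacent to $(5,3)$, $(6,4)$ \emph{and} $(7,3)$, since the corridor $x\in\{7,8\}$ extends down to $y=1$), and similarly four of the twelve dots in Fig.~\ref{fig:gg:hard:partslb:5}. The degree-two argument forces nothing there, so your count yields only $12+2=14$ and $8+2=10$ turns for these two parts, short of the claimed $18$ and $14$. The paper closes exactly this hole with Lemma~\ref{lemma:gg:hard:stripeven}: each short maximal full strip at a corridor end (the yellow strips in Fig.~\ref{fig:gg:hard:green}) contains one genuine corner and hence, by parity, must carry a second turn, recovering the four missing turns per part. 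Note that this parity argument only places the extra turn \emph{somewhere} on the end strip, so pinning it to the specific black dot is itself part of the tightness analysis rather than of the raw count.

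On the tightness side, your diagnosis is accurate and essentially agrees with the paper: the star of the red pixel runs through gray corridor pixels, so excluding a turn there requires a budget argument (any such turn would, via Lemma~\ref{lemma:gg:hard:stripeven} and Lemma~\ref{lemma:gg:hard:guard} applied to the corridor interiors, force an additional turn beyond the bound). The paper carries this out only in sketch form as well, using the hollow-circle pixels of Fig.~\ref{fig:gg:hard:green}. But as written, your proof does not establish the stated lower bounds for two of the four part types, and that is the substance of the lemma.
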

\begin{proof}
	There needs to be at least one turn in every corner pixel, making Fig.~\ref{fig:gg:hard:partslb:2} and \ref{fig:gg:hard:partslb:6} trivial.
	In Fig.~\ref{fig:gg:hard:partslb:4} and \ref{fig:gg:hard:partslb:5} we have some additional non-corner pixels for which we need easy additional arguments based on Lemma~\ref{lemma:gg:hard:stripeven} and~\ref{lemma:gg:hard:guard}.

	We give the details for the part in Fig.~\ref{fig:gg:hard:partslb:4}; the adaption to Fig.~\ref{fig:gg:hard:partslb:5} is straightforward. 
	The used arguments are highlighted in Fig.~\ref{fig:gg:hard:green}.
	First, there are already \num{12} corner pixels, so we have only \num{6} turns left to cover this local part.
	Note that we assume that we can appropriately place turns outside this
local part, so we are using only local arguments and the turn constraints
remain feasible in the global construction.  We need at least 6 additional
turns: one on each yellow area due to Lemma~\ref{lemma:gg:hard:stripeven} and
two on the green area due to Lemma~\ref{lemma:gg:hard:guard}.
	There can be no turn on the red pixel; otherwise we would need three of the six turns on the green and red pixels, contradicting the fact
that we already need at least four in the yellow locations.
	In order to remain locally optimal, turns on the gray and red pixel thus are impossible; however, these pixels still need to be covered,
	in particular the pixels marked by hollow circles.
	Lemma~\ref{lemma:gg:hard:guard} applied on these pixel forces us to place a turn on the adjacent yellow pixel above or below them, leaving only two further turns for covering the red pixel.
	Lemma~\ref{lemma:gg:hard:stripeven} additionally forbids turns on two pixels to the left and right of the red pixel: We need the ability to make turns outside in order to fulfill it.

	\begin{figure}[h]
		\centering
		%\old{
		\resizebox{0.4\textwidth}{!}{%
			\begin{tikzpicture}
				\newcommand{\tikzsquare}[3]{ \draw[#3] (#1-0.5,#2-0.5) -- (#1+0.5,#2-0.5) -- (#1+0.5,#2+0.5) -- (#1-0.5, #2+0.5) -- (#1-0.5,#2-0.5); }
				\newcommand{\tikzsquares}[5]{ \foreach \a in {#1,...,#3}{ \foreach \b in {#2,...,#4}{ \tikzsquare{\a}{\b}{#5} } } }
				\newcommand{\tikzforcedturn}[2]{ \draw[fill=black] (#1, #2) circle (0.3);}
					\tikzsquares{1}{5}{15}{6}{fill=\fillfieldgray}
					\draw[dashed, ultra thick] (0.5, 4.5) -- (-1, 4.5);
					\draw[dashed, ultra thick] (0.5, 6.5) -- (-1, 6.5);
					\tikzsquares{3}{8}{17}{9}{fill=\fillfieldgray}
					\draw[dashed, ultra thick] (17.5, 7.5) -- (19, 7.5);
					\draw[dashed, ultra thick] (17.5, 9.5) -- (19, 9.5);
					\tikzsquares{5}{3}{6}{13}{fill=\fillfieldgray}
					\draw[dashed, ultra thick] (4.5, 13.5) -- (4.5, 16);
					\draw[dashed, ultra thick] (6.5, 13.5) -- (6.5, 16);
					\tikzsquares{7}{1}{8}{11}{fill=\fillfieldgray}
					\tikzsquares{10}{1}{11}{11}{fill=\fillfieldgray}
					\tikzsquares{12}{3}{13}{13}{fill=\fillfieldgray}
					\draw[dashed, ultra thick] (11.5, 13.5) -- (11.5, 16);
					\draw[dashed, ultra thick] (13.5, 13.5) -- (13.5, 16);
					\tikzsquares{9}{5}{9}{9}{fill=green!80}
					\tikzsquares{5}{7}{13}{7}{fill=green!80}
					\tikzsquare{9}{7}{fill=\fillfieldred}

					%At least one due to even #turns on every max strip
				\tikzsquares{5}{3}{8}{3}{fill=yellow}
				\tikzsquares{5}{11}{8}{11}{fill=yellow}
				\tikzsquares{10}{3}{13}{3}{fill=yellow}
				\tikzsquares{10}{11}{13}{11}{fill=yellow}

					\tikzforcedturn{15}{6}
					\tikzforcedturn{15}{5}
					\tikzforcedturn{3}{8}
					\tikzforcedturn{3}{9}
					\tikzforcedturn{5}{3}
					\tikzforcedturn{7}{1}
					\tikzforcedturn{8}{11}
					\tikzforcedturn{8}{1}
					\tikzforcedturn{10}{1}
					\tikzforcedturn{10}{11}
					\tikzforcedturn{11}{1}
					\tikzforcedturn{13}{3}

					\newcommand{\tikzguard}[2]{\draw[very thick] (#1, #2) circle (0.3);}
					\tikzguard{12}{4}
					\tikzguard{6}{4}
					\tikzguard{11}{10}
					\tikzguard{7}{10}

			\end{tikzpicture}
			}%

			\caption[Necessary turns proof example]{Auxiliary figure for the proof of Lemma~\ref{lemma:gg:hard:green}.}
		\label{fig:gg:hard:green}
	\end{figure}

\end{proof}

%We now can state the lower bound:
\begin{lemma}
	\label{th:gg:hard:lb}
	Given a construction, as described above, for an arbitrary formula with $v$ variables and $c$ clauses.
	Let $k$ be the number of corner pixels in the zig-zagging paths of the variable gadgets (highlighted in blue).
	Then every cycle cover has to have at least $20v+42c+k$ turns.
\end{lemma}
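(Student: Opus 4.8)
The plan is to establish the bound by the ``sum of local lower bounds'' scheme announced at the start of this subsection: I would partition a subset of the pixels of the construction into pairwise vertex-disjoint pieces, bound the number of turns forced inside each piece using Lemma~\ref{lemma:gg:hard:green} (together with the elementary Lemmas~\ref{lemma:gg:hard:stripeven} and~\ref{lemma:gg:hard:guard}), and add these bounds. Since the pieces are vertex-disjoint and each forced turn sits on a pixel of exactly one piece, no turn is counted twice, so the total number of turns in any cycle cover is at least the sum of the local bounds.

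Concretely, I would use three families of pieces. For each of the $v$ variable gadgets, its box of gray pixels decomposes (up to reflection) into two blocks of the shape of Fig.~\ref{fig:gg:hard:partslb:6}, each forcing at least $10$ interior turns by Lemma~\ref{lemma:gg:hard:green}, for a total of $20v$. For each of the $c$ clause gadgets, its three basic blocks have the shapes of Fig.~\ref{fig:gg:hard:partslb:2}, Fig.~\ref{fig:gg:hard:partslb:4}, and Fig.~\ref{fig:gg:hard:partslb:5}, forcing at least $10$, $18$, and $14$ interior turns respectively, for a total of $10+18+14 = 42$ per clause, hence $42c$. Finally, each of the $k$ corner pixels of the blue zig-zagging paths must carry at least one turn, this being the degenerate case of Lemma~\ref{lemma:gg:hard:guard}; these pixels are excluded from all the blocks above. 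Adding up yields the claimed $20v + 42c + k$ (u-turns counted twice throughout).

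The step I expect to be the real work is the disjointness bookkeeping: one has to check, against the global layout of Fig.~\ref{fig:nphard:simple:multiclause}, that the blocks used above do not overlap one another or the zig-zag corner pixels, and that the turn-position constraints in the tight case of Lemma~\ref{lemma:gg:hard:green} (one turn per black dot, the remaining turns in the green area, and the turns just outside the block needed by Lemma~\ref{lemma:gg:hard:stripeven}) can all be satisfied simultaneously without two pieces competing for the same pixel. The ``blue areas'' excluded from the count in Lemma~\ref{lemma:gg:hard:green} are precisely where neighbouring pieces or the zig-zag corner pixels reside, so nothing is lost and nothing is double-counted. Once this is verified, additivity of the bound is immediate and the lemma follows.
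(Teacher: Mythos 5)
Your proposal is correct and follows essentially the same route as the paper: the paper's own proof simply sums the local bounds of Lemma~\ref{lemma:gg:hard:green} over all local parts and notes that there is no interference because each local bound only charges interior turns, which is exactly your disjointness argument. Your explicit accounting ($2\times 10$ per variable gadget, $10+18+14=42$ per clause gadget, plus one turn per blue corner pixel) is the intended instantiation of that sum.
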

\begin{proof}
	There is no interference between these local bounds, because in the
	proof we did not charge for exterior (not inside considered local part) turns
	made by possible cycles.  
	Thus, the claim follows by summing over the local bounds for all local parts using Lemma~\ref{lemma:gg:hard:green}.
\end{proof}

%----------------------------------------------------------------------------------------
% Upper Bound for Satisfiable 
%----------------------------------------------------------------------------------------
\subsubsection{An Upper Bound for Satisfiable Formulas}

We can construct a solution that matches the lower bound given in Lemma~\ref{th:gg:hard:lb} if the corresponding formula of the grid-graph construction is satisfiable.

\begin{lemma}
	\label{th:gg:hard:satub}
	Given a construction for an arbitrary formula with $v$ variables and $c$ clauses, as described above.
	Let $k$ be the number of corner pixels in the zig-zagging paths of the variable gadgets (highlighted in blue).
	If the formula is satisfiable, there exists a cycle cover with $20v+42c+k$ turns.
\end{lemma}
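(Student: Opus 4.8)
The plan is to start from a truth assignment $\phi$ witnessing that the formula is one-in-three satisfiable, and to convert it, region by region, into a cycle cover whose turns can be charged exactly against the local lower bounds of Lemma~\ref{lemma:gg:hard:green} (plus the $k$ unavoidable turns at the zig-zag corners), so that no turn is wasted. Together with Lemma~\ref{th:gg:hard:lb} this pins the optimum at $20v+42c+k$ for satisfiable instances, but here only the upper bound is needed, i.e.\ it suffices to exhibit one such cover.

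First I would fix the global skeleton. Place the green cycles around every gray box: these are forced, and by the analysis accompanying Fig.~\ref{fig:gg:hard:partslb} they attain the per-part bounds. Then route the variable cycle of each $x_i$ through the upper lane if $\phi(x_i)=\texttt{true}$ and through the lower lane otherwise. Such a cycle traverses the zig-zagging blue path exactly once, contributing precisely the corner turns of that path (these sum to $k$ over all variables, since a double traversal of the zig-zag is ruled out by cost), and then closes up through the gray box, making one turn at each black dot of the two variable parts of type Fig.~\ref{fig:gg:hard:partslb:6} and the two permitted turns in each green area it passes; this accounts for the $20v=(10+10)v$ term.

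Second, the clause gadgets, where the one-in-three property does the work: in each clause $x_1+x_2+x_3=1$ exactly one literal is \texttt{true}, so we are always in one of the three configurations described after Fig.~\ref{fig:gg:hard:triple}. If the satisfied literal is the middle one, the three crucial pixels lie on lanes that are actually used, hence are covered by the three variable cycles (two turns each, in the green areas of the corresponding parts). If the satisfied literal is a side one, then two of the crucial pixels fall on lanes that are not used; I cover those two with the single interior $4$-turn cycle of Fig.~\ref{fig:gg:hard:triple} (again two turns per crucial pixel), while the third crucial pixel is covered by a variable cycle. In either case the crucial pixels of the clause cost exactly six turns, all in the green areas of the three crucial-pixel parts (Figs.~\ref{fig:gg:hard:partslb:2}, \ref{fig:gg:hard:partslb:4}, \ref{fig:gg:hard:partslb:5}), and the remaining turns of those parts sit at their black dots, for a total of $18+14+10=42$ turns per clause.

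Finally I would verify feasibility — every blue pixel and every lane pixel is on a variable cycle, every gray pixel is on a green cycle (double coverage by the occasional extra $4$-turn cycle is harmless for a cover), and every crucial pixel is covered by the clause case analysis above — and then check that all turns of the resulting cover have already been counted: each turn sits at a black-dot position of some part, in a green area of some part, at a zig-zag corner, or in one of the at most $c$ extra interior $4$-turn cycles, while the ``connector'' corridors of Fig.~\ref{fig:gg:nphard:basic} between gadgets are run straight. Summing gives exactly $20v+42c+k$. I expect the main obstacle to be precisely this last bookkeeping across gadget boundaries: one must be sure the lane routings of the several variable cycles can be threaded simultaneously through each clause gadget and that all separate cycles close up without forcing any turn outside the charged regions, and that the two clause configurations (middle-literal-\texttt{true} versus side-literal-\texttt{true}) both come out at exactly $42$; the rest is a direct addition against Lemma~\ref{lemma:gg:hard:green}.
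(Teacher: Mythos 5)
Your construction is the same as the paper's: fix the green four-turn cycles at the black dots, route each variable cycle along the lane given by a satisfying one-in-three assignment (covering the crucial pixels on that lane at two turns each), cover the remaining zero or two adjacent crucial pixels per clause with an interior four-turn cycle, and charge everything against the local bounds of Lemma~\ref{lemma:gg:hard:green}. The accounting ($20v$ from the variable ends, $42c$ from the clause parts, $k$ from the zig-zag corners) matches the paper's, so the proposal is correct and essentially identical in approach.
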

\begin{proof}
	It is straightforward to deduce this from the description of the construction and Fig.~\ref{fig:nphard:simple:multiclause}, so
	we only sketch the details. 
	Cover the gray pixels using only simple green 4-turn cycles on the explicit turns (black dots), as shown in Fig.~\ref{fig:gg:hard:partslb}.
	Select an arbitrary but fixed satisfiable solution for the formula.
	If a variable assignment is {\tt true}, add a cycle that traverses all blue pixels of the variable and all upper crucial pixels of the variable with a minimum number of turns. 
	This cycle has a turn at every blue corner pixel belonging to the variable, two turns per crucial pixel, and two further turns at each end to connect to the blue pixels.
	If a variable assignment is {\tt false}, proceed analogously, but for the lower row of crucial pixels belonging to this variable.
	As the assignment is feasible, there are either none or exactly two remaining crucial red pixels per clause gadget.
	If there are two, they are vertically adjacent and we can cover them by a simple blue 4-turn cycle.
	As Lemma~\ref{lemma:gg:hard:green} leaves us with exactly $2$ turns to select per crucial pixel, this matches the lower bound.
\end{proof}

%----------------------------------------------------------------------------------------
% Lower Bound for Unsatisfiable 
%----------------------------------------------------------------------------------------
\subsubsection{Lower Bound for Unsatisfiable Formulas}

If the formula corresponding to the grid graph construction is not satisfiable, we can deduce that the number of turns given by Lemma~\ref{th:gg:hard:lb} is not sufficient for a full coverage.
We first show that in a cycle cover that matches the lower bound, we can separate a cycle cover for the gray pixels from the rest.

\begin{lemma}
	\label{lemma:gg:hard:separategray}
	Given an optimal cycle cover that matches the lower bound of Lemma~\ref{th:gg:hard:lb}, 
	then we can modify the solution (without increasing the cost) such that it contains a cycle cover of exactly the gray pixel (the green cycles in Fig.~\ref{fig:nphard:simple:multiclause}).
	%Then we can modify the solution such that the gray pixels are covered by a set of cycles that uses exactly the black dotted pixels in Fig.~\ref{fig:gg:hard:partslb}.
\end{lemma}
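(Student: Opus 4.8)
The plan is to cash in on the extreme rigidity that attaining the bound of Lemma~\ref{th:gg:hard:lb} forces on the cover. First I would observe that the global quantity $20v+42c+k$ is, by construction, nothing but the sum of the local lower bounds of Lemma~\ref{lemma:gg:hard:green} over the pairwise disjoint parts of Fig.~\ref{fig:gg:hard:partslb} that together tile all gray pixels of the construction of Fig.~\ref{fig:nphard:simple:multiclause} (the proof of Lemma~\ref{th:gg:hard:lb} already records that these local counts do not interfere). Hence a cover meeting the global bound must meet every single local bound with equality. Feeding this into the tightness part of Lemma~\ref{lemma:gg:hard:green} pins down every turn that touches a gray pixel: in each part there is exactly one turn at each black dot, the only two remaining turns of the part lie on the green cross, and — as established inside the proof of Lemma~\ref{lemma:gg:hard:green} — there is no turn on the red crucial pixel.

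Next I would read off the forced shape of the cover around the gray region. Every gray pixel that is not a black dot carries no turn, so a cycle runs straight through it; combined with the degree-two property of a cycle cover and with Lemmas~\ref{lemma:gg:hard:stripeven} and~\ref{lemma:gg:hard:guard}, this determines the cover inside each maximal (width-two) gray strip up to essentially one degree of freedom. In particular, no external cycle (one used to cover blue or red pixels) may cross the interior of a gray strip transversally: that would split the strip, and re-closing the two leftover gray sub-strips within the part would require a turn on a non-black-dot gray pixel, contradicting local tightness. The upshot is that the cover restricted to the closed neighborhood of the gray pixels must be the collection of green perimeter cycles of the gray strips, with the sole caveat that a rail of such a cycle may take a two-turn detour onto one or two green pixels of the cross — and those two turns are exactly the ones Lemma~\ref{lemma:gg:hard:green} charges to the green area of that part.

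The modification step then neutralizes those detours. For each part whose green rail detours into the green cross, I would straighten the rail back onto its strip (deleting the two detour turns) and then re-cover the vacated green pixel — and, if needed, re-thread the red pixel by an S-/U-turn as in Fig.~\ref{fig:gg:hard:coverred} — and reconnect the affected cycle using a fresh pair of turns placed on the green cross, i.e.\ precisely the two-turn allowance that part was already granted; the symmetric surgery handles the case where the green-area turns belonged to an external through-path covering the red pixel rather than to a rail detour. Every such exchange is local, introduces and removes no turn outside its own part, and keeps the total at $20v+42c+k$, so the cost never increases. Carrying it out for all parts leaves the segments of the cover incident to gray pixels equal to the green perimeter cycles, which form a cycle cover of exactly the gray pixels, as claimed.

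I expect the real work to sit in the middle step: a finite but fiddly case distinction over the legal ways to spend the two free green-area turns of a part (a tiny separate cycle around the red pixel; a rail detour; a through-path covering the red pixel), checking in each case that nothing else can simultaneously respect the local turn bound and cover every pixel, and that each admissible configuration rewrites into ``green cycles plus a cheap cover of the green/red pixels'' at no extra cost. Global consistency of these rewrites is not a concern, since — exactly as in the proof of Lemma~\ref{lemma:gg:hard:green} — all arguments are local and never charge for turns made outside the part under consideration.
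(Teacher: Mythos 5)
Your proposal follows essentially the same route as the paper's proof: exploit tightness of the local bounds of Lemma~\ref{lemma:gg:hard:green} to pin every turn to a black dot or to the green area, then perform a local exchange (the paper's Fig.~\ref{fig:gg:hard:unmerge}) that connects the two forced green-area turns to each other and the two black-dot turns to each other, separating the green perimeter cycles at no extra cost. The only cosmetic difference is that you phrase the surgery as deleting and re-inserting turns, whereas the paper keeps all turn positions fixed (Lemma~\ref{lemma:gg:hard:stripeven} forces the green-pixel turns to align) and merely rewires the connections between them.
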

\begin{proof}
	The cycle cover matches the lower bound of Lemma~\ref{th:gg:hard:lb} and hence has to fulfill the restriction on the position of turns as in Lemma~\ref{lemma:gg:hard:green}.
	We not only know that there has to be a turn at the exact same locations as the green cycles in Fig.~\ref{fig:nphard:simple:multiclause}, marked by black dots in Fig.~\ref{fig:gg:hard:partslb}, but also that these turns have to be exactly the same simply due to a lack of alternatives: if there are only potential partner turns in two directions, we can only make a turn between these two (u-turns are already prohibited by Lemma~\ref{th:gg:hard:lb}).
	If two such partnered turns are not directly connected, we known that there are exactly two turns on pixel that are highlighted in green in Fig.~\ref{fig:gg:hard:partslb} in between them.
	We can simply connect these two turns on green pixels directly as well as the two turns on black dots, as shown in Fig.~\ref{fig:gg:hard:unmerge}.
	If the lower bound is matched, we can therefore always separate the green cycles as in Fig.~\ref{fig:nphard:simple:multiclause}.
	
	%Due to the strict limitations of Lemma~\ref{lemma:gg:hard:green}, the only alternative turns are in the green areas.
	%However, there is the restriction that we can only use one turn in one coherent green area (Lemma~\ref{lemma:gg:hard:guard}) and further every turn on a green pixel needs a horizontal and a vertical partner turn on a green pixel to fulfill Lemma~\ref{lemma:gg:hard:stripeven}.
	%This results in two turns in green pixels only being encapsulated \todoi{Was??} by two turns on black dotted pixels.
	%As the possible turns reachable from a black dotted pixel are strictly limited, we only have the base case (two green pixels between two black dots) as in Fig.~\ref{fig:gg:hard:unmerge}, for which we can simply split the cycle into two.
	%This is sufficient, because for a cycle to have turns on the green
	%pixels and the black dots, there has to be a subsequent ``green'' turn after a
	%``black dot'' turn, which we can all eliminate by the previous
	%argument.  Note that the part as in Fig.~\ref{fig:gg:hard:partslb:6} is
	%slightly different with respect to Lemma~\ref{lemma:gg:hard:stripeven} due to
	%an additional turn in the blue area, but the argument works analogously.
	\begin{figure}[!ht]
		\centering
		\resizebox{0.9\textwidth}{!}{%
			\begin{tikzpicture}
				\newcommand{\tikzsquare}[3]{ \draw[#3] (#1-0.5,#2-0.5) -- (#1+0.5,#2-0.5) -- (#1+0.5,#2+0.5) -- (#1-0.5, #2+0.5) -- (#1-0.5,#2-0.5); }
				\newcommand{\tikzsquares}[5]{ \foreach \a in {#1,...,#3}{ \foreach \b in {#2,...,#4}{ \tikzsquare{\a}{\b}{#5} } } }
				\tikzsquares{0}{1}{5}{2}{fill=\fillfieldgray}
				\tikzsquares{8}{1}{13}{2}{fill=\fillfieldgray}
				\draw[dashed] (5.5, 2.5) -- (7.5, 2.5);
				\draw[dashed] (5.5, 0.5) -- (7.5, 0.5);
				\draw[dashed] (5.5, 1.5) -- (7.5, 1.5);
				\tikzsquares{3}{1}{3}{2}{fill=green!40}
				\tikzsquares{10}{1}{10}{2}{fill=green!40}
				\draw[dashed] (1.5, 2.5) -- (1.5, 3.5);
				\draw[dashed] (2.5, 2.5) -- (2.5, 3.5);
				\draw[dashed] (3.5, 2.5) -- (3.5, 3.5);
				\draw[dashed] (4.5, 2.5) -- (4.5, 3.5);
				\draw[dashed] (1.5, 0.5) -- (1.5, -0.5);
				\draw[dashed] (2.5, 0.5) -- (2.5, -0.5);
				\draw[dashed] (3.5, 0.5) -- (3.5, -0.5);
				\draw[dashed] (4.5, 0.5) -- (4.5, -0.5);
				\draw[dashed] (8.5, 2.5) -- (8.5, 3.5);
				\draw[dashed] (9.5, 2.5) -- (9.5, 3.5);
				\draw[dashed] (10.5, 2.5) -- (10.5, 3.5);
				\draw[dashed] (11.5, 2.5) -- (11.5, 3.5);
				\draw[dashed] (8.5, 0.5) -- (8.5, -0.5);
				\draw[dashed] (9.5, 0.5) -- (9.5, -0.5);
				\draw[dashed] (10.5, 0.5) -- (10.5, -0.5);
				\draw[dashed] (11.5, 0.5) -- (11.5, -0.5);
				\draw[fill=black] (0, 1) circle (0.3);
				\draw[fill=black] (0, 2) circle (0.3);
				\draw[fill=black] (13, 1) circle (0.3);
				\draw[fill=black] (13, 2) circle (0.3);
				\draw[red, very thick] (3, 3) -- (3,2.1) -- (13, 2.1) -- (13, 1) -- (0, 1) -- (0, 1.9) -- (10, 1.9) -- (10, 3);

				\draw[thick] (14, 1.5) -- (16,1.5);
				\draw[thick] (15, 2) -- (16,1.5) -- (15, 1);

				\tikzsquares{17}{1}{22}{2}{fill=\fillfieldgray}
				\tikzsquares{25}{1}{30}{2}{fill=\fillfieldgray}
				\draw[dashed] (22.5, 2.5) -- (24.5, 2.5);
				\draw[dashed] (22.5, 0.5) -- (24.5, 0.5);
				\draw[dashed] (22.5, 1.5) -- (24.5, 1.5);
				\tikzsquares{20}{1}{20}{2}{fill=green!40}
				\tikzsquares{27}{1}{27}{2}{fill=green!40}
				\draw[dashed] (18.5, 2.5) -- (18.5, 3.5);
				\draw[dashed] (19.5, 2.5) -- (19.5, 3.5);
				\draw[dashed] (20.5, 2.5) -- (20.5, 3.5);
				\draw[dashed] (21.5, 2.5) -- (21.5, 3.5);
				\draw[dashed] (18.5, 0.5) -- (18.5, -0.5);
				\draw[dashed] (19.5, 0.5) -- (19.5, -0.5);
				\draw[dashed] (20.5, 0.5) -- (20.5, -0.5);
				\draw[dashed] (21.5, 0.5) -- (21.5, -0.5);
				\draw[dashed] (25.5, 2.5) -- (25.5, 3.5);
				\draw[dashed] (26.5, 2.5) -- (26.5, 3.5);
				\draw[dashed] (27.5, 2.5) -- (27.5, 3.5);
				\draw[dashed] (28.5, 2.5) -- (28.5, 3.5);
				\draw[dashed] (25.5, 0.5) -- (25.5, -0.5);
				\draw[dashed] (26.5, 0.5) -- (26.5, -0.5);
				\draw[dashed] (27.5, 0.5) -- (27.5, -0.5);
				\draw[dashed] (28.5, 0.5) -- (28.5, -0.5);
				\draw[fill=black] (17, 1) circle (0.3);
				\draw[fill=black] (17, 2) circle (0.3);
				\draw[fill=black] (30, 1) circle (0.3);
			\draw[fill=black] (30, 2) circle (0.3);
			\draw[green, ultra thick] (17,1) -- (17,1.9) -- (30, 1.9) -- (30, 1) -- cycle;
			\draw[blue, ultra thick] (20,3) -- (20, 2.1) -- (27, 2.1) -- (27, 3);
		\end{tikzpicture}
	}%
	\caption[Separating cycles]{Lemma~\ref{lemma:gg:hard:stripeven} enforces the turns on the green pixels to be at the same $y$-coordinates. The cycle part can easily be separated to a cycle that covers the gray pixels and a connection between the turns in the green pixels.}
	\label{fig:gg:hard:unmerge}
\end{figure}
\end{proof}

This allows us to only concentrate on covering the crucial and the blue pixels, because all gray pixels are already covered by the ``black dot'' cycles.
We now show that every cycle cover that matches the lower bound results in a variable assignment (the cycle that covers the blue pixels either selects the {\tt true} line or the {\tt false} line and does not switch in between).

\begin{lemma}
	\label{lemma:gg:hard:variableline}
	In a cycle cover that matches the lower bound of Lemma~\ref{th:gg:hard:lb}, the cycle that covers the blue pixels of a variable also covers either all crucial pixels belonging to a {\tt true} assignment of this variable and no other crucial pixel or all crucial pixels belonging to a {\tt false} assignment of this variable and no other crucial pixel.
\end{lemma}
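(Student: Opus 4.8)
The plan is to build on Lemma~\ref{lemma:gg:hard:separategray}: given an optimal cycle cover $\mathcal{C}$ matching the bound of Lemma~\ref{th:gg:hard:lb}, we may assume that $\mathcal{C}$ contains the green cycles covering \emph{exactly} the gray pixels, so that every remaining cycle of $\mathcal{C}$ covers only blue and crucial (red) pixels. The turn budget left for these cycles is, by Lemma~\ref{lemma:gg:hard:green}, extremely tight: one turn at each blue corner pixel (and no further turn on the zig-zagging path, since u-turns are ruled out once the bound of Lemma~\ref{th:gg:hard:lb} is met, and a double traversal of the zig-zag is too expensive), exactly two turns in the green area around each crucial pixel, plus the constant number of ``connector'' turns at the two ends of each blue path. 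In particular, no cycle of $\mathcal{C}$ other than the green ones may turn on a gray pixel, because all turns the tight bound permits on a part containing gray pixels are already spent at the black dots of Fig.~\ref{fig:gg:hard:partslb}.

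Fix a variable $v$. First I would show that the blue pixels of $v$ are covered by a single cycle $C_v$: since u-turns are forbidden and the zig-zag cannot be doubly covered within budget, the blue path is traversed monotonically by one cycle, which therefore turns exactly once at each blue corner and must be closed up through the variable gadget. Next, using the ``no turn on gray'' observation together with the fact (Fig.~\ref{fig:gg:hard:coverred}) that the only way to cover a crucial pixel with the two available turns is an orientation-preserving S- or U-turn, I would argue that once $C_v$ leaves the blue path it runs in straight segments along the gray lanes of the variable gadget, ``wiggling'' only locally at a crucial pixel it chooses to cover, and using its connector turns only to reverse inside a width-two lane. Hence $C_v$ commits to exactly one of the two lanes of the variable gadget --- the upper (\texttt{true}) or the lower (\texttt{false}) one.

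It then remains to check two things. That $C_v$ covers \emph{all} crucial pixels on its chosen lane follows because each such pixel lies on the straight gray strip that $C_v$ must traverse to close up, so $C_v$ passes through it (hence covers it) whether or not it turns there. That $C_v$ covers \emph{no other} crucial pixel follows because any crucial pixel not on a lane of $v$ sits inside a different gray region, and reaching it from $C_v$'s route would force a turn on a gray pixel, which the tight bound forbids; the same argument forbids $C_v$ from switching between the two lanes of $v$ mid-route. Combining these facts, $C_v$ covers exactly the crucial pixels of one consistent assignment of $v$.

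The main obstacle is the middle step: turning the informal statement ``$C_v$ can only go straight on gray pixels and must reverse inside a lane'' into a rigorous finite case analysis. This requires enumerating, with the help of the turn-position constraints of Lemma~\ref{lemma:gg:hard:green} and the geometry of Figs.~\ref{fig:rawvariable}, \ref{fig:gg:nphard:basic} and~\ref{fig:gg:hard:triple}, every way the two ends of the blue path can be joined through the variable and clause gadgets, and ruling out all routes that cross between lanes or skip a crucial pixel on the chosen lane. Once this enumeration is in place, the lemma follows immediately.
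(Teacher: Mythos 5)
Your overall skeleton matches the paper's: peel off the green cycles with Lemma~\ref{lemma:gg:hard:separategray}, observe that the tight budget of Lemma~\ref{th:gg:hard:lb} together with Lemma~\ref{lemma:gg:hard:green} confines all remaining turns to the black dots and to at most two turns per green area, conclude that the blue zig-zag is traversed exactly once by a single cycle that must close up through the gadget, and then argue that this cycle commits to one lane. However, the mechanism you offer for the decisive step --- that switching lanes or reaching a foreign crucial pixel ``would force a turn on a gray pixel, which the tight bound forbids'' --- does not hold up. The two permitted turns per crucial pixel live in the green cross around it, and an orientation-preserving S-transition (Fig.~\ref{fig:gg:hard:coverred}, left) uses exactly those two green turns to move the cycle from one horizontal strip of the element to the other. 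So ``no turns on gray'' alone does not prevent a cycle from wandering between strips; something more is needed to pin down \emph{which} of the locally feasible two-turn continuations the cycle actually takes, and hence which crucial pixels it collects.

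The missing idea is the parity argument of Lemma~\ref{lemma:gg:hard:stripeven}: every full strip must carry an even number of turns, and under the tight bound each strip through a green area admits exactly \emph{one} candidate position for the partner of a given turn. This uniqueness is what propagates the cycle deterministically --- each turn forces the location of the next one, so once the cycle leaves the blue path on (say) the upper lane, the chain of forced partner turns drags it through every crucial pixel on that lane and nowhere else until it closes (Fig.~\ref{fig:gg:hard:sticktovariableline}). Your proposal instead defers this to an unspecified ``finite case analysis'' of all routes through the gadgets, which you explicitly flag as the main obstacle; that enumeration is precisely what the partner-turn argument replaces, so as written the proof of the ``commits to one lane and covers exactly its crucial pixels'' claim is not yet there.
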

\begin{proof}
	%Blue can be seen as path
	It is impossible to cover a blue pixel twice without exceeding the lower bound, because we only have free turns in the green pixels. Therefore, the ``variable'' cycle has to go through the gray and crucial pixels.
	We are only allowed to make turns in the green pixels as marked in Fig.~\ref{fig:gg:hard:partslb} and only at most one in every connected green area.
	The variable cycle can take the upper or the lower path (Fig.~\ref{fig:gg:hard:partslb:6}), but cannot switch between them, as there is only one possible ``partner'' turn to achieve an even number of turns on every strip.
	The only possible choice for a partner turn is the next turn of the cycle after applying Lemma~\ref{lemma:gg:hard:separategray}. 
	This forces the path to continue until it closes the cycle.
	See Fig.~\ref{fig:gg:hard:sticktovariableline} for an illustration.
	\begin{figure}[h]
		\newcommand{\tikzsquare}[3]{ \draw[#3] (#1-0.5,#2-0.5) -- (#1+0.5,#2-0.5) -- (#1+0.5,#2+0.5) -- (#1-0.5, #2+0.5) -- (#1-0.5,#2-0.5); }
		\newcommand{\tikzsquares}[5]{ \foreach \a in {#1,...,#3}{ \foreach \b in {#2,...,#4}{ \tikzsquare{\a}{\b}{#5} } } }
		\centering
		%\old{
		\resizebox{0.7\textwidth}{!}{%
			\begin{tikzpicture}
				\tikzsquares{1}{1}{1}{10}{fill=\fillfieldblue}
				\tikzsquares{2}{10}{4}{10}{fill=\fillfieldblue}
				\tikzsquares{2}{1}{4}{1}{fill=\fillfieldblue}
				\tikzsquares{4}{2}{7}{2}{fill=\fillfieldblue};
				\tikzsquares{7}{1}{10}{1}{fill=\fillfieldblue}
				\tikzsquares{10}{2}{11}{2}{fill=\fillfieldblue}

				\tikzsquares{5}{4}{6}{16}{fill=\fillfieldgray}
				\tikzsquares{3}{13}{24}{14}{fill=\fillfieldgray}
				\tikzsquares{3}{6}{11}{7}{fill=\fillfieldgray}

				\tikzsquares{16}{16}{39}{17}{fill=\fillfieldgray}
				\tikzsquares{18}{11}{19}{19}{fill=\fillfieldgray}
				\tikzsquares{21}{11}{22}{19}{fill=\fillfieldgray}
				\tikzsquares{31}{13}{50}{14}{fill=\fillfieldgray}
				\tikzsquares{37}{11}{36}{19}{fill=\fillfieldgray}
				\tikzsquares{34}{11}{33}{19}{fill=\fillfieldgray}
				\tikzsquares{22}{15}{18}{15}{fill=green!60}
				\tikzsquares{20}{13}{20}{17}{fill=green!60}
				\tikzsquares{33}{15}{37}{15}{fill=green!60}
				\tikzsquares{35}{13}{35}{17}{fill=green!60}
				\tikzsquare{20}{15}{fill=\fillfieldred}
				\tikzsquare{35}{15}{fill=\fillfieldred}

				\draw[dashed] (50.5, 12.5) -- (52, 12.5);
				\draw[dashed] (50.5, 14.5) -- (52, 14.5);

				\draw[dashed] (11.5, 7.5) -- (13, 7.5);
				\draw[dashed] (11.5, 5.5) -- (13, 5.5);

				\draw[dashed] (11.5, 2.5) -- (13, 2.5);
				\draw[dashed] (11.5, 1.5) -- (13, 1.5);

				\tikzsquares{5}{13}{6}{14}{fill=green!60}
				\tikzsquares{5}{10}{6}{10}{fill=green!60}
				\tikzsquares{5}{6}{6}{7}{fill=green!60}

			\end{tikzpicture}
		}%
		%}%end old
		%\includegraphics[width=.75\columnwidth]{./figures/fig15_cycleup}
		\caption[Covering crucial pixels 1]{If the cycle uses the upper path, Lemma~\ref{lemma:gg:hard:stripeven} forces us to cover exactly all crucial pixels along this way, as we have only one choice for the position of the ``partner'' turn.}
		\label{fig:gg:hard:sticktovariableline}
	\end{figure}
\end{proof}

Now only the crucial pixels not covered by the variable assignment cycles are left.
We have only two turns for each of them, hence we need at least two still uncovered crucial pixels in any additional cycle; this cycle must have at most twice the number of turns than the number of its newly covered crucial pixels.
This, however, is impossible for unsatisfied clauses, so we must exceed the lower bound.

\begin{lemma}
	\label{th:gg:hard:unsatub}
	Given a 2-dimensional grid graph produced by the above procedure for an arbitrary \emph{One-in-three 3SAT} formula.
	Let $\mathcal{C}$ be a set of cycles with minimum turn costs.
	Let $v$ be the number of variables in the corresponding formula, $c$ the number of clauses, and $k$ be the number of blue corner pixels.
	If the formula is not satisfiable, the cycle cover has to have at least $20v+42c+k+1$ turns.
\end{lemma}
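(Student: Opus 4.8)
\emph{Proof plan.}
The plan is to argue by contradiction: suppose there is a cycle cover $\mathcal C$ with exactly $20v+42c+k$ turns (by Lemma~\ref{th:gg:hard:lb} it cannot have fewer). First I would normalize $\mathcal C$: by Lemma~\ref{lemma:gg:hard:separategray} we may assume it contains the ``black-dot'' green cycles that cover exactly the gray pixels, so the remaining cycles cover exactly the blue (zig-zag) and red (crucial) pixels, and by Lemma~\ref{lemma:gg:hard:green} every clause connector element has precisely one turn on each black dot and exactly two further turns inside the green cross around its crucial pixel, with no turn on the crucial pixel itself and none one step to either side of it. Then, by Lemma~\ref{lemma:gg:hard:variableline}, the cycle covering the blue pixels of a variable runs entirely along its {\tt true} lane or entirely along its {\tt false} lane; record the induced assignment $\alpha$. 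What is left to cover in each clause gadget is then exactly the set $U_C$ of its crucial pixels that do not lie on a traversed lane.

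Next I would analyse a single clause gadget. Using Lemmas~\ref{lemma:gg:hard:stripeven}, \ref{lemma:gg:hard:guard} and \ref{lemma:gg:hard:green}, the tight turn budget forces every remaining cycle that touches the gadget into one of the five candidate shapes identified for Fig.~\ref{fig:gg:hard:triple}: a \emph{red exterior cycle} running through a single crucial pixel with its two budgeted turns (which can close up only by using the corresponding variable's lane, hence only when that variable is set so that the pixel is indeed uncovered), or a \emph{blue interior $4$-turn cycle} running through two vertically adjacent crucial pixels and using their four budgeted turns jointly; any other routing would either make a third turn inside the crosses, make a turn on a crucial pixel, or need an extra turn outside in order to close --- and no such extra turn is available, since all other local parts are already tight. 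It follows that the crucial pixels of $C$ split into those covered by variable cycles and those covered by extra cycles exactly according to one of the three admissible combinations of Fig.~\ref{fig:nphard:simple:singleclause}; a routine check then shows that these three combinations are precisely the three $1$-in-$3$ satisfying patterns of $C$ under $\alpha$.

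Finally, since the formula is unsatisfiable, $\alpha$ violates some clause $C$, so the crucial pixels of $C$ together with the lanes traversed by $\mathcal C$ match none of the three admissible combinations; hence $\mathcal C$ cannot cover all crucial pixels of $C$ within the budget granted by Lemma~\ref{lemma:gg:hard:green}, contradicting the assumption that $\mathcal C$ meets the lower bound. Therefore every cycle cover has at least $20v+42c+k+1$ turns. I expect the main obstacle to be the per-clause analysis of the second paragraph --- proving that a budget-tight cover can only use the five candidate cover cycles and that they can be combined only in the three admissible ways. This is where the rigidity supplied by Lemmas~\ref{lemma:gg:hard:stripeven}--\ref{lemma:gg:hard:green} is essential: in particular, the observation that an exterior cover cycle can close up only through a variable lane that the corresponding variable cycle is not using is what ties the cover-cycle choices back to $\alpha$ and pins the admissible configurations to the $1$-in-$3$ structure.
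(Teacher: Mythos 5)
Your proposal follows essentially the same route as the paper's proof: assume a cover meeting the bound of Lemma~\ref{th:gg:hard:lb}, separate off the gray-covering cycles via Lemma~\ref{lemma:gg:hard:separategray}, use Lemma~\ref{lemma:gg:hard:variableline} to extract a variable assignment, and then argue that the remaining crucial pixels can only be covered by the five candidate cycles whose admissible combinations encode the 1-in-3 constraint, contradicting unsatisfiability. Your write-up is in fact somewhat more explicit than the paper's (which compresses the per-clause case analysis into a reference to ``the logic described in the construction''), but the argument and its reliance on Lemmas~\ref{lemma:gg:hard:stripeven}--\ref{lemma:gg:hard:green} are the same.
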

\begin{proof}
	%TODO: Make this better. Only a quick short description because the old one was too long.
	Assume there exists a cycle cover with only $20v+42c+k$ turns (matching the lower bound of Lemma~\ref{th:gg:hard:lb}).
	We now show that this number of turns cannot suffice to cover all pixels.
	By separating the cycles on the black dotted pixels using Lemma~\ref{lemma:gg:hard:separategray}, only $4v+18c+k$ turns remain for the red and blue pixels.

	Due to Lemma~\ref{lemma:gg:hard:variableline} we have to cover one line of crucial pixels per variable by a blue variable assignment cycle.
	The remaining crucial pixels can only be covered by the red cycles known from the construction.
	However, the use of red cycles in a way that no crucial pixel is covered multiple times (which would exceed the budget) enforces the logic described in the construction.
	A valid selection of red cycles is therefore only possible if the corresponding variable assignment is feasible.
	Because this is not the case, $20v+42c+k$ turns cannot be sufficient.
\end{proof}

%----------------------------------------------------------------------------------------
% NP-Hardness Conclusion
%----------------------------------------------------------------------------------------
\subsubsection{NP-Hardness}
Finally, we can state the main theorem: the bound on the number of turns can only be met if the formula is satisfiable.
\begin{theorem}
	Given a construction, as described above, for an arbitrary formula with $v$ variables and $c$ clauses.
	Let $k$ be the number of blue highlighted corner pixels.
	Then the formula is satisfiable if and only if there exists a cycle cover with $20v+42c+k$ turns.
\end{theorem}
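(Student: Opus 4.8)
The plan is to obtain the equivalence as a direct assembly of the three quantitative statements proved above, and then to note that it yields Theorem~\ref{th:gg:hard:nphard}.

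First I would record that, by Lemma~\ref{th:gg:hard:lb}, every cycle cover of the grid graph built from the formula has at least $20v+42c+k$ turns. Hence ``there exists a cycle cover with $20v+42c+k$ turns'' is the same statement as ``the minimum turn number of the construction equals $20v+42c+k$''; in particular it suffices to decide whether this universal lower bound is attained. For the ``only if'' direction, suppose the formula is one-in-three satisfiable. Then Lemma~\ref{th:gg:hard:satub} explicitly builds a cycle cover with exactly $20v+42c+k$ turns (green $4$-turn cycles on the black dots, one variable cycle per variable following the chosen {\tt true}/{\tt false} lane, and one blue $4$-turn cycle for each clause that still has two uncovered crucial pixels), so this direction is immediate.

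For the ``if'' direction I would argue by contraposition: if the formula is \emph{not} satisfiable, Lemma~\ref{th:gg:hard:unsatub} shows that every cycle cover needs at least $20v+42c+k+1$ turns, so no cycle cover with only $20v+42c+k$ turns can exist; equivalently, the existence of such a cover forces satisfiability. Combining the two implications gives the claimed equivalence. Finally, since the construction of Figs.~\ref{fig:rawvariable}--\ref{fig:nphard:simple:multiclause} has size polynomial in $v$ and $c$, is computable in polynomial time, and the threshold $20v+42c+k$ can be read off directly from it (note that $k$ is determined by the formula), the map from a One-in-three 3SAT formula to the pair (grid graph, threshold $20v+42c+k$) is a polynomial-time many-one reduction, which together with the equivalence establishes Theorem~\ref{th:gg:hard:nphard}.

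Regarding difficulty: the substantive work has already been done, in Lemmas~\ref{lemma:gg:hard:green}, \ref{lemma:gg:hard:separategray} and \ref{lemma:gg:hard:variableline} --- the tight local turn bounds, the separability of the green cycles, and the fact that a cost-matching variable cycle cannot switch lanes. The only point that needs care at this stage is bookkeeping: one must invoke the three lemmas for the \emph{same} construction with the \emph{same} parameters $v$, $c$, $k$, so that the bound $20v+42c+k$ from Lemma~\ref{th:gg:hard:lb} coincides with the value produced by Lemma~\ref{th:gg:hard:satub} and is strictly below the value forced by Lemma~\ref{th:gg:hard:unsatub}; given that, the equivalence closes with no further geometric argument.
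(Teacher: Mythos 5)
Your proposal is correct and follows exactly the paper's own route: the "only if" direction from Lemma~\ref{th:gg:hard:satub} and the "if" direction by contraposition from Lemma~\ref{th:gg:hard:unsatub}, with the polynomial-size reduction remark completing the hardness claim. The paper's proof is just a one-line citation of these two lemmas, so your slightly more explicit write-up adds no new ideas but is the same argument.
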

\begin{proof}
	This follows from Lemma~\ref{th:gg:hard:satub} and Lemma~\ref{th:gg:hard:unsatub}.
\end{proof}
This concludes the proof of the NP-hardness of full coverage cycle cover (Theorem~\ref{th:gg:hard:nphard}).

\subsection{Subset Coverage in Thin Grid Graphs}
For subset cover we can also show hardness for {\em thin} grid graphs.
Arkin et al.~\cite{arkin2001optimal,arkin2005optimal}
exploits the structure of these graphs to compute an optimal minimum-turn cycle cover in polynomial time.
If we only have to cover a subset of the vertices, the problem becomes NP-hard again. 
The proof is inspired by the construction of Aggarwal et al.~\cite{aggarwal2000angular} for the angular-metric cycle cover problem and significantly simpler than the one for full coverage.

\begin{theorem}
	\label{th:gg:hard:nphardsubset}
				The minimum-turn subset cycle cover problem is NP-hard, even in thin grid graphs.
\end{theorem}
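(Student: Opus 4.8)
The plan is to give a polynomial-time reduction from a standard NP-hard satisfiability problem; following the spirit of Aggarwal et al.~\cite{aggarwal2000angular}, a convenient choice is \emph{3-SAT} (a planar or monotone variant makes the rectilinear layout cleaner). The whole construction rests on one feature that is special to \emph{thin} grid graphs and absent in the full-coverage proof: since no induced $2\times 2$ block exists, a cycle cannot enclose area, so every short cycle is essentially a one-pixel-wide ``corridor'' traversed forth and back. By Lemma~\ref{lemma:gg:hard:stripeven}, each u-turn at an end of such a corridor counts as two turns and each bend of the corridor is traversed (hence charged) twice, so the turn cost of a corridor cycle is exactly $4+2b$, where $b$ is its number of bends. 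Consequently turn accounting becomes purely local, which is precisely what makes the subset proof much shorter than the full-coverage one.

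First I would build a variable gadget analogous to Fig.~\ref{fig:rawvariable}: a $1$-wide corridor shaped so that there are exactly two minimum-turn ways for a cycle to cover the mandatory pixels (the ones placed in the subset $S$) of that variable --- a ``true lane'' and a ``false lane'', both of the same fixed turn cost --- while any cycle that switches between lanes, or that covers those pixels by a separate sub-cycle, is strictly worse; this follows from Lemmas~\ref{lemma:gg:hard:stripeven} and~\ref{lemma:gg:hard:guard} exactly as before, but now \emph{without} the auxiliary gray ``black-dot'' cycles, which is what trims the argument. Second, the clause gadget: for each clause I place a single ``crucial'' pixel into $S$ at the rendez-vous of the three literal lanes, positioned with small parity offsets as in Fig.~\ref{fig:gg:nphard:basic} --- the sign of the offset encoding whether the literal is positive or negative --- so that the crucial pixel is covered with \emph{no} additional turn precisely when at least one of the three lane-cycles runs straight over it, i.e.\ precisely when the clause is satisfied, whereas covering it by a detour, by a u-turn on it, or by a fresh cycle costs strictly more. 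Routing the literal lanes from variables to clauses uses only $1$-wide corridors and ``bridge'' sub-gadgets where one corridor dips around another; I must check that these never create a $2\times 2$ block (so the graph stays thin) and never change the per-corridor turn count.

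Finally I would fix the global budget $k$ equal to the sum of the fixed variable-lane costs plus one straight pass per clause, and prove the two directions. For \emph{satisfiable} $\Rightarrow$ \emph{budget}: take a satisfying assignment, route each variable cycle along the lane of its truth value, and observe that every crucial pixel is then already covered straight --- this mirrors Lemma~\ref{th:gg:hard:satub}. For \emph{budget} $\Rightarrow$ \emph{satisfiable}: this is the real work, handled just as in Lemmas~\ref{lemma:gg:hard:separategray}--\ref{th:gg:hard:unsatub}; since the budget is tight, turns may sit only at forced positions, which lets one detach the forced corridor cycles covering the non-crucial structure, then forces each variable cycle to stay on a single lane (only one available ``partner'' turn on each strip, as in Lemma~\ref{lemma:gg:hard:variableline}), and finally forces each crucial pixel to be covered straight by a lane cycle, which is simultaneously possible for all clauses only if the lane choices form a satisfying assignment. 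The main obstacle I expect is the same one that dominates the full-coverage proof but in much milder form: ruling out a single clever bent cycle that covers crucial pixels of several clauses (or a variable's mandatory pixels together with a clause pixel) within budget --- in a thin graph such a cycle is forced to be one corridor, and a short local charging argument on the strips it crosses shows it cannot beat two turns per newly covered crucial pixel. Since the whole construction has polynomial size, $3$-SAT reduces to deciding the minimum-turn subset cycle cover value in thin grid graphs, establishing the theorem.
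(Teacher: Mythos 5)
Your proposal diverges from the paper's construction in two places where it actually breaks. First, the foundational claim that in a thin grid graph ``a cycle cannot enclose area, so every short cycle is essentially a one-pixel-wide corridor traversed forth and back'' with cost $4+2b$ is false: thinness only forbids an induced $2\times 2$ block, and a one-pixel-wide \emph{ring} (e.g., the boundary of a rectangle with its interior removed) is thin yet supports a genuine cycle traversed once around. The paper's own variable gadget (Fig.~\ref{fig:gg:cc:mtsccp:reductionexample}) is exactly such a ring: a U-shape whose ends are joined by \emph{two} alternative rows, so that the forced cycle around the ring makes a free binary choice between the upper and lower row. If cycles really were forth-and-back corridors, that binary choice --- and hence the variable encoding --- would not exist, so the premise on which you base your ``purely local'' turn accounting undermines your own variable gadget.

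Second, and more fatally, your clause gadget cannot be realized in a two-dimensional grid. You want a single crucial pixel that is covered at no extra cost whenever \emph{any} of the three literal lanes runs straight over it. But a pixel admits only two straight orientations (horizontal and vertical), so at most two distinct corridors can pass straight through it, and even then they must cross there; a third variable's lane cannot also have the option of running straight over the same pixel without coinciding with one of the other two corridors, which destroys the separation between variable cycles. This is precisely why the paper does \emph{not} use an OR-style single-pixel gadget: it reduces from \emph{One-in-three 3SAT} and gives each clause its own forced cycle through a gadget of three vertical columns with mandatory pixels at the tops and bottoms; that cycle can cover exactly two of the three columns for free, and extra mandatory pixels at the variable--clause intersections force those two columns to be the ones of the \emph{false} literals. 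The logic lives in which columns the clause's own cycle threads, not in which lane passes over a shared pixel. To repair your proof you would need to replace the single-pixel clause gadget with something of this kind (and adjust to one-in-three or an equivalent gadgetized OR), at which point you have essentially rederived the paper's construction; the budget-tightness arguments you invoke from Lemmas~\ref{lemma:gg:hard:separategray}--\ref{th:gg:hard:unsatub} would then apply in the milder form you anticipate.
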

\begin{proof}
\begin{figure}[h]
	\centering
	\newcommand{\tikzsquare}[3]{
	\draw[#3] (#1,#2) -- (#1+1,#2) -- (#1+1,#2+1) -- (#1, #2+1) -- (#1,#2);
}
\newcommand{\tikzsquares}[5]{
	\foreach \a in {#1,...,#3}{
		\foreach \b in {#2,...,#4}{
			\tikzsquare{\a}{\b}{#5}
		}
	}
}
\newcommand{\labelsize}[1]{\large{#1}}
\newcommand{\labelsizevar}[1]{\LARGE{#1}}
\centering
\resizebox{.8\columnwidth}{!}{%
	\begin{tikzpicture}
	%variables
		\tikzsquares{1}{1}{1}{29}{fill=\fillfieldgray}
	\tikzsquares{44}{1}{44}{29}{fill=\fillfieldgray}
	\tikzsquares{1}{29}{44}{29}{fill=\fillfieldgray}
	\tikzsquares{1}{27}{44}{27}{fill=\fillfieldgray}
	\tikzsquares{1}{1}{44}{1}{fill=\fillfieldred}
	\node[scale=2.5, align=left] at (46.5,28.5) {\labelsizevar{$x_1$}};

	\tikzsquares{3}{3}{3}{24}{fill=\fillfieldgray}
	\tikzsquares{42}{3}{42}{24}{fill=\fillfieldgray}
	\tikzsquares{3}{24}{42}{24}{fill=\fillfieldgray}
	\tikzsquares{3}{22}{42}{22}{fill=\fillfieldgray}
	\tikzsquares{3}{3}{42}{3}{fill=\fillfieldred}
	\node[scale=2.5, align=left] at (46.5,23.5) {\labelsizevar{$x_2$}};

	\tikzsquares{5}{5}{5}{19}{fill=\fillfieldgray}
	\tikzsquares{40}{5}{40}{19}{fill=\fillfieldgray}
	\tikzsquares{5}{19}{40}{19}{fill=\fillfieldgray}
	\tikzsquares{5}{17}{40}{17}{fill=\fillfieldgray}
	\tikzsquares{5}{5}{40}{5}{fill=\fillfieldred}
	\node[scale=2.5, align=left] at (46.5,18.5) {\labelsizevar{$x_3$}};

	\tikzsquares{7}{7}{7}{14}{fill=\fillfieldgray}
	\tikzsquares{38}{7}{38}{14}{fill=\fillfieldgray}
	\tikzsquares{7}{14}{38}{14}{fill=\fillfieldgray}
	\tikzsquares{7}{12}{38}{12}{fill=\fillfieldgray}
	\tikzsquares{7}{7}{38}{7}{fill=\fillfieldred}
	\node[scale=2.5, align=left] at (46.5,13.5) {\labelsizevar{$x_4$}};

	%Clauses
		\tikzsquares{10}{9}{10}{32}{fill=\fillfieldgray}
	\tikzsquares{12}{9}{12}{32}{fill=\fillfieldgray}
	\tikzsquares{14}{9}{14}{32}{fill=\fillfieldgray}
	\tikzsquares{10}{9}{14}{9}{fill=\fillfieldgray}
	\tikzsquares{10}{32}{14}{32}{fill=\fillfieldgray}
	\tikzsquare{12}{32}{fill=\fillfieldred}
	\tikzsquare{12}{9}{fill=\fillfieldred}
	\node[scale=2.5] at (12.5,33.5) {\labelsize{$x_1\vee x_2 \vee x_3$}};
	%x1
		\tikzsquare{10}{29}{fill=\fillfieldred}
	\tikzsquare{12}{27}{fill=\fillfieldred}
	\tikzsquare{14}{27}{fill=\fillfieldred}
	%x2
		\tikzsquare{10}{22}{fill=\fillfieldred}
	\tikzsquare{12}{24}{fill=\fillfieldred}
	\tikzsquare{14}{22}{fill=\fillfieldred}
	%x3
		\tikzsquare{10}{17}{fill=\fillfieldred}
	\tikzsquare{12}{17}{fill=\fillfieldred}
	\tikzsquare{14}{19}{fill=\fillfieldred}

	\tikzsquares{17}{9}{17}{32}{fill=\fillfieldgray}
	\tikzsquares{19}{9}{19}{32}{fill=\fillfieldgray}
	\tikzsquares{21}{9}{21}{32}{fill=\fillfieldgray}
	\tikzsquares{17}{9}{21}{9}{fill=\fillfieldgray}
	\tikzsquares{17}{32}{21}{32}{fill=\fillfieldgray}
	\tikzsquare{19}{32}{fill=\fillfieldred}
	\tikzsquare{19}{9}{fill=\fillfieldred}
	\node[scale=2.5] at (19.5,33.5) {\labelsize{$\overline{x_1}\vee \overline{x_2} \vee \overline{x_3}$}};
	%x1
		\tikzsquare{17}{27}{fill=\fillfieldred}
	\tikzsquare{19}{29}{fill=\fillfieldred}
	\tikzsquare{21}{29}{fill=\fillfieldred}
	%x2
		\tikzsquare{17}{24}{fill=\fillfieldred}
	\tikzsquare{19}{22}{fill=\fillfieldred}
	\tikzsquare{21}{24}{fill=\fillfieldred}
	%x3
		\tikzsquare{17}{19}{fill=\fillfieldred}
	\tikzsquare{19}{19}{fill=\fillfieldred}
	\tikzsquare{21}{17}{fill=\fillfieldred}

	\tikzsquares{24}{9}{24}{32}{fill=\fillfieldgray}
	\tikzsquares{26}{9}{26}{32}{fill=\fillfieldgray}
	\tikzsquares{28}{9}{28}{32}{fill=\fillfieldgray}
	\tikzsquares{24}{9}{28}{9}{fill=\fillfieldgray}
	\tikzsquares{24}{32}{28}{32}{fill=\fillfieldgray}
	\tikzsquare{26}{32}{fill=\fillfieldred}
	\tikzsquare{26}{9}{fill=\fillfieldred}
	\node[scale=2.5] at (26.5,33.5) {\labelsize{$x_2\vee\overline{x_3}\vee x_4$}};
	%x2
		\tikzsquare{24}{24}{fill=\fillfieldred}
	\tikzsquare{26}{22}{fill=\fillfieldred}
	\tikzsquare{28}{22}{fill=\fillfieldred}
	%x3
		\tikzsquare{24}{19}{fill=\fillfieldred}
	\tikzsquare{26}{17}{fill=\fillfieldred}
	\tikzsquare{28}{19}{fill=\fillfieldred}
	%x4
		\tikzsquare{24}{12}{fill=\fillfieldred}
	\tikzsquare{26}{12}{fill=\fillfieldred}
	\tikzsquare{28}{14}{fill=\fillfieldred}

	\tikzsquares{31}{9}{31}{32}{fill=\fillfieldgray}
	\tikzsquares{33}{9}{33}{32}{fill=\fillfieldgray}
	\tikzsquares{35}{9}{35}{32}{fill=\fillfieldgray}
	\tikzsquares{31}{9}{35}{9}{fill=\fillfieldgray}
	\tikzsquares{31}{32}{35}{32}{fill=\fillfieldgray}
	\tikzsquare{33}{32}{fill=\fillfieldred}
	\tikzsquare{33}{9}{fill=\fillfieldred}
	\node[scale=2.5] at (33.5,33.5) {\labelsize{$\overline{x_1}\vee x_2 \vee \overline{x_4}$}};
	%x1
		\tikzsquare{31}{27}{fill=\fillfieldred}
	\tikzsquare{33}{29}{fill=\fillfieldred}
	\tikzsquare{35}{29}{fill=\fillfieldred}
	%x2
		\tikzsquare{31}{22}{fill=\fillfieldred}
	\tikzsquare{33}{24}{fill=\fillfieldred}
	\tikzsquare{35}{22}{fill=\fillfieldred}
	%x4
		\tikzsquare{31}{14}{fill=\fillfieldred}
	\tikzsquare{33}{14}{fill=\fillfieldred}
	\tikzsquare{35}{12}{fill=\fillfieldred}

	\end{tikzpicture}
}%
	\caption[NP-hardness construction for subset coverage]{Example for the NP-hardness reduction using the \emph{One-in-three 3SAT} formula $(x_1\vee x_2 \vee x_3)\wedge(\overline{x_1}\vee \overline{x_2} \vee \overline{x_3})\wedge(x_2\vee\overline{x_3}\vee x_4)\wedge(\overline{x_1}\vee x_2 \vee \overline{x_4})$.}
				\label{fig:gg:cc:mtsccp:reductionexample}
\end{figure}

The proof is inspired by the construction of Aggarwal et al.~\cite{aggarwal2000angular} for the angular-metric cycle cover.
See Fig.~\ref{fig:gg:cc:mtsccp:reductionexample} for an overview of the construction;
due to its relative simplicity, we only sketch the details.
Every variable consists of a U-shape, with two rows connecting the ends of the U.
The bottom of the U has to be covered in any case, so we are free to either cover the upper or the lower row without additional turn cost.
Covering the upper row means setting the variable to {\tt true}, covering the lower row to {\tt false}.
These two rows intersect with the constructions for the clauses.
Every clause construction consists of three vertical columns (intersecting with the variable constructions) and a horizontal connection at each end.
At the top and the bottom of each is a pixel that has to be covered.
The cheapest way to cover these two pixels is by a cycle that goes through the columns.
Thus, we are able to cover two of three columns per clause construction for free.
The two covered columns represent the two literals of the clause that have to be {\tt false}.

We mark additional pixels at the intersections of the variable constructions and the clause constructions,
so that these pixels must also be covered.
This is done in a such way that if we set a variable to a specific value for
all clauses that now become {\tt true}, the two columns to be covered are automatically
enforced.  In the example, if we set $x_1$ to {\tt true} and cover the upper row, we
have to use the two lines on the right in clause $x_1\vee x_2 \vee x_3$, because
otherwise the two lower pixels in the intersection are not covered.
On the other hand, this enforces $x_2$ and $x_3$ to be {\tt false}, as the
corresponding two left pixels to be covered can no longer be covered by the
clause's cycle.  
The cycle cover represents a valid solution for the \emph{One-in-three 3SAT} formula,
if and only if we do not need any additional cycles. 
\end{proof}
Arkin et al.~\cite{arkin2001optimal,arkin2005optimal}
showed how the structure of thin grid graphs can be exploited for computing an optimal minimum-turn cycle cover in polynomial time.
If we only have to cover a subset of the vertices, the problem is NP-complete.

\section{Approximation Algorithms}
\label{sec:approx}
\subsection{Cycle Cover}
Now we describe a $2\omega$-approximation algorithm for the semi-quadratic (penalty) cycle cover problem with $\omega$ atomic strips per point if the edge weights satisfy Eq.~(\ref{eq:triang}).
%As discussed, the minimum turn cycle covers in grid graphs can be modelled with $\omega=2$ as well as the discretized angular cycle cover, optionally as linear combination with distance costs.
We focus on the full coverage version, as the penalty variant can be modelled in full coverage (with equal $\omega$ and while still satisfying Eq.~(\ref{eq:triang})), 
by adding for every point $p\in P$ two further points that have a zero cost cycle only including themselves and a cycle that also includes $p$ with the cost of the penalty.
%Hence, we only need to consider the full coverage variant.

% ## Description of CC Algorithm
\ifx\figapproximationexample\undefined%
\def\figapproximationexample{}
\begin{figure}[t]
\vspace*{-3mm}
	\centering
	\includegraphics[width=0.8\columnwidth]{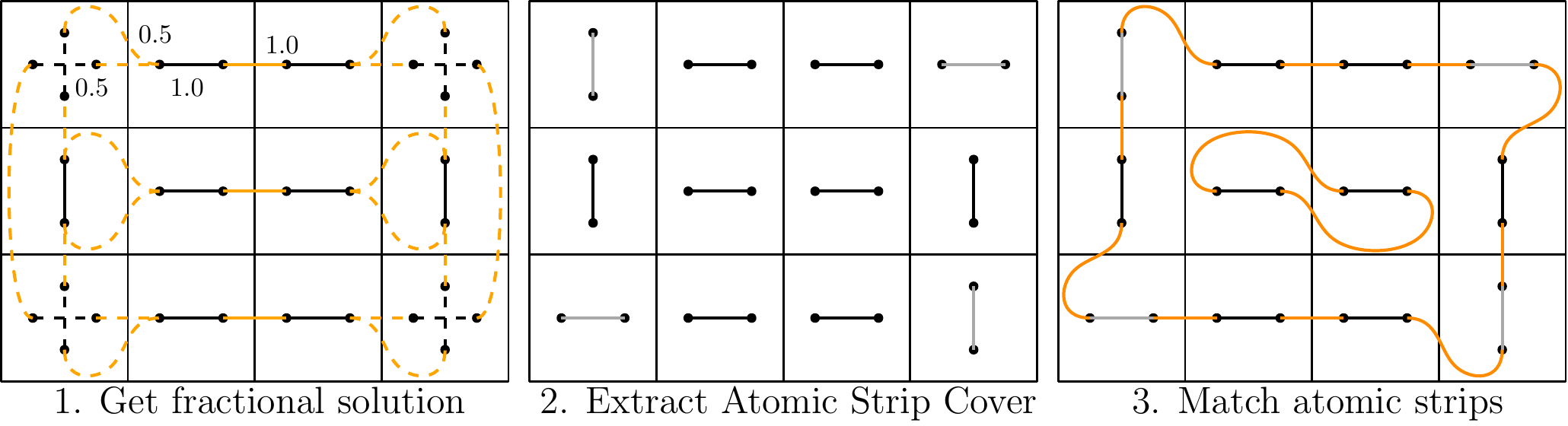}
	\caption{Example of the approximation algorithm for a simple full cycle cover instance in a grid graph. First the fractional solution of the integer program~(\ref{eq:ip:obj})-(\ref{eq:ip:inout}) is computed. Strips and edges with value $0$ are omitted, while dashed ones have value $0.5$. Then the dominant (i.e., highest valued) atomic strips of this solution are selected. 
%The grey atomic strips are ambiguous, i.e., we could have also chosen the other one. 
Finally, a minimum weight perfect matching on the ends of the atomic strips is computed. 
(Recall that atomic strips only have an but no length, so the curves in the corner indicate simple $90^\circ$ turns.)}
	\label{fig:apx:example}
\vspace*{-3mm}
\end{figure}
\fi%
%fig:apx:example If you comment this out, it is automatically added in the appendix.
Our approximation algorithm proceeds as follows.
We first determine an {atomic strip cover} via linear programming. %, as discussed in Sec.~\ref{sec:short_strip_cover}.
Computing an optimal {atomic strip cover} is NP-hard; we can show that choosing the \emph{dominant}
strips for each pixel in the fractional solution, i.e., those with the highest value, suffices to obtain provable good solutions.
As a next step, we connect the {atomic strips} to a {cycle cover}, using a minimum-weight perfect matching.
See Fig.~\ref{fig:apx:example} for an illustration.
%\todo[inline]{Make sure the terms are written uniformly. Capitalized or non-capitalized.}
%Here we can utilize properties of a cycle cover in a grid graph to obtain a matching graph of linear instead of quadratic size.

%\subsubsection{Atomic Strip Covers}
%\label{sec:short_strip_cover}

% Abstract Problem
We now describe the integer program whose linear programming relaxation is solved to select the dominant atomic strips.
It searches for an optimal atomic strip cover that yields a perfect matching of minimum weight. % (which is the optimal solution).
%Let $P$ be the set of pixel for which each $p\in P$ has a set of {atomic strips} $O_p$, in this case a horizontal and a vertical one.
%Each atomic strip $vw\in O_p$ yields two vertices $v$ and $w$; the set of all such vertices is denoted by $V_O$.
%The vertices $V_O$ induce a graph that may contain loop edges (but not multi-edges), with edges denoted by $E_O$.
%These edges $E_O$ are the superset of edges for the matching and are undirected, i.e., $vw=wv\in E_O$.
To satisfy Eq.~(\ref{eq:triang}), transitive edges may need to be added, especially loop-edges (which are not used in the final solution).
The IP does not explicitly enforce cycles to contain at least two points:  all small cycles consist only of transitive edges 
that implicitly contain at least one further atomic strip/point.
%The loop-edges are only necessary for Eq.~(\ref{eq:triang}) and are never actually used.
%The cost of such an edge $e\in E_O$ is denoted by $c(e)\in \mathbb{Q}^+_0$.
% IP
%This allows us to express the problem as the following integer program.
For the usage of a matching edge $e=vw\in E_O$, we use the Boolean variable $x_e=x_{vw}$.
For the usage of an {atomic strip} $o=vw\in O_p, p\in P$, we use the Boolean variable $y_o=y_{vw}$.
\begin{eqnarray}
	\min & \displaystyle \sum_{e\in E_O} \text{cost}(e)x_e & \label{eq:ip:obj}\\
	\text{s.t.} & \displaystyle \sum_{vw\in O_p} y_{vw} = 1 & p\in P\label{eq:ip:cover}\\
	& \displaystyle 2x_{vv} + \sum_{\substack{e\in E_O(v)\\ e\not= vv}} x_e =2x_{ww} + \sum_{\substack{e\in E_O(w)\\ e\not= ww}} x_e = y_{vw} & p\in P, vw\in O_p \label{eq:ip:inout}%\\
	%& x_e, y_o \in \mathbb{B} & e\in E_o, p\in P, o\in O_p \label{eq:ip:vars}
\end{eqnarray}
We minimize the cost of the used edges, with Eq.~(\ref{eq:ip:cover}) forcing the
selection of one {atomic strip} per pixel (atomic strip cover) and Eq.~(\ref{eq:ip:inout}) ensuring that
exactly the vertices (endpoints) of the selected {atomic strips} are matched, with
loop edges counting double due to their two ends.
%Our approximation algorithm selects the dominant strip, i.e., the strip with the highest value in the fractional solution of the linear programming relaxation.
%(which allows partial selection of edges and atomic strips but can be computed efficiently contrary to the integer program).
%See Fig.~\ref{fig:apx:example} (left) for how such a fractional solution could look like and Fig.~\ref{fig:apx:example} (middle) for a corresponding atomic strip cover.

% # Matching these yields 4-apx
%\todoi{All our approximation factors are for turn and distance cost (linear combination). Is this sufficiently expressed everywhere? The paper is on turn costs, do we really need to state it everywhere?}
\begin{theorem}
	\label{th:ccapx}
	%A minimum-weight perfect matching of the endpoints of the dominant {atomic strips} of the fractional solution of the integer program \eqref{eq:ip:obj}-\eqref{eq:ip:inout} yields 
Assuming edge weights that satisfy Eq.~(\ref{eq:triang}), there is 
 a $2\omega$-approxima\-tion for semi-quadratic (penalty) cycle cover. 
\end{theorem}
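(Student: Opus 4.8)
The plan is to follow the three-step outline sketched before the statement: solve the LP relaxation of \eqref{eq:ip:obj}--\eqref{eq:ip:inout}, round the fractional atomic strip cover by keeping only the \emph{dominant} strip at each point, and then connect the surviving strip endpoints by a minimum-weight perfect matching. Since the penalty version has already been reduced to full coverage in a way that keeps $\omega$ unchanged and preserves \eqref{eq:triang}, it suffices to treat full coverage.

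First I would solve the LP relaxation, obtaining an optimal fractional solution $(x^*,y^*)$ of value $\mathrm{LP}\le\mathrm{OPT}$. For each point $p\in P$, constraint \eqref{eq:ip:cover} gives $\sum_{o\in O_p}y^*_o=1$, so a strip $o_p\in O_p$ of maximum $y^*$-value satisfies $y^*_{o_p}\ge 1/|O_p|\ge 1/\omega$. Keeping these dominant strips yields an atomic strip cover whose endpoint set $V_S$ has even cardinality $2|P|$ (each selected strip contributes two endpoints, and endpoints of distinct selected strips are distinct). The algorithm then outputs these strips together with a minimum-weight perfect matching of $V_S$ in $G_O$; by \eqref{eq:triang} every pair of endpoints may be assumed directly joinable, so such a matching exists. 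Combining strips and matching produces a genuine cycle cover: every vertex of $V_S$ gets degree $1$ from its strip and degree $1$ from the matching, every point is covered, and — as noted after the IP — any ``too short'' cycle is realized through transitive edges and thus really covers at least two points.

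It remains to bound the matching cost, which is the heart of the argument. I would start from $\omega\cdot x^*$, of cost $\omega\,\mathrm{LP}$, in which, by \eqref{eq:ip:inout}, each endpoint $v$ of a selected strip $o_p$ carries matching ``degree'' $\omega\,y^*_{o_p}\ge 1$. Using \eqref{eq:triang} I would then iteratively reroute all matching weight incident to endpoints of \emph{non}-selected strips: whenever weight passes through such an endpoint $b$ (entering via a connection edge, crossing the non-selected strip $bb'$, leaving via another connection edge), the pseudo-triangle inequality replaces this detour by the direct connection without increasing cost, while the degrees of the endpoints in $V_S$ are unchanged. The result is a non-negative weighting of the connection edges among $V_S$ of cost $\le\omega\,\mathrm{LP}$ in which every vertex of $V_S$ has degree at least $1$; moreover, together with the selected strips it is balanced at every vertex, hence decomposes into closed alternating walks, i.e.\ — after shortcutting repeated vertices, again via \eqref{eq:triang} — into even cycles, each of which contains a perfect matching of its own vertex set using only connection edges. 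From such a fractional ``cover of $V_S$ by partial perfect matchings of total weight $\ge 1$'' one extracts a single perfect matching of $V_S$ at the price of one further factor $2$; equivalently, the rerouted object can be massaged into a point of Edmonds' perfect-matching polytope of $G_O[V_S]$ of cost $\le 2\omega\,\mathrm{LP}$, and integrality of that polytope bounds the optimum matching by $2\omega\,\mathrm{LP}\le 2\omega\,\mathrm{OPT}$.

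The main obstacle is exactly this last passage: turning the rerouted, degree-$\ge 1$ fractional structure into an honest perfect matching while losing only a factor of $2$. The naive claim ``every vertex of an even set has degree $\ge 1$, hence a cheap perfect matching exists'' is false, because the support can break into odd pieces that force expensive edges outside the fractional support; it is precisely the even-cycle/alternating structure inherited from \eqref{eq:ip:inout}, combined with the completeness of $G_O$ that \eqref{eq:triang} guarantees, that one must exploit to repair this (verify the blossom inequalities, or combinatorially pair up the partial matchings). Everything else — the penalty-to-full reduction, the $1/\omega$ bound on the dominant strip, cost-monotonicity of the rerouting, and feasibility of the output cycle cover — is routine.
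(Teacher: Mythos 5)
Your algorithm and the first half of your analysis coincide with the paper's: the penalty-to-full reduction, the bound $y^*_{o_p}\ge 1/\omega$ from Eq.~(\ref{eq:ip:cover}), the rerouting of all weight away from endpoints of non-selected strips via Eq.~(\ref{eq:triang}) (this is exactly the paper's first reduction, Fig.~\ref{fig:reduceusages}), and the scaling by $\omega$ that accounts for the first factor. The genuine gap is the one you flag yourself: you never convert the resulting degree-constrained fractional object into an actual perfect matching of the dominant-strip endpoints at a further loss of only $2$. Neither of the two escape routes you name is executed, and the first one (``verify the blossom inequalities'') is not the road the paper takes and is not free: after scaling, the fractional point need not satisfy the odd-set constraints of Edmonds' perfect-matching polytope, so you cannot simply invoke its integrality. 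Likewise, ``each alternating walk contains a perfect matching of its vertices by connection edges'' does not by itself produce one global perfect matching from a fractional combination of partial ones.

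The paper closes this gap with a concrete three-step argument that your proof needs. First, after fixing the strip variables and normalizing every dominant strip to usage exactly $1$ (again via the operations of Fig.~\ref{fig:reduceusages}), the residual LP is a degree-constrained polytope with loop-edges, $\sum_{e=vw}x_e+2x_{vv}=1$ for each dominant endpoint $v$; this polytope has a half-integral optimal solution, shown by the classical bipartite-doubling construction (each $v$ becomes $v,v'$, each edge $vw$ becomes $vw'$ and $v'w$, each loop $vv$ becomes the single edge $vv'$; the bipartite matching polytope is integral, and averaging the two copies of each edge transfers the solution back at no greater cost, Fig.~\ref{fig:gg:aa:bipartdouble}). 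Second, doubling this half-integral solution costs the factor $2$ and leaves every endpoint with integral connection-degree $2$, equivalently every strip used twice. Third---and this is precisely the step your ``pair up the partial matchings'' leaves open---the degree-$2$ structure is reduced to degree $1$ not by choosing alternate edges of cycles (which fails on odd components of the connection-edge support) but by one application of Eq.~(\ref{eq:triang}) per strip: a path (edge into $v$)--(strip $vw$)--(edge out of $w$) is replaced by the direct connection edge, and integral loop-edges are removed the same way (Fig.~\ref{fig:removeloopedges}). This yields an honest perfect matching of the dominant endpoints of cost at most $2\omega$ times the LP value, hence at most $2\omega\,\mathrm{OPT}$, which is what the theorem requires. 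Without some version of this third step, the parity obstruction you correctly identify remains unresolved and the proof is incomplete.
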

\begin{proof}
	We prove this theorem by showing that there exists a matching on the dominant {atomic strips} of the fractional solution with at most $2\omega$ the cost of the fractional solution.
	This implies that the minimum weight perfect matching on these strips has to be at least that good.
  %Let us now consider the details of this proof.

	% 1. Dominant strips
	Due to the constraints of type Eq.~(\ref{eq:ip:cover}), the dominant strip $o\in O_p$ for every $p\in P$ has a value of 
	$y_o\geq 1/\omega$ in any feasible solution of the LP-relaxation.
	Let $o_p$ be this dominant strip for $p\in P$ and $V'_O\subseteq V_O$ the set of the endpoints of all of them.
	% 2. Set $p\not \subset V'$ to zero via for loop
	There may also be further $p \in P, o\in O_p$ with $y_o > 0$.
	We can set these to zero using the operations of
	Fig.~\ref{fig:reduceusages} without increasing the objective value and without
	changing any $y_{o_p}$ while also satisfying Eq.~(\ref{eq:ip:inout}),
	because of the constraint in Eq.~(\ref{eq:triang}).
		 \begin{figure}
			 \centering
			 \includegraphics[width=1\textwidth]{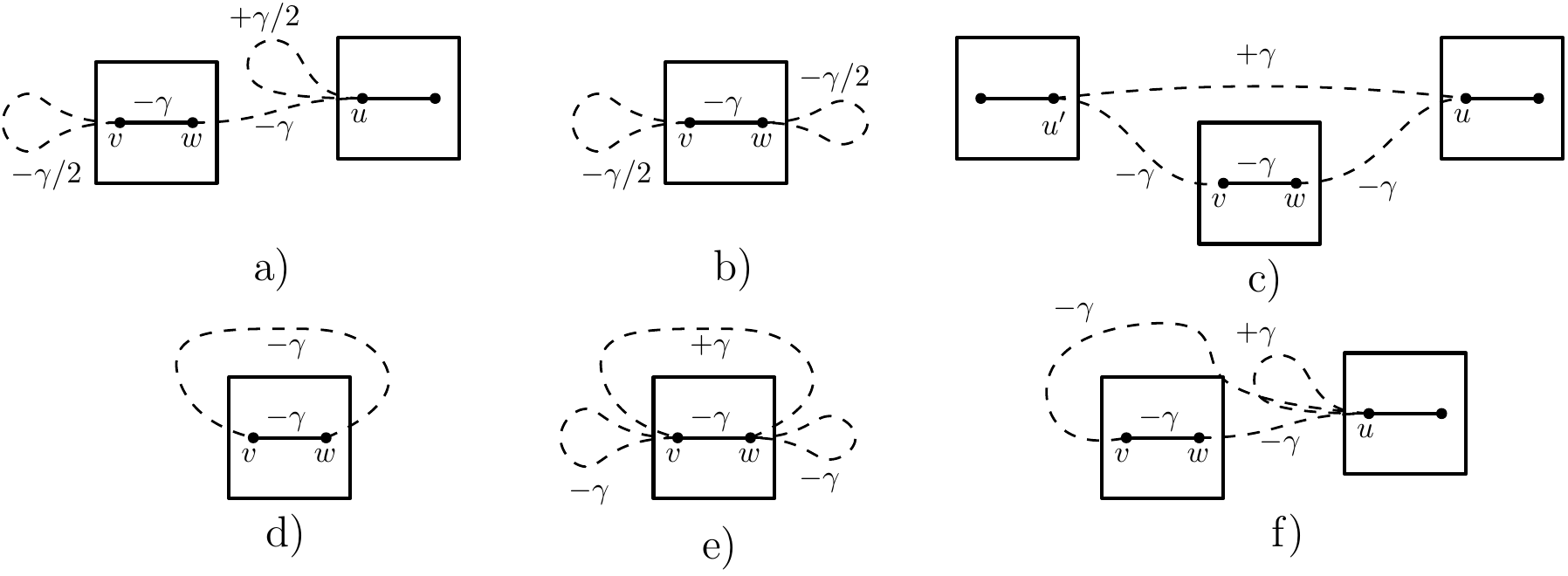}
			 \caption{Operations for reducing the usage of a strip $vw$.
			 Strips are shown by thick lines, while edges are indicated by dashed lines. These
			 costs do not increase (for
			 $\gamma>=0$), due to Ineq.~\ref{eq:triang}. For some cases as a) one needs a
			 repetitive application of this inequality: first replace the $\gamma/2*vv$
			 loop-edge by $\gamma/2 * vu$ while reducing $uw$ by $\gamma/2$ and then replace
			 the $\gamma/2*wu$ and the $\gamma/2*vu$ by $\gamma/2*uu$. As long as the usage
			 is greater than zero, at least one of the operations is possible.}
			 \label{fig:reduceusages}
		 \end{figure}
	The resulting solution only violates Eq.~(\ref{eq:ip:cover}) by relaxing it to $\sum_{vw\in O_p} y_{vw} \geq 1/\omega$, but we fix this later by a simple multiplication.
	Now we only have edges from $E'_O=E_O\cap (V'_O\times V'_O)$ in our fractional solution.
	The corresponding objective value is still a lower bound on the optimal solution of the integer program.

	% 3. Multiply Solution by $\max\{|S_i| \mid S_i\in S\}$ -> $x_{p'_i} \geq 1$
	We can multiply the solution by $\omega$ resulting in $y_{o_p} \geq 1$ for all $p\in P$.
	We can apply the same procedure to reduce all $y_{o_p} > 1$ to $=1$.
	The new solution has a cost of at most $\omega$ times the optimal
	solution and all variables $y_o, o\in O_p, p\in P$ are now Boolean; however, the edge
	usages can still be fractional.

	% 5. Fix values of $p'_i$ and remove unnecessary vertices -> matching polytope
	If we now fix the values for $y_o, o\in O_p, p\in P$ and remove all vertices (and their incident edges) that are not in $V'_O$, we have a minimum perfect matching polytope with loop-edges as stated below.
	\begin{eqnarray}
		\min & \sum_{e\in E'_O} c(e)*x_e & \\
		\text{s.t.} &\sum_{e=vw\in E'_O} x_e + 2*x_{vv} = 1 & \forall v\in V'_O\\
		& 0\leq x_e \leq 1 & \forall e\in E'_O
	\end{eqnarray}
	Because the previous solution is still feasible in this formulation, the optimal solution of this formulation is also at most $\omega$ the costs of the optimal solution of the problem.

	% 6. Matching Polytope is HalfIntegral
	We can show that the new polytope contains a half-integral optimal solution.
	The proof works analogously to the proof of the half-integrality of the classic matching polytope (see, e.g., Theorem~{$6.13$} in the book of Cook et al.~\cite{cook1997combinatorial}).
	We create a bipartite graph $G'_{\text{bip}}(V'_O\times 2, E'_{\text{bip}})$ in which for every $v\in V'_O$ there are two vertices $v,v'\in V'_O\times 2$ and for every edge $e=vw\in E'_O$ there are the edges $vw'$ and $v'w$ in $E'_{\text{bip}}$, see Fig.~\ref{fig:gg:aa:bipartdouble}.
	\begin{figure}
		\centering
		\includegraphics[width=0.7\textwidth]{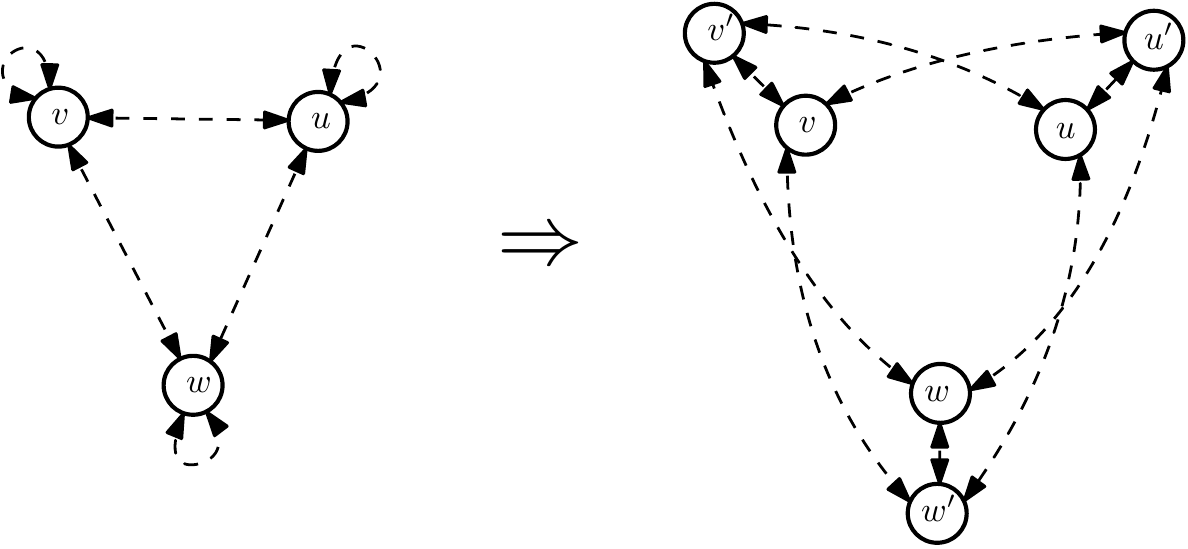}
		\caption{Converting a graph to a bipartite graph. Every edge is doubled except the loop-edges which become regular edges. The matching $vw'$,$wu'$ $uv'$ would be converted to the matching $0.5vw, 0.5wu, 0.5uv$. There always exists a perfect matching in  the original graph for our problem.}
		\label{fig:gg:aa:bipartdouble}
	\end{figure}
	A loop-edge $vv\in E'_O$ is simply replaced by a single edge $vv'\in E'_{\text{bip}}$.
	Because there are no edges within the individual copies of $V'_O$ in $G'_\text{bip}$, the graph is bipartite; the bipartite matching polytope is known to be integral.
	By assigning the edges in $G'_{\text{bip}}$ the same values as for the solution on $G'_O(V'_O, E'_O)$ (doubling for loop-edges), we obtain a feasible solution with twice the costs on $G'_{\text{bip}}$.
	The optimal solution in $G'_{\text{bip}}$, hence, at most twice as expensive as the optimal solution in $G'_O$.
	The optimal solution in $G'_{\text{bip}}$ (which is integral) can be transformed to a solution of half the costs for $G'_O$ by assigning each edge half the sum of the corresponding two edges (or one edge for loop-edges) in the solution for $G'_{\text{bip}}$.
	This solution is half-integral and optimal.
	Hence, we have an optimal solution where regular edges are selected by $0$, $0.5$, or $1$ and loop-edges are selected by $0$ or $0.5$.

	% 7. Multiply by two to get {1,2}-Polytope
	If we double such a solution, we obtain an integral solution with at most $2*\omega$ the costs of the original linear program.
	Every vertex has either two regular edges selected with $1$ each, a single regular edge selected twice, or a loop-edge selected with $1$.
	We can remove the loop-edges without cost increase as shown in Fig.~\ref{fig:removeloopedges} by also reducing the strip usage to one (a loop-edge always implies a double usage).
	For removing other double usages we can use nearly the same technique as for fractional in Fig.~\ref{fig:reduceusages}.
	We only have to make sure never to create fractional loop-edges (we do not need the cases in which this can happen because these cases all have loop-edges in the beginning which we removed before).
	Integral loop-edges can immediately be removed as in Fig.~\ref{fig:removeloopedges}.
	\begin{figure}
		\centering
		\includegraphics[width=0.9\textwidth]{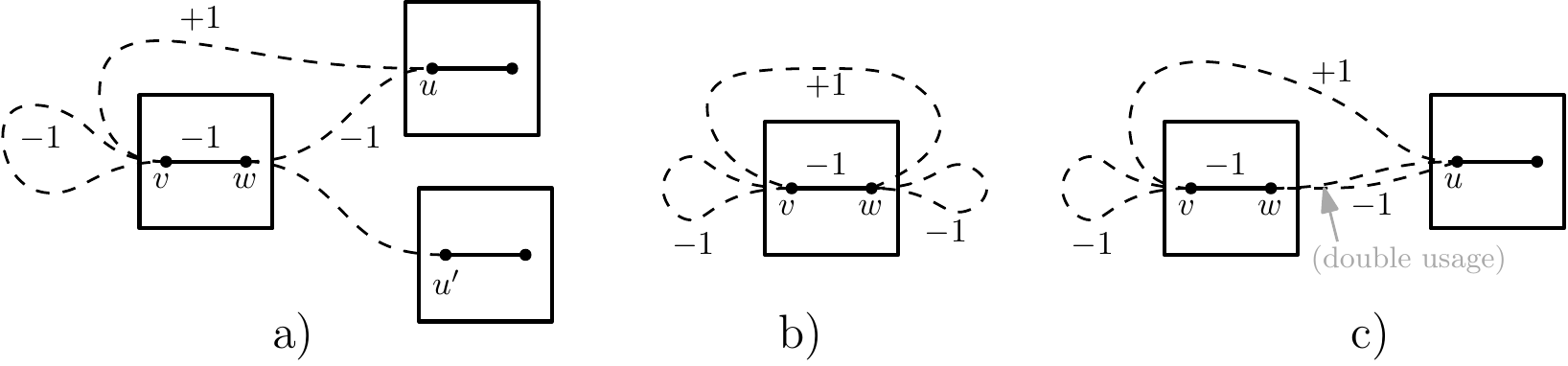}
		\caption{Removing integral loop-edges. A loop-edge always implies a double usage of the corresponding strip. The costs do not increase due to Ineq~\ref{eq:triang}. A `$+1$' indicates a new edge, a `$-1$' the removal of an edge.}
		\label{fig:removeloopedges}
	\end{figure}

	% 9. Solution is smaller than $2*\max\{|S_i| \mid S_i\in S\}$ times LP
	In the end, we have a solution that is a matching for $G'_O$ and has at most $2\omega$ the cost of the objective value of the linear relaxation.
	As the linear relaxation provides a natural lower bound, this concludes the proof.

\end{proof}

\begin{figure}
	\centering
	\includegraphics[width=0.4\textwidth]{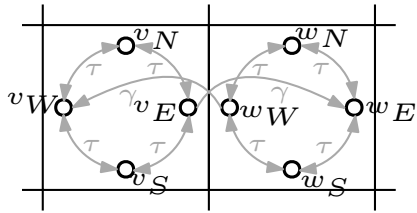}
	\caption{We can easily compute the shortest path with turn costs in grid graphs by replacing every point (that only encodes a position) by vertices for all headings of interest (four for grid graphs). The turn costs ($\tau$) then are encoded in the costs of switching between these vertices. A vertex heading north allows only to go north such that the rotation edges have to be used to make a rotation. $\gamma$ denotes the distance cost.
	%Note that this is actually not needed for the cycle cover approximation algorithm as it is indirectly done in the advanced matching technique in Sec.~\ref{sec:matching}.
	}
	\label{fig:apx:directedshortestpath}
\end{figure}

\subsection{Tours}

%%At this point we have seen how to approximate the cycle cover problems.
%In this section we see how to obtain approximations for the tour variants based on the cycle cover approximations.

A given cycle cover approximation can be turned into a tour approximation at the expense of an additional constant factor.
Because every cycle involves at least two points and a full rotation, we can use classic tree techniques known for TSP variants to connect the cycles and charge the necessary turns to the involved cycles.
We sketch the basic ideas and elaborate the details for the most difficult case afterwards in Theorem~\ref{th:gg:aa:penalty:tour2d} (the other cases are analogous).

For the classic Traveling Salesman Problem with triangle inequality, minimum spanning trees are trivial lower bounds, as any tour
must contain a spanning tree. %removing an edge of a tour provides a tree that is cheaper than the tour but cannot be cheaper than the cheapest tree.
This is also true for the penalty TSP and the prize-collecting Steiner tree; note that ``penalty'' and ``prize-collecting'' variants
are completely equivalent. 
Doubling optimal trees yields trivial 2-approximations. (The prize-collecting Steiner tree is NP-hard, 
but there is a 2-approximation~\cite{goemans1995general}.)

This is not directly possible with turn costs, because it matters from where a vertex is entered.
However, if we already have a cycle cover and we aim to connect them, this gets significantly easier.
If there is a path between two cycles, we can double it and merge the cycles,
requiring not more than an additional $180^\circ$ turns at the ends of the paths,
regardless of the direction from which the path hits the cycle. %(trivial geometric
%argument on disjunct sectors of a half circle, which is true for grid as well
%as purely geometric instances even in 3D).  
Thus we can merge two cycles with a
cost of two times the cheapest path between them, plus $360^\circ$ for
connecting the elements.
The costs of the paths can be charged to the optimal tour, analogous to the classic TSP.
On the other hand, the $360^\circ$ can be charged to one of the cycles, because
every cycle needs to do at least a full rotation, and there are fewer such merge
processes than cycles in the initial cycle cover.  This directly yields a
method for approximating subset tours when a cycle cover approximation is
given, see Fig.~\ref{fig:mst_subset_connect}.

In order to connect the cycles of a penalty cycle cover to a penalty tour, we
cannot simply use the doubled minimum spanning tree technique like we did for subset
coverage; e.g., if the penalty cycle cover consists of two distant cycles, it
may be cheaper not to connect them, but select the better cycle as a tour and
discard the other cycle.
Instead, we double a prize-collecting Steiner tree instead of a minimum
spanning tree, in which every vertex not in the tree results in a penalty (or
every vertex in the tree provides a prize). Without loss of generality, we
may assume that all cycles are disjoint: otherwise, we connect two
crossing cycles at a cost of at most $360^\circ$, which can be charged to
the merged cycle.
Then the penalty for not including a cycle is the sum of penalties of all its pixels.
%(we directly connect all crossing cycles and thus assume all cycles to be
%disjunct, this costs a $360^\circ$ turn per merge step which can be charged
%onto the vanished cycle).  
As a result, we get a constant-factor approximation for penalty tours.

\begin{theorem}
Assuming validity of Eq.~\ref{eq:triang} 
we can establish the following approximation factors for tours.
\begin{enumerate}
	\item Full tours in regular grid graphs:  $6$-approximation.
	\item Full tours in generalized grid graphs: $4\omega$-approximation.
	\item Subset tours in (generalized) grid graphs:  $(4\omega+2)$-approximation.
	\item Geometric full tours: $(4\omega+2)$-approximation.
	\item Penalty tours (in grid graphs and geometric):  $(4\omega+4)$-approximation.
\end{enumerate}
These results also hold for  objective functions that are linear combinations of length and turn costs.
\end{theorem}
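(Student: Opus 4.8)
The plan is to transform the cycle-cover approximations of Theorem~\ref{th:ccapx} into tour approximations by a uniform ``connect-the-cycles'' construction, and to bound the cost of the resulting tour by $C$ (the cost of the computed cycle cover, whose turns survive in the tour) plus the turns lying inside the structure used to connect the cycles plus the new turns created where that structure is spliced into the cycles. Since an optimal tour is in particular a feasible cycle cover, $C\le 2\omega\cdot\mathrm{OPT}$ in the general variants, and the sharper regular-grid bound applies in case~1. Since every cycle of the computed cover passes through at least two points and accumulates at least $360^\circ$ of turning, $C$ is at least $360^\circ$ times the number of cycles; as each merge introduces two $180^\circ$ turns and the number of merges is strictly smaller than the number of cycles, the splicing/merge turns total at most $C\le 2\omega\cdot\mathrm{OPT}$. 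Hence only the ``connecting structure'' term is case-dependent, and the whole construction stays polynomial because cheapest-turn-cost paths are computable by the vertex-splitting transformation of Fig.~\ref{fig:apx:directedshortestpath}.

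For full coverage in grid graphs (cases~1 and~2) I would exploit that, since every pixel is covered, the graph whose nodes are the cycles and whose edges join cycles running through adjacent pixels is connected. Merging two such neighbouring cycles is a purely local reroute whose extra turn cost is a small constant---at most one full rotation, and less for regular grids---and can therefore be charged to the cycles just like the splicing turns above. Hence the tour costs at most twice the cycle cover (less for regular grids), which gives the $4\omega$-approximation in the generalized case, and the $6$-approximation for regular grids after inserting the sharper regular-grid analysis.

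For subset tours and for geometric full tours (cases~3 and~4) the cycles need not be adjacent, so I would connect them along a minimum spanning tree taken with respect to the cheapest-turn-cost path metric between cycles, double it, and splice it into the cycles. Deleting one edge of an optimal tour leaves a connected structure spanning all cycles whose cost is at least that of this tree, so the doubled tree contributes at most $2\cdot\mathrm{OPT}$; together with $C\le 2\omega\cdot\mathrm{OPT}$ and the $\le 2\omega\cdot\mathrm{OPT}$ splicing/merge charge this yields $(4\omega+2)\cdot\mathrm{OPT}$. For penalty tours (case~5) a spanning tree is the wrong object---a far-away expensive cycle should be discarded rather than connected---so I would instead double a prize-collecting Steiner tree on the cycle nodes in which the penalty of a cycle node is the total penalty of the pixels it covers, computed with the Goemans--Williamson $2$-approximation~\cite{goemans1995general}. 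Since an optimal penalty tour contains a prize-collecting tree, the doubled $2$-approximate tree costs at most $4\cdot\mathrm{OPT}$, yielding $(4\omega+4)\cdot\mathrm{OPT}$. All steps use only Eq.~(\ref{eq:triang}) and the $360^\circ$-per-cycle bound, hence go through unchanged for objective functions that are linear combinations of length and turn cost.

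The hard part is the penalty case. One must check that the penalty incurred by the final tour---the pixels on discarded cycles together with the pixels the penalty cycle cover already left uncovered---is faithfully captured by the prize-collecting instance; that crossing cycles can be merged in advance, each at a cost of at most one rotation charged to the merged cycle, without breaking the inequality ``number of merges $<$ number of cycles''; and that the prize-collecting Steiner tree genuinely lower-bounds the optimal penalty tour. For this reason the penalty case is carried out separately in Theorem~\ref{th:gg:aa:penalty:tour2d}, and the remaining four cases follow by specializing the same bookkeeping.
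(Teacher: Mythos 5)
Your proposal follows essentially the same route as the paper: take the $2\omega$-approximate cycle cover, merge cycles by doubling connecting paths, charge the at-most-$360^\circ$ splice cost per merge to the cycles via the ``every cycle turns at least $360^\circ$ and there are fewer merges than cycles'' argument, use adjacency of cycles for full coverage, a doubled MST on the cycle graph for subset/geometric coverage, and a doubled Goemans--Williamson prize-collecting Steiner tree for the penalty variant, with the same case-by-case bookkeeping yielding $6$, $4\omega$, $4\omega+2$, and $4\omega+4$. The level of detail (including deferring the sharper regular-grid factor to Arkin et al.\ and the penalty subtleties to the separate $12$-approximation theorem) matches the paper's own treatment.
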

\begin{proof}
	Crucial are that (1) a cycle always has a turn cost of at least $360^\circ$, (2) two intersecting cycles can be merged with a cost of at most $360^\circ$, and (3) two cycles intersecting on a $180^\circ$ turn can be merged without additional cost.
	\begin{enumerate}
		\item For full tours in grid graphs, greedily connecting cycles provides a tour with at most $1.5$ times the
			turn cost of the cycle cover, while a local optimization  can be exploited
			to limit the length to $4$ times the optimum, as shown by Arkin et al.~\cite{arkin2001optimal}.
		\item In a cycle cover for (generalized) grid graphs, there are always at least two cycles with a distance of one, while every cycle has a length of at least $2$; otherwise the cycle cover is already a tour. This allows iteratively merging cycles at cost at most as much as a cheapest cycle; the total number of merges is less than the number of cycles.
		\item and (iv) For subset coverage in grid graphs or full
coverage in the geometric case, we need to compute the cheapest paths between
any two cycles, ignoring the orientations at the ends. First connect all
intersecting cycles, charging the cost on the vanishing cycles.
			The minimum spanning tree on these edges is a lower bound on the cost of the tour.
			Doubling the MST connects all cycles with the cost of twice the MST, the cost of the cycle cover, and the turn costs at the end of the MST edges, which can be charged to the cycles.
\addtocounter{enumi}{1}
		\item Penalty tours can be approximated in a similar manner.
%The price for each cycle is the sum of the penalties of all its points. 
Instead of an MST, we use a Price-Collecting Steiner Tree, which is a lower bound on an optimal penalty tour.
%(which is the MST analogue to a penalty tour). 
We use a $2$-approximation for the PCST~\cite{goemans1995general}, as it is NP-hard. 
We achieve a cost of twice the 2-approximation of the PCST, the cost of the
penalty cycle cover, and the cost of its cycles again for charging the
connection costs. The penalties of the points not in the cycle cover are
already paid by the penalty cycle cover.  
\end{enumerate}
%\qed
\end{proof}
%\todo[inline]{Aboves theorem and proof sketch are using some unelegant language. Recall that the geometric problem is no called discretized angular-metric TSP.}

For completeness, we describe the details for 2-dimensional grid graphs; other cases are analogous.
\begin{figure}
	\centering
	\includegraphics[width=0.4\textwidth]{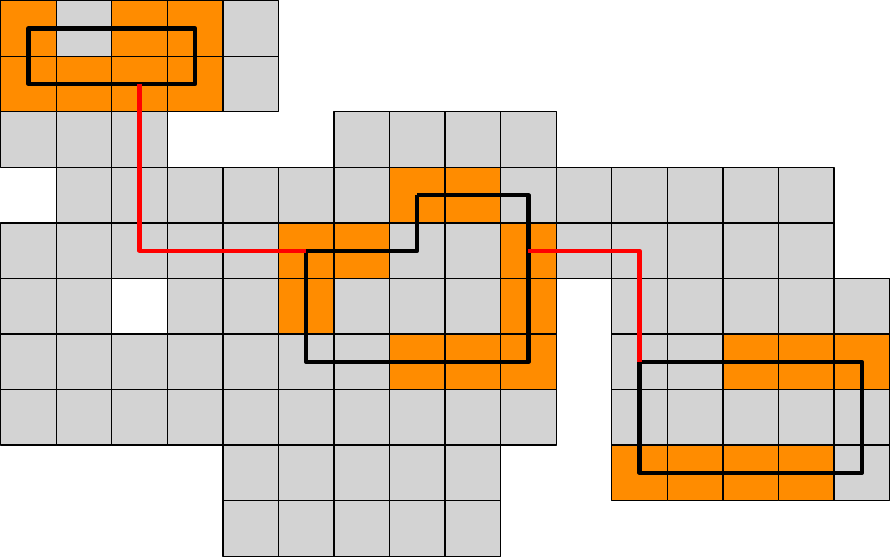}
	\caption{Connecting subset cycles (cycles in black, subset pixel in orange) by a minimum spanning tree (red edges) on the components/cycles.}
	\label{fig:mst_subset_connect}
\end{figure}
%Tour
\begin{theorem}
	\label{th:gg:aa:penalty:tour2d}
	There is a $12$-approximation algorithm for penalty tours  in grid graphs.
	% with a runtime of $O((\omega^2|V|)^{3.5})$ for penalty tours.
\end{theorem}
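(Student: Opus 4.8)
The plan is to read Theorem~\ref{th:gg:aa:penalty:tour2d} as the instantiation, for two-dimensional grid graphs, of the general penalty-tour construction behind the last item of the previous theorem: here every pixel admits only the horizontal and the vertical atomic strip, i.e.\ $\omega=2$, so the target factor $4\omega+4$ becomes $12$. First I would run the algorithm of Theorem~\ref{th:ccapx} with $\omega=2$ to obtain a penalty cycle cover $\mathcal{C}$ of turn cost $T(\mathcal{C})\le 4\,\mathrm{OPT}_{\mathrm{cc}}$, where $\mathrm{OPT}_{\mathrm{cc}}$ is the optimal penalty cycle-cover value; since every penalty tour is also a penalty cycle cover with the same penalties charged for the uncovered pixels, $\mathrm{OPT}_{\mathrm{cc}}\le\mathrm{OPT}$, hence $T(\mathcal{C})\le 4\,\mathrm{OPT}$, where $\mathrm{OPT}$ denotes the optimal penalty-tour cost.

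Next I would connect the cycles. Using the three structural facts recalled in the proof of the previous theorem — a cycle costs at least $360^\circ$; two crossing cycles merge at extra cost at most $360^\circ$; two cycles sharing a $180^\circ$ turn merge for free — I first merge all crossing cycles so that the remaining cycles are pairwise disjoint, and I set up the auxiliary \emph{cycle graph} $H$: one node per remaining cycle $C_i$ carrying penalty $\sum_{p\in C_i} c(p)$ (the cost of discarding $C_i$), and, for each pair $C_i,C_j$, an edge of weight equal to the cheapest connecting path in the grid graph, computed via the directed-shortest-path transformation of Fig.~\ref{fig:apx:directedshortestpath} while ignoring the orientations at the two endpoints. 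I then compute a prize-collecting Steiner tree $T_H$ in $H$ with the Goemans--Williamson $2$-approximation~\cite{goemans1995general}, double $T_H$, splice each doubled edge into the two incident cycles at extra cost at most $360^\circ$, and discard the cycles not met by $T_H$; the uncovered pixels of the resulting single closed tour are exactly those of the discarded cycles, whose penalties were already paid inside $T(\mathcal{C})$.

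For the accounting I would bound the final tour by three contributions: the inherited cycle turns, at most $T(\mathcal{C})\le 4\,\mathrm{OPT}$; the doubled tree, at most $2\cdot 2\cdot\mathrm{OPT}_{\mathrm{PCST}}(H)\le 4\,\mathrm{OPT}$ once we know that an optimal PCST in $H$ lower-bounds $\mathrm{OPT}$; and the splicing turns — at most $360^\circ$ per merge, over crossing-cycle merges and tree-edge merges alike, each charged to a distinct cycle, which is legitimate since the number of merges is strictly smaller than the number of cycles and each cycle carries at least $360^\circ$, so their total is at most $T(\mathcal{C})\le 4\,\mathrm{OPT}$. Summing gives $12\,\mathrm{OPT}$. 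The one step that requires care — and the part I expect to be the real content — is the PCST lower bound: I would argue that an optimal penalty tour, sampled at the pixels it covers and read as a cyclic sequence, has turn cost at least the total length of that sequence in the turn-cost shortest-path metric (the orientations at the sampled pixels only weaken the per-segment bound), hence at least the cost of a spanning tree of the covered pixels in that metric, hence at least the value of a PCST on all pixels with the given penalties; and finally that contracting each cycle $C_i$ of $\mathcal{C}$ to its representative node does not increase the optimal PCST value, since a pixel-level PCST pushes to $H$ with edges that only get cheaper when an endpoint is replaced by the nearest-pixel connection to $C_i$, and a cycle is discarded in $H$ only when none of its pixels was spanned, so no penalty is overcounted. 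Everything else is routine assembly of the constant.
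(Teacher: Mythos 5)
Your proposal is correct and follows essentially the same route as the paper: a $4$-approximate penalty cycle cover ($2\omega$ with $\omega=2$), a doubled Goemans--Williamson $2$-approximate prize-collecting Steiner tree on the cycle graph (with edge costs ignoring end orientations and node penalties summed over pixels), and a third $4\,\mathrm{OPT}$ term charging one $360^\circ$ splice per merge to a distinct cycle, with the PCST on the cycle graph lower-bounded by the optimal tour. The only cosmetic difference is that you derive the PCST lower bound via a pixel-level tree followed by contraction, where the paper extracts the cycle-graph tree directly from the optimal tour; both are sound.
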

\begin{proof}
	The factor of 12 results of $4\text{OPT}$ for a penalty cycle
	cover, $2\cdot 2\text{OPT}$ for a 2-approximation of a prize-collecting Steiner
	tree, and again the cost of the penalty cycle cover as an upper bound on the
	necessary additional turns.  Just like for full coverage, the prize-collecting
	Steiner tree is computed on an auxiliary graph based on the cycles.
	For simplicity, we directly merge all cycles that share a pixel.
	The cost for this can be charged to one of the cycles from the same
	budget as for the later connections without any interference, because we
	reduce the number of cycles used in the later part.  In the corresponding
	graph, every cycle is represented by a vertex. The penalty for points not covered by cycles is already paid
	for by the penalty cycle cover. There is an edge between any two
	cycles; its cost is the cheapest transition from one pixel of the first cycle
	to a pixel of the second cycle, ignoring the orientations at these end pixels.
	The penalty of each vertex is the sum of the penalties of all of its pixels.
	Doubling a prize-collecting Steiner tree and removing all cycles not in the tree results in a tour; the turns at the ends can be charged to the cycle cover,
	just like for full/subset coverage.

	It remains to be shown that the optimal prize-collecting Steiner tree on the cycle graph is a lower bound on the optimal tour.
	This can be seen by obtaining such a tree from the optimal penalty tour.
	Because the turns at the end points are free, as is moving within the cycles, the resulting tree is not more expensive than the tour.
	Further, visiting a single pixel of a cycle covers also all other pixels of the cycle, so also the penalties are cheaper.
	Finally, a 2-approximation for the prize-collecting Steiner tree can be provided by the algorithm of Goemans and Williamson~\cite{goemans1995general}.
	%with a time complexity of $O(|V|^2 \log |V|)$.
\end{proof}

%For subset tours and tours in other grid graphs
%we can connect the cycles with limited extra cost as follows.
%Consider a complete weighted graph $G_\mathcal{C}(\mathcal{C}, E_\mathcal{C})$
%on the cycles of a cycle cover $\mathcal{C}$.
%We assume that all intersecting cycles are already connected; if necessary, charge this to the cycle that gets merged.
%The edge weight $w(c_1,c_2)$ for the edge between cycle $c_1$ and cycle $c_2$
%is the cheapest connection from a pixel of $c_1$ to a pixel of $c_2$
%(symmetric).  Doubling the corresponding paths of an MST on $G_\mathcal{C}$
%connects the cycle cover with an additional cost of at most $2\text{OPT}$
%for the MST and four simple turns (i.e.,, $360^\circ$) for connecting each of the $|\mathcal{C}|-1$ doubled paths with its two cycles; this is bounded by the cost of the cycle cover.
%For full coverage in more general grid graphs, the cost of the MST edges and the turns can be fully charged to the merged cycle.
%
%For penalty tours we have to account for the possibility that cycles can also be removed instead of connected.
%Here we can use a 2-approximation of the prize-collecting Steiner tree on $G_\mathcal{C}$, where the prize of
%a cycle is the sum of the penalties of all its pixels.
%Doubling the corresponding paths of this prize-collecting Steiner tree connects the cycle cover with an
%additional cost of at most $4\text{OPT}$ for the tree and four turns per cycle in $\mathcal{C}$, which is bounded by the cost of the cycle cover.
%See appendix for more details (Sec.~\ref{sec:app:tourapx}).
\subsection{Geometric Instances}
\label{sec:app:tourgeom}

In this section we give some additional information on solving the 2-dimensional geometric instances with polygonal obstacles.
We consider an (individual) set of $\omega$ atomic strips per point $p\in P$ and search for a cycle cover/tour that covers all points of $P$
(by integrating one of the atomic strips) with minimal distance and turn angle sum.
If there are only at most $\omega$ orientations per point, there are at most $(2\omega|P|)^2$ many possible transitions in the tour.
The cheapest transition between two configurations can be computed by
considering the visibility graph (this is analogous to the situation for
Euclidean shortest paths, see Fig.~\ref{fig:geo:path:turnpoints:circle} and e.g., Chapter~$15$ in de Berg at
al.~\cite{debergcg2008}) and a simple graph transformation to integrate the
turn costs into edge costs, which requires $O(|E|)$ vertices and $O(|E|)$
edges, see Fig.~\ref{fig:geo:path:transformation}.  The visibility graph can be computed in $O(n^2)$, e.g., by the
algorithm of Asano et al.~\cite{asano1986visibility}. % or
%Welzl~\cite{welzl1985constructing}.
%(This is worst-case optimal, there also exists an output-sensitive algorithm by Ghosh and Mount~\cite{ghosh1991output} that runs
%in $O(n \log n+k)$, where $k$ is the number of edges in the visibility graph.)
Skipping an atomic strip does not increase the costs, hence we can use the Atomic Strip Matching technique described before.
The connecting strategies remain also the same; the only additional subroutine is the computation of the geometric
primitives, which remains polynomial.
\begin{figure}
	\centering
	\includegraphics[width=0.5\textwidth]{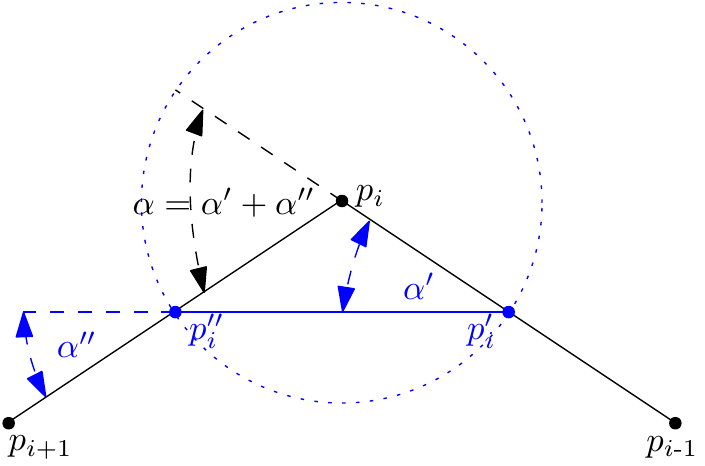}
	\caption[Illustration for minimum cost path proof]{If the turning point $p_i$ is not on a vertex of an obstacle, we can build a shorter tour with the same turn costs by replacing $p_i$ by $p'_i$ and $p''_i$ (blue).}
	\label{fig:geo:path:turnpoints:circle}
\end{figure}
\begin{figure}
	\centering
	\includegraphics[width=0.9\textwidth]{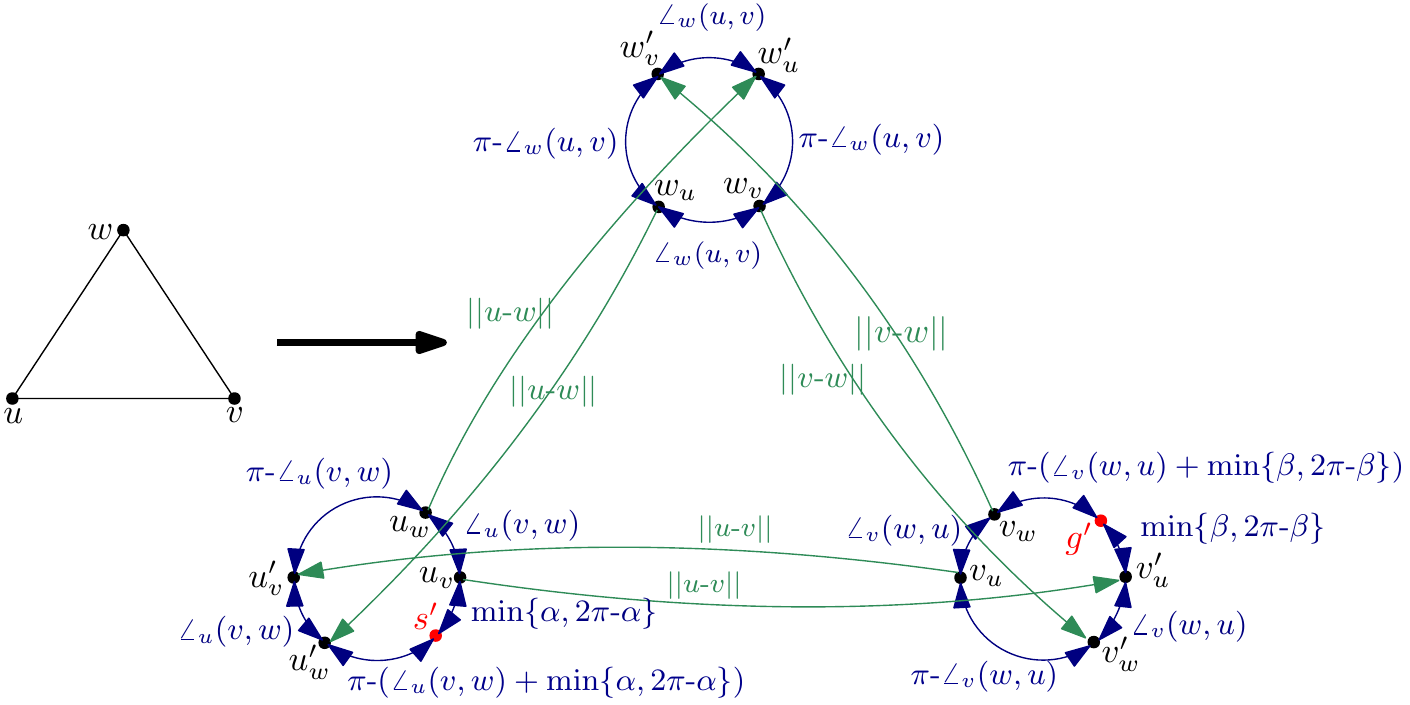}
	\caption[Transforming a graph with turn costs in a normal weighted digraph]{The transformation of the graph to integrate turn costs in the edge costs. The start configuration is in $u$ heading south-east ($s'$) and the goal configuration is in $v$ heading north-east ($g'$). $\angle_a(b,c)$ denotes the counterclockwise angle from segment $(a,b)$ to segment $(a,c)$.}
	\label{fig:geo:path:transformation}
\end{figure}

\section{Conclusions}
We have presented a number of theoretical results on finding optimal tours and cycle covers with turn costs.
In addition to resolving the long-standing open problem of complexity, we provided a generic framework to solve geometric (penalty) cycle cover and tours problems with turn costs.
%For grid graphs, this framework can engineered such that instances with over \num{300000} pixel can be computed as we show in a separate work~\cite{ALENEX19}.
%were able to demonstrate that LP/IP-based 
%solution methods do not only provide worst-case guarantees, but also allow practical performance evaluation when combined
%with appropriate algorithm engineering. We believe that this is an important and interesting perspective for other
%geometric optimization problems.

As described in~\cite{drone_vid}, the underlying problem is also of practical
relevance. As it turns out, our approach does not only yield polynomial-time
approximation algorithms; enhanced by an array of algorithm engineering techniques,
they can be employed for actually computing optimal and near-optimal solutions for instances of considerable size in grid graphs. 
%\todo[inline]{The solutions are better than Arkin et al.}
Further details on these algorithm engineering aspects will be provided in our
forthcoming paper~\cite{ALENEX19}.
%\todo[inline]{This conclusion needs some revision.}

%\subparagraph*{Acknowledgements.}
%
%I want to thank \dots

%%
%% Bibliography
%%
%\newpage
\bibliography{biblio}

\newpage
\appendix

\end{document}